\documentclass[11pt,twoside]{article}
\newcommand{\ifims}[2]{#1} 
\newcommand{\ifAMS}[2]{#1}   
\newcommand{\ifau}[3]{#1}  

\newcommand{\ifbook}[2]{#1}   

\ifbook{
    \newcommand{\Chapter}[1]{\section{#1}}
    \newcommand{\Section}[1]{\subsection{#1}}
    \newcommand{\Subsection}[1]{\subsubsection{#1}}

  }
  {
    \newcommand{\Chapter}[1]{\chapter{#1}}
    \newcommand{\Section}[1]{\section{#1}}
    \newcommand{\Subsection}[1]{\subsection{#1}}

  }

\def\thetitle{Penalized maximum likelihood estimation and effective dimension}
\def\thanksa
{The author is partially supported by
Laboratory for Structural Methods of Data Analysis in Predictive Modeling, MIPT, 
RF government grant, ag. 11.G34.31.0073.
Financial support by the German Research Foundation (DFG) through the Collaborative 
Research Center 649 ``Economic Risk'' is gratefully acknowledged} 
\def\theruntitle{Penalized MLE and effective dimension}

\def\theabstract{
This paper extends some prominent statistical results including \emph{Fisher Theorem and 
Wilks phenomenon} to the penalized maximum likelihood estimation with a quadratic 
penalization. 
It appears that sharp expansions for the penalized MLE \( \tilde{\thetav}_{\GP} \) and for the penalized maximum likelihood can be obtained without involving any asymptotic arguments,
the results only rely on smoothness and regularity properties of the of the considered log-likelihood function. 
%
The error of estimation is specified in terms of the effective dimension \( \dimG \)
of the parameter set which can be much smaller than the true parameter dimension and even 
allows an infinite dimensional functional parameter. 
In the i.i.d. case, the Fisher expansion for the 
penalized MLE can be established under the constraint ``\( \dimG^{2}/\nsize \) is 
small'' while the remainder in the Wilks result is of order 
\( \dimG^{3}/\nsize \). 
}

\def\kwdp{62F10}
\def\kwds{62J12,62F25,62H12}

\def\thekeywords{penalty, Wilks and Fisher expansions}

\def\authora{Vladimir Spokoiny}
\def\runauthora{spokoiny, v.}
\def\addressa{
    Weierstrass-Institute and Humboldt University Berlin, \\ Moscow Institute of
    Physics and Technology,
    \\
    Mohrenstr. 39, 10117 Berlin, Germany,    \\
    }
\def\emaila{spokoiny@wias-berlin.de}
\def\affiliationa{Weierstrass-Institute and Humboldt University Berlin}

\renewcommand{\(}{$\,}
\renewcommand{\)}{\,$}

\def\nquad{\hspace{-1cm}}
\def\eqdef{\stackrel{\operatorname{def}}{=}}

\def\tow{\stackrel{w}{\longrightarrow}}
\def\toP{\stackrel{\P}{\longrightarrow}}

\def\ND{\mathcal{N}}

\newcommand{\cc}[1]{\mathscr{#1}}
\newcommand{\bb}[1]{\boldsymbol{#1}}

\renewcommand{\bar}[1]{\overline{#1}}

\renewcommand{\tilde}[1]{\widetilde{#1}}

\renewcommand{\Gamma}{\varGamma}
\renewcommand{\Pi}{\varPi}
\renewcommand{\Sigma}{\varSigma}
\renewcommand{\Delta}{\varDelta}
\renewcommand{\Lambda}{\varLambda}
\renewcommand{\Psi}{\varPsi}
\renewcommand{\Phi}{\varPhi}
\renewcommand{\Theta}{\varTheta}
\renewcommand{\Omega}{\varOmega}
\renewcommand{\Xi}{\varXi}
\renewcommand{\Upsilon}{\varUpsilon}

\def\Var{\operatorname{Var}}

\def\argmax{\operatornamewithlimits{argmax}}

\def\tr{\operatorname{tr}}

\def\R{I\!\!R}
\def\E{I\!\!E}
\def\P{I\!\!P}

\def\kappa{\varkappa}

\def\T{\top}
\def\diag{\operatorname{diag}}
\def\diam{\operatorname{diam}}

\def\uv{\bb{u}}

\def\Uv{\bb{U}}

\def\Yv{\bb{Y}}

\def\gammav{\bb{\gamma}}

\def\varepsilonv{\bb{\varepsilon}}

\def\gammav{\bb{\gamma}}

\def\xiv{\bb{\xi}}

\def\Psiv{\bb{\Psi}}
\def\CONST{\mathtt{C}}

\newcommand{\tobedone}[1]{\par\textbf{\color{red}To be done:} {\color{magenta}#1}}

\usepackage{color}

\definecolor{blue(pigment)}{rgb}{0.2, 0.2, 0.6}
\definecolor{ultramarine}{rgb}{0.07, 0.04, 0.56}
\definecolor{darkspringgreen}{rgb}{0.09, 0.45, 0.27}
\definecolor{hookersgreen}{rgb}{0.0, 0.44, 0.0}
\definecolor{plum(traditional)}{rgb}{0.56, 0.27, 0.52}
\definecolor{purple(html/css)}{rgb}{0.5, 0.0, 0.5}
\definecolor{magenta(dye)}{rgb}{0.79, 0.08, 0.48}

\def\block{\operatorname{block}}

\def\diam{\operatorname{diam}}

\def\oper{\operatorname{op}}

\def\vol{\operatorname{vol}}

\def\CONST{\mathtt{C} \hspace{0.1em}}

\def\nsize{{n}}

\def\ex{\mathrm{e}}

\def\ND{\cc{N}}
\def\gaussv{\bb{\gauss}}
\def\gauss{\gamma}

\def\Id{I\!\!\!I}
\def\Ind{\operatorname{1}\hspace{-4.3pt}\operatorname{I}}


\def\alp{\alpha}



%


\def\aexpzeta{a_{\expzeta}}
\def\aGLMlink{a_{\GLMlink}}


\def\AA{A}

\def\AssId{\mathcal{I}}

\def\assId{\iota}

\def\AnGP{A_{\nul,\GP}}



\def\bias{b}

\def\biasGP{\bias_{\GP}}

%


\def\BB{I\!\!B}
\def\B{\cc{B}}

\def\BU{\mathcal{B}}

\def\BB{B}

\def\BBGP{\BB_{\GP}}



\def\cdim{\mathfrak{c}}

\def\cdimb{\cdim_{1}}

\def\cdimg{\cdim_{2}}

%


%

%

%

%

%

%

%

%

%

%




%

%

\def\Covm{\Sigma}			
\def\Covm{\mathbb{V}}
\def\covm{\sigma}

\def\CS{\cc{E}}
%



\def\DP{D}
\def\DPc{\DP}

\def\DPcc{\DP_{\nul}}
\def\DPnGP{\DP_{1,\GP}}
\def\DPnGPr{\DRr_{1,\GP}}

\def\DPGP{\DP_{\GP}}

%



%


\def\dist{d}

\def\dimp{p}

\def\dimA{\mathtt{p}}
\def\dimB{\dimA}

\def\dimAg{\dimA^{*}}
\def\dime{\dimA_{e}}
\def\dimG{\dimA_{\GP}}

\def\dimh{p_{1}}
\def\dimq{q}

\def\dimn{\dimp_{\nsize}}

\def\dPsi{\td_{\Psi}}





\def\Ellips{\cc{E}}

\def\Excgr{\diamondsuit}

\def\ExcGP{\err_{\GP}}

\def\expzeta{\mathtt{s}}

\def\err{\diamondsuit}
\def\errm{\err_{\rdm}}				
\def\errb{\err_{\rdb}}				

\def\eps{\epsilon}			
\def\eps{\varepsilon}



\def\entrl{\mathbb{Q}}

\def\entrlq{\entrl_{1}}		
\def\entrlg{\entrl_{2}}

\def\entrg{\mathbb{G}}
\def\entrgq{\entrg_{1}}		
\def\entrgg{\entrg_{2}}

\def\elli{\bar{\ell}}


\def\fis{\mathfrak{a}}

\def\fisGP{\mathtt{w}_{\GP}}		
\def\fisGP{\fis_{\GP}}




\def\gm{\mathtt{g}}
\def\gmc{\gm_{c}}
\def\gmb{\gm}
\def\gmbm{\gmb_{1}}
\def\gmd{\gm_{0}}
\def\gmi{\mathtt{b}}
\def\gmiid{\gm_{1}}
\def\gmiGP{\gmi_{\GP}}

\def\gp{g}


\def\GLMlink{g}
\def \GLMLINK{A}

\def\GP{G}

\def\GPn{\GP_{1}}

\def\GQF{Q}

\def\GVS{S}


\def\HG{H}

\def\HGLsum{L}
\def\hg{h}


\def\IF{\Bbb{F}}
\def\IFGP{\IF_{\GP}}

\def\IFon{\mathsf{H}}

\def\IFonec{\IFon_{0}}



%

%

\def\kullb{\cc{K}} 


\def\kb{k^{*}}


\def\LT{L}
\def\LGP{\LT_{\GP}}

\def\La{\mathbb{L}}
\def\Lab{\La_{\rdb}}
\def\Lam{\La_{\rdm}}

\def\LL{\cc{L}}


%


%


\def\lambdam{\gm_{1}}
\def\lambdaB{{\lambda}^{*}}		
\def\lambdaB{\lambda_{\BB}}		
\def\lambdaB{\supA}

\def\lambdaGP{\lambda_{\GP}}



%




%

%
\def\muc{\mu_{c}}

\def\mes{\pi}
\def\mesd{\mes^{\circ}}

\def\MM{\cc{M}}

%

%

%


%


\def\nablan{\nabla_{\nul}}
\def\nablaGLM{S}

\def\normc{\delta}


%

%
\def\nunu{\nu_{0}}

\def\nul{\mathrm{o}}

\def\NN{\mathbb{N}}




\def\penr{\operatorname{pen}}

\def\Pdom{\mu_{0}}
\def\PDOM{\bb{\mu}_{0}}

\def\Proj{\Pi}

\def\partition{\cc{F}}

%




\def\QQ{\mathbb{H}}
\def\QQg{\QQ_{2}}
\def\QQq{\QQ_{1}}

\def\QL{W}




\def\reps{\epsilon}

\def\rdl{\epsilon}
\def\rd{\bb{\rdl}}
\def\rddelta{\delta}
\def\rdomega{\varrho}

\def\rddeltab{\rddelta^{*}}
\def\rdomegaGP{\rdomega_{\GP}}
\def\rddeltaGP{\rddelta_{\GP}}
\def\rdb{\rd}
\def\rdm{\underline{\rdb}}
\def\rderr{\chi}

\def\rhor{\omega}

\def\rhorb{\rhor^{*}}

\def\riskt{\cc{R}}

\def\risktGP{\riskt_{\GP}}

\def\rr{\mathtt{r}}
\def\rrb{\rr^{*}}

\def\rrf{\mathfrak{r}}

\def\rA{\rupd}

\def\rups{\rr_{0}}

\def\rupd{\rr_{\circ}}

\def\rupsGP{\rr_{\GP}}


%

%

\def\SPiS{\Upsilon}

\def\spread{\Delta}

\def\score{\nabla}
\def\scorer{\breve{\nabla}}
\def\scoren{\score_{\nul}}

\def\supA{\lambda}
\def\supAB{\supA^{*}}

\def\smooths{\mathbb{S}}



\def\thetav{\bb{\theta}}
\def\thetavs{\thetav^{*}}
\def\thetavc{\thetav'}
\def\thetavd{\thetav^{\circ}}

\def\thetavsGP{\thetavs_{\GP}}

\def\thetavn{\thetav_{\nul}}
\def\thetavsn{\thetavs_{\nul}}



%
\def\tilden#1{\tilde{#1}_{\nul}}

\def\testst{T}

\def\td{\delta}

\def\Thetas{\Theta_{0}}
\def\ThetasGP{\Theta_{0,\GP}}

\def\Thetan{\Theta_{\nul}}

\def\TGP{\testst_{\GP}}



%

%

\def\upsi{\upsilon}
\def\ups{\bb{\upsilon}}

\def\upsc{\ups^{\prime}}
\def\upsd{\ups^{\circ}}
\def\upsdc{\ups^{\sharp}}
\def\upsdu{\ups^{\flat}}
\def\upss{\ups^{*}}

\def\upsv{\bb{\varkappa}}

\def\upsdGP{\ups_{\GP}}




\def\UP{\cc{U}}

\def\UPb{\UP^{*}}

\def\UU{\cc{Y}}

\def\Ups{\varUpsilon}
\def\Upsd{\Ups^{\circ}}
\def\Upss{\Ups_{\circ}}

%


\def\vA{\mathtt{v}}

\def\VL{\mathscr{X}}

\def\VP{V}
\def\VPc{\VP_{0}} 	
\def\VPc{\VP}

\def\VPD{\VP_{2}}

\def\VV{\mathbb{V}}
\def\VVb{\VV^{*}}	
\def\VVc{\VV}

\def\VVb{\VVc}
\def\VVGP{\VV_{\GP}}
\def\vp{\mathbf{v}}	
\def\vpc{\vp_{0}}
\def\vp{\mathsf{v}}

\def\vpB{\vp}

%





\def\xivr{\breve{\xiv}}

\def\xis{\xi^{*}}

\def\xivGP{\xiv_{\GP}}

\def\xivn{\xiv_{\nul}}
\def\xivrGP{\xivr_{\GP}}

\def\xx{\mathtt{x}}
\def\xxc{\xx_{c}}


\def\yy{\mathtt{y}}

\def\yyd{\yy_{0}}		

\def\YY{\cc{Y}}


\def\zq{z}
\def\zqc{\zq_{c}}

\def\zz{\mathfrak{z}}

\def\zzQ{\zz_{0}}

\def\zzQ{\zz_{\QQ}}

%



%

\usepackage{amsmath,amssymb,amsthm}
\usepackage{natbib}
\usepackage{epsfig,graphicx}
\usepackage{comment}
\usepackage{color}
\usepackage{srcltx}
\usepackage[mathscr]{eucal}
\usepackage[math]{easyeqn}
\usepackage{etoolbox}
\usepackage{hyperref}
\hypersetup{
            colorlinks,
            linkcolor=hookersgreen,
            linktoc=blue,
            citecolor=ultramarine,
            urlcolor=black,
            filecolor=black
            }
\usepackage{nameref}

\ifims{
\textheight=23cm
\textwidth=14.8cm
\topmargin=0pt
\oddsidemargin=1.0cm
\evensidemargin=1.0cm
\linespread{1.3}
\renewenvironment{abstract}
    {\centerline{\textbf{Abstract}}\bigskip
      \begin{center}
       \begin{minipage}{11cm}
        \begin{small}
    }
    {   \end{small}
       \end{minipage}
      \end{center}
     \bigskip
    }

}{ 
}

\numberwithin{equation}{section}
\numberwithin{figure}{section}
\newcounter{example}[section]
\numberwithin{example}{section}
\newcounter{remark}[section]
\numberwithin{remark}{section}
\newtheorem{theorem}{Theorem}[section]
\newtheorem{proposition}[theorem]{Proposition}
\newtheorem{lemma}[theorem]{Lemma}
\newtheorem{corollary}[theorem]{Corollary}

\newtheorem{exmp}[example]{Example}
\newtheorem{rmrk}[remark]{Remark}
\newenvironment{example}{\begin{exmp}\rm}{\end{exmp}}
\newenvironment{remark}{\begin{rmrk}\rm}{\end{rmrk}}

\bibliographystyle{apalike}

\begin{document}
\thispagestyle{empty}
\ifims{
\title{\thetitle}
\ifau{ 
  \author{
    \authora
    \ifdef{\thanksa}{\thanks{\thanksa}}{}
    \\[5.pt]
    \addressa \\
    \texttt{ \emaila}
  }
}
{  
  \author{
    \authora
    \ifdef{\thanksa}{\thanks{\thanksa}}{}
    \\[5.pt]
    \addressa \\
    \texttt{ \emaila}
    \and
    \authorb
    \ifdef{\thanksb}{\thanks{\thanksb}}{}
    \\[5.pt]
    \addressb \\
    \texttt{ \emailb}
  }
}
{   
  \author{
    \authora
    \ifdef{\thanksa}{\thanks{\thanksa}}{}
    \\[5.pt]
    \addressa \\
    \texttt{ \emaila}
    \and
    \authorb
    \ifdef{\thanksb}{\thanks{\thanksb}}{}
    \\[5.pt]
    \addressb \\
    \texttt{ \emailb}
    \and
    \authorc
    \ifdef{\thanksc}{\thanks{\thanksc}}{}
    \\[5.pt]
    \addressc \\
    \texttt{ \emailc}
  }
}

\maketitle
\pagestyle{myheadings}
\markboth
 {\hfill \textsc{ \small \theruntitle} \hfill}
 {\hfill
 \textsc{ \small
 \ifau{\runauthora}
      {\runauthora and \runauthorb}
      {\runauthora, \runauthorb, and \runauthorc}
 }
 \hfill}
\begin{abstract}
\theabstract
\end{abstract}

\ifAMS
    {\par\noindent\emph{AMS 2000 Subject Classification:} Primary \kwdp. Secondary \kwds}
    {\par\noindent\emph{JEL codes}: \kwdp}

\par\noindent\emph{Keywords}: \thekeywords
} 
{ 
\begin{frontmatter}
\title{\thetitle}


\runtitle{\theruntitle}

\ifau{ 
\begin{aug}
    \author{\authora\ead[label=e1]{\emaila}}
    \address{\addressa \\
     \printead{e1}}
\end{aug}

 \runauthor{\runauthora}
\affiliation{\affiliationa} }
{ 
\begin{aug}
    \author{\authora\ead[label=e1]{\emaila}\thanksref{t21}}
    \and
    \author{\authorb\ead[label=e2]{\emailb}\thanksref{t22}}
    
    \address{\addressa \\
     \printead{e1}}
    \address{\addressb \\
     \printead{e2}}
    \thankstext{t21}{\thanksa}
    \thankstext{t22}{\thanksb}
    \affiliation{\affiliationa, \affiliationb} 
    \runauthor{\runauthora and \runauthorb}
\end{aug}
} 
{ 
\begin{aug}
    \author{\authora\ead[label=e1]{\emaila}\thanksref{t21}}
    \and
    \author{\authorb\ead[label=e2]{\emailb}\thanksref{t22}}
    \and
    \author{\authorc\ead[label=e3]{\emailc}\thanksref{t23}}
    
    \address{\addressa \\
     \printead{e1}}
    \address{\addressb \\
     \printead{e2}}
    \address{\addressc \\
     \printead{e3}}
    \thankstext{t21}{\thanksa}
    \thankstext{t22}{\thanksb}
    \thankstext{t23}{\thanksc}
    \affiliation{\affiliationa, \affiliationb, \affiliationc} 
    \runauthor{\runauthora, \runauthorb, and \runauthorc}
\end{aug}}

\begin{abstract}
\theabstract
\end{abstract}

\begin{keyword}[class=AMS]
\kwd[Primary ]{\kwdp}
\kwd[; secondary ]{\kwds}
\end{keyword}

\begin{keyword}
\kwd{\thekeywords}
\end{keyword}

\end{frontmatter}
} 

\tableofcontents


\ifbook{
\Chapter{Introduction}
\label{Swilksint}
}{}
The Fisher and Wilks Theorems 
belong to the short list of most fascinating results in the statistical theory. 
In particular, the Wilks result 
in its simple form claims that the likelihood ratio test statistic is close in 
distribution to the \( \chi^{2}_{\dimp} \) distribution as the sample size increases,
where \( \dimp \) means the parameter dimension.
So, the limiting distribution of this test statistic only depends on the dimension 
of the parameter space whatever the parametric model is. 
This explains why this result is sometimes called the \emph{Wilks phenomenon}.
This paper aims at reconsidering the mentioned results from different viewpoints.
One important issue is that the presented results are 
stated for \emph{finite samples}.
There are only few general finite-sample results in statistical inference; 
see \cite{BoMa2011} and references therein in context of i.i.d. modeling. 
The novel approach from \cite{SP2011} offered a general framework for a \emph{finite 
sample theory}, and the present paper makes a further step in this direction: 
the classical large sample results are extended 
to the finite sample case with \emph{explicit and sharp} error bounds.

Another important point is a possible \emph{model misspecification}.
The classical parametric theory requires the parametric assumption to be exactly fulfilled.
Any violation of the parametric specification may destroy the Fisher and Wilks 
results; cf. \cite{huber1967}. 
This study admits from the very beginning that the parametric specification 
is probably wrong. 
This automatically extends the applicability of the proposed approach. 

The further issue is the use of \emph{penalization} for reducing the \emph{model 
complexity}. 
If the parameter dimension is too large, the classical statistical results become almost 
intractable because the corresponding error is proportional to the dimension of 
parameter space. 
Sieve parametric approach is often used to replace the an infinite dimensional problem with 
a finite dimensional one; see e.g. 
\cite{ShWo1994},
\cite{shen1997},
\cite{sara2000},
\cite{BiMa1998,BBiMa1999},
and references therein.
Some asymptotic results for generalized regression models are available in 
\cite{FaZh2001}. 

Another standard way of reducing the complexity of the model is by introducing some penalty 
in the likelihood function.
In this paper we focus on quadratic-type penalization.
Roughness penalty approach provides a popular example; cf. \cite{greensil1994}.
\cite{PoKo1994} explained how roughness penalty works in context of quantile regression.
Tikhonov regularization and ridge regression are the other examples which are often 
used in linear inverse problems.
It is well known that the use of a penalization in context of an inverse problem 
provides regularization and uncertainty reduction at the same time. 
Our results show that the use of penalization indeed leads to some improvement in the 
obtained error bounds. Namely, one can replace the original parameter dimension 
\( \dimp \) by the so called \emph{effective dimension} \( \dimG \) which can be much 
smaller than \( \dimp \). 
Even the case of a functional parameter \( \thetav \) with \( \dimp = \infty \) can be 
included. 
In this paper the penalty term is supposed to be given in advance. 
In general, a model selection procedure based on a proper choice of penalization is a high 
topic, one of the central in nonparametric statistics. 
We refer to \cite{shen1997}, \cite{BiMa1998}, \cite{vdG2002} for the general models and to 
\cite{BiMa2001,BiMa2007} for Gaussian model selection where one can find an 
extensive overview of the vast literature on this problem.

The final issue is the \emph{critical parameter dimension} which is measured by the
effective dimension \( \dimG \). 
The problem of statistical inference for models with growing parameter dimension is quite 
involved.
There are some specific issues even if a simple linear or exponential model 
is considered, the results from \cite{Po1984,Po1985} requires 
``\( \dimp^{2}/n \) small'' for asymptotic normality of the MLE.
Depending on the considered problem and the model at hand, the conditions on the 
critical parameter dimension \( \dimp \) may differ. 
For instance, \cite{Po1988} obtained the Fisher and Wilks results for a generalized linear model
under \( \dimp^{3/2}/n \to 0 \), \cite{Mammen1996} established similar results for high-dimensional 
linear models.
A general Wilks result can be stated under the condition 
that \( \dimp^{3}/n \) is small; see e.g. \cite{Cherno2009}.
Below we show that
the conditions on the critical dimension in penalized ML estimation can be given 
in terms of the effective dimension \( \dimG \) rather than the parameter dimension \( \dimp \).
In particular, in the i.i.d. case, the Fisher expansion can be stated under ``\( \dimG^{2}/n \) small''
and ``\( \dimG^{3}/n \) small'' is sufficient for the Wilks result.

\medskip

First we specify our set-up.
Let \( \Yv \) denote the observed data and \( \P \) mean their distribution.
A general parametric assumption (PA) means that 
\( \P \) belongs to \( \dimp \)-dimensional family 
\( (\P_{\thetav}, \thetav \in \Theta \subseteq \R^{\dimp}) \) dominated by 
a measure \( \PDOM \).
This family yields the log-likelihood function 
\( L(\thetav) = L(\Yv,\thetav) \eqdef \log \frac{d\P_{\thetav}}{d\PDOM}(\Yv) \).
The PA can be misspecified, so, in general, \( L(\thetav) \) is a 
\emph{quasi log-likelihood}. 
The classical likelihood  principle suggests to estimate  \( \thetav \) by 
maximizing the function \( L(\thetav) \):
\begin{EQA}[c]
    \tilde{\thetav}
    \eqdef
    \argmax_{\thetav \in \Theta} L(\thetav) .
\label{tthetamkGP} 
\end{EQA}
If \( \P \not\in \bigl( \P_{\thetav} \bigr) \), then 
the quasi MLE estimate \( \tilde{\thetav} \) from \eqref{tthetamkGP} is 
still meaningful and it can be viewed as an estimate of the value \( \thetavs \) 
defined by maximizing the expected value of \( L(\thetav) \): 
\begin{EQA}[c]
    \thetavs
    \eqdef
    \argmax_{\thetav \in \Theta} \E L(\thetav)
\label{thetavsdGP}
\end{EQA}
which is the true value in the parametric situation and can be viewed as the 
parameter of the best parametric fit in the general case. 

%
The classical \emph{Fisher Theorem} claims the expansion for the MLE 
\( \tilde{\thetav} \):
\begin{EQA}[c]
    \DPc \bigl( \tilde{\thetav} - \thetavs \bigr) - \xiv
    \toP
    0 ,
\label{DPcttxiv}
\end{EQA}    
where \( \DPc^{2} = - \nabla^{2} \E L(\thetavs) \) and 
\( \xiv \eqdef \DPc^{-1} \nabla L(\thetavs) \).
Under the correct model specification, \( \DPc^{2} \) is the total Fisher information 
matrix and the vector \( \xiv \) is centered and standardized. 
So, it is asymptotically standard normal under general CLT conditions. 

It is well known 
that many important properties of the quasi MLE 
\( \tilde{\thetav} \) like concentration or coverage probability can be described 
in terms of the \emph{excess} or \emph{quasi maximum likelihood} 
\( L(\tilde{\thetav},\thetavs) \eqdef
    L(\tilde{\thetav}) - L(\thetavs)
    =
    \max_{\thetav \in \Theta} L(\thetav) - L(\thetavs)
\), which is the difference between the maximum of the process \( L(\thetav) \) and 
its value at the ``true'' point \( \thetavs \). 
The \emph{Wilks phenomenon} claims that the distribution of the twice excess 
\( 2 L(\tilde{\thetav},\thetavs) \) can be approximated by 
\( \| \xiv \|^{2} \) which is asymptotically \( \chi^{2}_{\dimp} \), 
where \( \dimp \) is the dimension of the parameter space:
\begin{EQA}[c]
    2 L(\tilde{\thetav},\thetavs) - \| \xiv \|^{2}
    \toP 
    0,
    \qquad 
    \| \xiv \|^{2} 
    \tow
    \chi^{2}_{\dimp} \, .
\label{Wilksintr}
\end{EQA}    
This fact is very attractive and yields asymptotic confidence and concentration sets 
as well as the limiting critical values for the likelihood ratio tests.
However, practical applications of all mentioned results are limited: 
they require true parametric distribution, large samples and a fixed parameter dimension. 

Modern applications stimulate a further extension of the classical theory beyond the 
classical parametric assumptions.
\cite{SP2011} offers a general approach which appears to be very useful for such an 
extension.
The whole approach is based on the following local bracketing result:
\begin{EQA}[c]
    \Lam(\thetav,\thetavs) - \errm
    \le 
    L(\thetav) - L(\thetavs)
    \le 
    \Lab(\thetav,\thetavs) + \errb,
    \qquad 
    \thetav \in \Thetas .
\label{LLLmbinGP}
\end{EQA}
Here \( \Lab(\thetav,\thetavs) \) and \( \Lam(\thetav,\thetavs) \) are quadratic in 
\( \thetav - \thetavs \) expressions and \( \Thetas \) is a local vicinity of the central point 
\( \thetavs \).
This result can be viewed as an extension of the famous Le Cam 
\emph{local asymptotic normality} (LAN) condition. 
The LAN condition considers just one quadratic process for approximating the 
log-likelihood \( L(\thetav) \).
The use of bracketing with two different quadratic expressions allows one to keep 
control of the error terms \( \errb, \errm \) even for relatively large neighborhoods 
\( \Thetas \) of \( \thetavs \) while the LAN approach is essentially restricted to a root-n 
vicinity of \( \thetavs \).
It also allows to incorporate a large parameter dimension and a model 
misspecification.
However, the approach from \cite{SP2011} has natural limitations: 
the parameter dimension \( \dimp \) cannot be too large. 
For instance, in the i.i.d. case, the error terms 
\( \errb \) and \( \errm \) are of order \( \sqrt{\dimp^{3}/\nsize} \) 
which destroys the Wilks result if \( \dimp > \nsize^{1/3} \).

A standard way of overcoming this difficulty is to impose a kind of smoothness
assumption on the unknown parameter value \( \thetavs \).
Here we discuss one general way to deal with such smoothness assumptions using
a quadratic penalization. 
Section~\ref{Chgrough} offers a new approach to studying the properties of the penalized MLE
which is based on a linear approximation of the gradient of the log-likelihood process.
Compared to the bracketing approach \eqref{LLLmbinGP}, it allows to establish a Fisher type expansion for the penalized MLE under weaker conditions on the critical dimension of the problem. 
Another important novelty of the approach is the systematic use of the \emph{effective dimension} 
\( \dimG \) in place of the original dimension \( \dimp \) of the parameter space. 
Usually \( \dimG \) is much smaller than \( \dimp \).
It is even possible to treat the case of a functional parameter if the effective 
dimension of the parameter set remains finite. 
Our main results include the Fisher and Wilks expansions for the penalized MLE.
In the important special case of an i.i.d. model, 
the error term in the Wilks expansion is small if \( \dimG^{3}/\nsize \) is small, 
while the Fisher expansion requires \( \dimG^{2}/\nsize \) small.

Also we discuss an implication of these results to the bias-variance decomposition of 
the squared risk of the penalized MLE. 
In all our results, the error terms only depend on the effective dimension 
\( \dimG \). 

%

The \ifbook{paper}{chapter} 
is organized as follows.
Section~\ref{Chgrough} states the analog of 
Fisher and Wilks results for the penalized MLE procedure.
Section~\ref{SproofsWilks} collects the conditions and proofs of the main results.
Section~\ref{Chgempir} presents some results from the empirical process theory which are 
used in our proofs.

\ifbook{\section{Fisher and Wilks Theorems under quadratic penalization}}
	{\Section{Fisher and Wilks Theorems under quadratic penalization}}
\label{Chgrough}
\label{Srough}
Let \( \penr(\thetav) \) be a penalty function on \( \Theta \).
A big value of \( \penr(\thetav) \) corresponds to a large degree of roughness or
a small amount of smoothness of \( \thetav \).
The underlying assumption on the model is that the true value \( \thetavs \) is smooth
in the sense that \( \penr(\thetavs) \) is relatively small.
A penalized (quasi) MLE approach leads to maximizing the penalized log-likelihood:
\begin{EQA}[c]
    \tilde{\thetav}
    =
    \argmax_{\thetav \in \Theta} \bigl\{ L(\thetav) - \penr(\thetav) \bigr\}  .
\label{pMLEro}
\end{EQA}
%
Below we discuss an important special case of a quadratic penalty
\( \penr(\thetav) = \| \GP \thetav \|^{2}/2 \) for a given symmetric matrix \( \GP \);
see e.g. \cite{greensil1994} or \cite{PoKo1994} for particular examples.
Denote
\begin{EQA}
    \LGP(\thetav)
    & \eqdef &
    L(\thetav) - \| \GP \thetav \|^{2} / 2,
    \\
    \tilde{\thetav}_{\GP}
    & \eqdef &
    \argmax_{\thetav \in \Theta} \LGP(\thetav) .
\label{LGPro}
\end{EQA}
The use of a penalty changes the target of estimation which is now 
defined as
\begin{EQA}
    \thetavsGP
    & \eqdef &
    \argmax_{\thetav \in \Theta} \E \LGP(\thetav) .
\label{LsGPro}
\end{EQA}
So, introducing a penalty leads to some estimation bias: the new target
\( \thetavsGP \) may be different from \( \thetavs \).
At the same time, similarly to linear modeling, the use of penalization reduces the 
variability of the estimate \( \tilde{\thetav}_{\GP} \) and improves its 
concentration properties. 
An interesting question is the total impact and a possible gain of using the 
penalized procedure. 
A preliminary answer is that the penalty term \( \| \GP \thetavs \|^{2} \) at the 
true point should 
not be too large relative to the squared error of estimation for the penalized model.
This rule is known under the name ``bias-variance trade-off''.

Another important message of this study is that the use of penalization allows to reduce 
the parameter dimension to the \emph{effective dimension} which can be viewed as the 
entropy of the penalized parameter space. 
The resulting confidence and concentration sets depend on the effective dimension rather 
than on the real parameter dimension and they can be much more narrow than in the 
non-penalized case. 

The principle steps of the study are as follows.
The \emph{concentration} step claims that the penalized MLE 
\( \tilde{\thetav}_{\GP} \) is concentrated in a local vicinity \( \ThetasGP(\rupsGP) \) 
of the point \( \thetavsGP \).  
It is based on the upper function method which 
bounds the penalized log-likelihood \( \LGP(\thetav) \) from above by a deterministic function.
Theorem~\ref{TLDGP} states that \( \tilde{\thetav}_{\GP} \) belongs to the local set 
\( \ThetasGP(\rupsGP) \) with a dominating probability, 
and this local set can be much smaller than the similar set for the 
non-penalized results.
As the next step, \cite{SP2011} applied the \emph{bracketing} approach to bound from 
above and from below the log-likelihood process \( L(\thetav) \) by two quadratic in 
\( \thetav - \thetavs \) expressions. 
Here the bracketing step is changed essentially by using a local linear approximation of 
the vector gradient process \( \nabla L(\thetav) \). 
This helps to get a sharper bound on the error of approximation and improve the quality 
of the Fisher expansion. 
Similarly to \cite{SP2011}, the obtained results are stated for finite samples and do not involve 
any asymptotic arguments.
An advantage of the proposed approach is that it combines an accurate local approximation with 
rather rough large deviation arguments and allows one to obtain usual asymptotic statements 
including asymptotic normality of the penalized MLE.

As an important special case, Section~\ref{Siidro} considers the i.i.d. model 
and discusses the dimensional asymptotic. 
If \( \dimG^{2} = o(\nsize) \), then the Fisher expansion is meaningful.
The Wilks expansion requires \( \dimG^{3} = o(\nsize) \).

\Section{Effective dimension}
Let 
\( \VPc^{2} \) be the matrix shown in condition \nameref{ED0Gref} 
in Section~\ref{ScondroGP}.
Typically \( \VPc^{2} = \Var\bigl\{ \nabla L(\thetavsGP) \bigr\} \) and 
this matrix measures the local variability of the process \( \LGP(\cdot) \).
Let also \( \DPGP^{2} \) be a penalized information matrix defined as 
\begin{EQA}
	\DPGP^{2} 
	&=& 
	- \nabla^{2} \E \LGP(\thetavsGP) 
	=
	\DPc^{2} + \GP^{2}
\label{DPGPdef}
\end{EQA}
with \( \DPc^{2} = - \nabla^{2} \E L(\thetavsGP) \).
One can redefine \( \DPc^{2} = - \nabla^{2} \E L(\thetavs) \) 
under condition \nameref{LL0Gref} below and the so called small modeling bias condition; 
see Section~\ref{Sqrisk}. 
The \emph{effective dimension} \( \dimG \) is defined as the trace of the matrix
\( \BBGP \eqdef \DPGP^{-1} \VPc^{2} \DPGP^{-1} \):
\begin{EQA}
	\dimG
	& \eqdef &
	\tr \bigl( \BBGP \bigr)  .
\label{dimedef}
\end{EQA}
Below we show that the use of penalization enables us to replace the original dimension 
\( \dimp \) in our risk bounds with the effective dimension
\( \dimG \) which can be much smaller than \( \dimp \) depending on relations between
the matrices \( \DPc^{2} \), \( \VPc^{2} \), and \( \GP^{2} \).

In our results the value \( \dimG \) will be used via another quantity  
\( \zq(\BBGP,\xx) \) which also depends on a fixed constant \( \xx \) and 
for moderate values of \( \xx \) can be defined as 
\begin{EQA}
	\zq(\BBGP,\xx)
	&=&
	\sqrt{\dimG} + \sqrt{2 \xx \lambdaGP} ,
\label{zzxxdefro}
\end{EQA}
where \( \lambdaGP \eqdef \lambda_{\max}\bigl( \BBGP \bigr) \) is the largest 
eigenvalue of \( \BBGP \); see \eqref{zzxxppdBlroB} for a precise definition.

\bigskip

Now we present
a couple of typical examples of using the quadratic penalty:
blockwise penalization and estimation under a Sobolev smoothness constraint.
For simplicity of presentation we assume that \( \VPc^{2} = \DPc^{2} = n \Id_{\dimp} \),
while \( \GP^{2} \) is diagonal with non-decreasing eigenvalues \( \gp_{j}^{2} \).
Then \( \DPGP^{2} = \DPc^{2} + \GP^{2} 
= \diag\bigl\{ n + \gp_{1}^{2}, \ldots, n + \gp_{\dimp}^{2} \bigr\} \).
It holds that 
\( \BBGP = \diag\bigl\{ 
(1 + n^{-1} \gp_{1}^{2})^{-1}, \ldots, (1 + n^{-1} \gp_{\dimp}^{2})^{-1} \bigr\} \), and 
we apply \eqref{dimedef} for computing the effective dimension \( \dimG \).

\paragraph{Block penalization}
Consider the case when \( \GP \) is of a simple two-block structure: 
\( \GP = \diag\{ 0,\GPn \} \).
Many blocks can be considered in the similar way.
The first block of dimension \( \dimp_{0} \) corresponds to the unconstrained part
of the parameter vector while
the second block of dimension \( \dimh \) corresponds to the low energy component.
An interesting question is the minimal penalization \( \GPn \) making the impact of
the low energy part inessential.
Assume for simplicity that 
\( \GPn = \gp \Id_{\dimh} \).
Then 
\begin{EQA}[c]
    \dimG
    = 
    \tr \BBGP 
    = 
    \dimp_{0} + \dimh / \bigl( 1 + n^{-1} \gp^{2} \bigr) .
\label{GVSblk}
\end{EQA}
One can see that the impact of the second block \( \GP_{1} \) in the effective dimension
is inessential if \( \gp^{2} / n \gg \dimh/\dimp_{0} \).

\paragraph{Sobolev smoothness constraint}
Consider the case with \( \DPc^{2} = \VPc^{2} = n \Id_{\dimp} \) and
\( \GP^{2} = \diag\{ \gp_{1}^{2},\ldots,\gp_{\dimp}^{2} \} \) with
\( \gp_{j} = L j^{\beta} \) for \( \beta > 1/2 \).
The value \( \beta \) is usually considered as the Sobolev smoothness parameter.
It holds 
\begin{EQA}[c]
    \dimG
    =
    \sum_{j=1}^{\dimp} \frac{1}{1 + L^{2} j^{2\beta} / n} \, .
\label{dimGsobo}
\end{EQA}    
Define also the index \( \dime \) as the largest \( j \) satisfying 
\( L^{2} j^{2\beta} \le n \).
It is straightforward to see that \( \beta > 1/2 \) yields 
\( \dimG \le \CONST(\beta,L) \dime \) 
for some constant \( \CONST(\beta,L) \) depending on \( \beta,L \) only.

\paragraph{Linear inverse problem}
The next example corresponds to the case of a linear inverse problem.
Assume for simplicity of notation the sequence space representation, 
the noise is inhomogeneous  with increasing eigenvalues 
\( \VPc^{2} = \diag\bigl\{ \vp_{1}^{2},\ldots,\vp_{\dimp}^{2} \bigr\} \) and 
the information matrix \( \DPc^{2} \) is proportional to identity, that is,
\( \DPc^{2} = n \Id_{\dimp} \). 
Then the effective dimension is given by the sum
\begin{EQA}[c]
    \dimG
    = 
    \sum_{j=1}^{\dimp} \frac{\vp_{j}^{2}}{n + \gp_{j}^{2}} \, .
\label{dimGinv}
\end{EQA}
To keep the effective dimension small, one has to compensate the increase of the eigenvalues 
\( \vp_{j}^{2} \) by the penalization \( \gp_{j}^{2} \).

\Section{Conditions}
\label{ScondroGP}
This section presents the list of conditions 
which are similar to ones from the 
non-penalized case in \cite{SP2011}. 
However, the use of penalization leads to some change in each condition. 
Most important fact is that the use of penalization helps to state the large deviation result 
for much smaller local neighborhoods than in the non-penalized case.
\cite{SP2011} 
presented the LD result for local sets of the form
\( \Thetas(\rr) = \bigl\{ \thetav: \| \VPc (\thetav - \thetavs) \| \le \rr \bigr\} \)
with a proper \( \rr \asymp \dimp^{1/2} \).
Now we redefine this set by using \( \DPGP^{2} \) in place of \( \VPc^{2} \)
and \( \thetavsGP \) in place of \( \thetavs \):
\begin{EQA}
    \ThetasGP(\rr)
    & \eqdef &
    \bigl\{
        \thetav: \| \DPGP(\thetav - \thetavsGP) \| \le \rr \bigr\} .
\label{ThetaRro}
\end{EQA}
Moreover, the radius \( \rr \) can be selected of order \( \dimG^{1/2} \),
which can be very useful for large or infinite \( \dimp \).
%
%

Our conditions mainly assume some regularity and smoothness of the 
penalized log-likelihood process \( \LGP(\thetav) \).
The first condition states some smoothness properties of the expected log-likelihood 
\( \E \LGP(\thetav) \) as a function 
of \( \thetav \) in a vicinity \( \ThetasGP(\rr) \) of \( \thetavsGP \).
More precisely, it 
effectively means that the expected log-likelihood \( \E L(\thetav) \) is twice continuously differentiable on the local set \( \ThetasGP(\rr) \).

Below each condition is given in penalized and non-penalized form for the sake of comparison.
Already now it is worth saying that the use of penalization helps to relax most of conditions.
%
%
%
Define 
\begin{EQA}[c]
	\IFGP(\thetav)
	\eqdef
	- \nabla^{2} \E \LGP(\thetav)
	=
	- \nabla^{2} \E L(\thetav) + \GP^{2} .
\end{EQA}
Then \( \DPGP^{2} = \IFGP(\thetavsGP) \).
The conditions involve a radius \( \rupsGP \) which separates the local zone and the zone of large 
deviations. 
This value will be made precise in Theorem~\ref{TLDGP}.

Here and below \( \| A \|_{\oper} \) means the operator norm of a matrix \( A \).

\begin{description}
    \item[\label{LL0Gref}\( \bb{(\LL_{0}\GP)} \)]
    \textit{
    For each \( \rr \leq \rupsGP \), 
    there is a constant \( \rddeltaGP(\rr) \leq 1/2 \) such that
    }
\begin{EQA}[c]
\label{LmgfquadELGP}
    \bigl\|
		\DPGP^{-1} \IFGP(\thetav) \DPGP^{-1} - \Id_{\dimp} 
    \bigr\|_{\oper}
    \le
    \rddeltaGP(\rr)  ,
    \qquad
    \thetav \in \ThetasGP(\rr).
\end{EQA}
\end{description}
Under condition \nameref{LL0Gref}, it follows from the second order Taylor
expansion at \( \thetavsGP \):
\begin{EQA}
    \bigl|
        - 2 \E \LGP(\thetav,\thetavsGP)
        - \| \DPGP (\thetav - \thetavsGP) \|^{2}
    \bigr|
    & \le &
    \rddeltaGP(\rr) \| \DPGP (\thetav - \thetavsGP) \|^{2} ,
    \quad
    \thetav \in \ThetasGP(\rr).
\label{EdeltGP}
\end{EQA}
A non-penalized version of \eqref{LmgfquadELGP} claims a similar approximation 
for the matrix function \( \IF(\thetav) = - \nabla^{2} \E L(\thetav) \) in the vicinity 
\( \Thetas(\rups) \) centered at \( \thetavs \) instead of \( \thetavsGP \):
with \( \DPc^{2} \eqdef \IF(\thetavs) \)
\begin{description}
    \item[\label{nLL0ref}\( \bb{(\LL_{0})} \)]
\( \qquad
    \bigl\|
		\DPc^{-1} \IF(\thetav) \DPc^{-1} - \Id_{\dimp} 
    \bigr\|_{\oper}
    \le
    \rddelta(\rups)  ,
    \qquad
    \thetav \in \Thetas(\rr)
\).
\end{description}

As the quadratic penalty \( \| \GP \thetav \|^{2} \) does not change the smoothness properties 
of the expected contrast \( \E \LGP(\thetav) \), the conditions \nameref{LL0Gref} and 
\nameref{nLL0ref} are essentially equivalent provided that the points 
\( \thetavs \) and \( \thetavsGP \) are too far from each others.

\medskip
Now we consider the stochastic component of the log-likelihood process \( \LGP(\thetav) \) which is the same as in the non-penalized case:
\begin{EQA}[c]
	\zeta(\thetav)
	\eqdef
	\LGP(\thetav) - \E \LGP(\thetav)
	=
	L(\thetav) - \E L(\thetav) .
\end{EQA}
We assume that it is twice differentiable and denote by \( \nabla \zeta(\thetav) \) its gradient and by 
\( \nabla^{2} \zeta(\thetav) \) its Hessian matrix. 
The next two conditions are to ensure that the random vector \( \nabla \zeta(\thetavsGP) \) and 
the random processes \( \nabla^{2} \zeta(\thetav) \) are stochastically bounded with exponential moments.
The conditions involve a \( \dimp \times \dimp \)-matrix \( \VPc \) which normalizes the vector 
\( \nabla \zeta(\thetavsGP) \), and a similar matrix \( \VPD \) normalizing 
\( \nabla^{2} \zeta(\thetav) \).

\begin{description}
\item[\label{ED0Gref}\( \bb{(E_{0}\GP)} \)]
    \emph{ There exist a positively semi-definite symmetric matrix \( \VPc^{2} \),
    and constants \( \gmb > 0 \), \( \nunu \ge 1 \) such that
    \( \Var\bigl\{ \nabla \zeta(\thetavsGP) \bigr\} \le \VPc^{2} \) and 
    }
\begin{EQA}[c]
\label{expzetaclocGP}
	\sup_{\gammav \in \R^{\dimp}} 
    \log \E \exp\biggl\{
        \lambda \frac{\gammav^{\T} \nabla \zeta(\thetavsGP)}
                     {\| \VPc \gammav \|}
    \biggr\} \le
    \frac{\nunu^{2} \lambda^{2}}{2}, \qquad 
    |\lambda| \le \gmb .
\end{EQA}

\item[\label{ED2Gref}\( \bb{(E_{2}\GP)} \)]
    \emph{There exist a positively semi-definite symmetric matrix \( \VPD^{2} \), a value \( \rhor > 0 \) 
    and for each \( \rr > 0 \), a constant \( \gm(\rr) > 0 \) such that
    it holds for any \( \thetav \in \ThetasGP(\rr) \):    }
\begin{EQA}[c]
    \sup_{\gammav_{1},\gammav_{2} \in \R^{\dimp} }
    \log \E \exp\biggl\{ 
    	\frac{\lambda}{\rhor} \,\,
        \frac{\gammav_{1}^{\T} \nabla^{2} \zeta(\thetav) \gammav_{2}} 
        	 {\| \VPD \gammav_{1} \| \cdot \| \VPD \gammav_{2} \|}
	\biggr\} 
    \le 
    \frac{\nunu^{2} \lambda^{2}}{2} \, ,
    \qquad 
    |\lambda| \leq \gm(\rr).
\label{gUUgem2}
\end{EQA}
Below we only need that  the constant \( \gm(\rr) \) 
is larger than \( \CONST \, \dimG \) for a fixed constant \( \CONST \).
This allows to reduce the condition to the case with a fixed \( \gm \) which 
does not depend on the distance \( \rr \). 
\end{description}

Their non-penalized versions are almost identical: one has to replace \( \thetavsGP \)
with \( \thetavs \) and \( \ThetasGP(\rr) \) with \( \Thetas(\rr) \).
\begin{description}
	\item[\label{nED0ref}\( \bb{(E_{0})} \)]
\( \qquad
	\sup\limits_{\gammav \in \R^{\dimp}} 
    \log \E \exp\biggl\{
        \lambda \dfrac{\gammav^{\T} \nabla \zeta(\thetavsGP)}
                     {\| \VPc \gammav \|}
    \biggr\} 
    \leq 
    \dfrac{\nunu^{2} \lambda^{2}}{2}, \qquad 
    |\lambda| \le \gmb .
\)

	\item[\label{nED2ref}\( \bb{(E_{2})} \)]
\( \qquad
    \sup\limits_{\gammav_{1},\gammav_{2} \in \R^{\dimp} }
    \log \E \exp\biggl\{ 
    	\dfrac{\lambda}{\rhor} \,\,
        \dfrac{\gammav_{1}^{\T} \nabla^{2} \zeta(\thetav) \gammav_{2}} 
        	 {\| \VPD \gammav_{1} \| \cdot \| \VPD \gammav_{2} \|}
	\biggr\} 
    \leq
    \dfrac{\nunu^{2} \lambda^{2}}{2} \, ,
    \qquad 
    |\lambda| \leq \gm(\rr).
\)
\end{description}
The conditions \nameref{nED0ref} and \nameref{ED0Gref} are very similar while 
\nameref{ED2Gref} is restricted to the  vicinity \( \ThetasGP(\rr) \) which can be much smaller
than \( \Thetas(\rr) \).

\medskip
The \emph{identifiability condition} relates the matrices \( \VPc^{2} \) and \( \VPD^{2} \) 
and to \( \DPGP^{2} \).
\begin{description}
  \item[\label{AssIdGref}\( \bb{(\AssId\GP)} \)] 
      There is a constant 
      \( \fisGP > 0 \) such that 
\begin{EQA}
	\fisGP^{2} \DPGP^{2} \ge \VPc^{2} ,
	& \quad &
	\fisGP^{2} \DPGP^{2} \ge \VPD^{2} . 
\label{lamGPDPVPfis}
\end{EQA}
\end{description}

In the non-penalized case of 
\ifbook{\cite{SP2011}}{Chapter~\ref{Ch_FW}}, this condition reads as
\begin{description}
  \item[\label{nAssIdref}\( \bb{(\AssId)} \)] 
\( \qquad \fis^{2} \DPc^{2} \ge \VPc^{2} \) with 
\( \DPc^{2} = - \nabla^{2} \E L(\thetavs) \). 
\end{description}

\noindent
Therefore, the use of regularization helps to improve the identifiability in the 
regularized problem relative to the non-penalized one as 
\( \DPc^{2} \leq \DPGP^{2} \).

Finally, for \( \rr > \rupsGP \),  we need a global identification property which ensures that the deterministic component 
\( \E \LGP(\thetav,\thetavsGP) \) of the penalized log-likelihood is competitive with the
variation of the stochastic component.

\begin{description}
    \item[\label{LLGref}\( \bb{(\LL\GP)} \)]
    \textit{
    There exists \( \gmiGP(\rr) > 0 \) such that
    \( \rr \, \gmiGP(\rr) \to \infty \) as \( \rr \to \infty \) and}
\begin{EQA}
    \frac{2 \E \LGP(\thetavsGP) - 2 \E \LGP(\thetav)}{\| \VPD (\thetav - \thetavsGP) \|^{2}}
    & \ge &
    \gmiGP(\rr) ,
    \qquad
    \thetav \in \ThetasGP(\rr) .
\label{xxentrttGP}
\end{EQA}    
\end{description}

A non-penalized version reads as follows: for \( \rr \, \gmi(\rr) \to \infty \) 
as \( \rr \to \infty \)
\begin{description}
    \item[\label{nLLref}\( \bb{(\LL)} \)]
\( \qquad    \dfrac{2 \E L(\thetavs) - 2 \E L(\thetav)}{\| \VPD (\thetav - \thetavs) \|^{2}}
    \ge 
    \gmi(\rr) ,
    \qquad
    \thetav \in \Thetas(\rr) 
\).
\end{description}

Obviously \( \E \LGP(\thetav) \leq \E L(\thetav) \) yielding 
\( \gmiGP(\rr) \geq \gmi(\rr) \) in typical situations; 
therefore the \nameref{LLGref} is less restrictive than 
\nameref{nLLref}.

\ifbook{
We  briefly comment on examples for which the conditions can be easily verified.
\cite{SP2011}, Section 5.1, considered in details the i.i.d. case and presented some mild 
sufficient conditions on the parametric family which imply the above general conditions.
Another class of examples is built by
generalized linear models which includes the cases of Gaussian, Poissonian, binary, 
regression and exponential type models among others.
Condition \nameref{ED0Gref} requires some exponential moments of the observations 
(errors). 
Usually one only assumes some finite moments of the normalized increments of the likelihood function; cf. \cite{IH1981}, Chapter~2.
Our conditions \nameref{ED0Gref} and \nameref{ED2Gref} a bit more restrictive but it allows one to obtain some finite sample 
bounds. 
Note that majority of finite samples results are stated under gaussian or sub-gaussian 
stochastic errors. 
The sub-gaussian case is included in \nameref{ED0Gref} and \nameref{ED2Gref}  and it corresponds to 
\( \gm = \infty \) which slightly simplifies the formulation of the results. 
However, our results apply for sub-exponential errors with \( \gm < \infty \) as well. 
%
Condition \nameref{LL0Gref} only requires some regularity of the considered parametric 
family and is not restrictive.
Conditions \nameref{ED2Gref} with \( \gm(\rr) \equiv \gm > 0 \) and \nameref{LLGref} 
with \( \gmi(\rr) \equiv \gmi > 0 \)
are easy to verify if the parameter set 
\( \Theta \) is compact and the sample size \( \nsize \) is sufficiently large.
It suffices to check a usual identifiability condition that the value 
\( \E \LGP(\thetav,\thetavs) \) does not vanish for \( \thetav \neq \thetavs \).

The regression and generalized regression models are included as well; cf. 
\cite{Gh1999,Gh2000} or \cite{Kim2006}.
\cite{SP2011}, Section 5.2, argued that the \nameref{ED2Gref} is automatically fulfilled for 
generalized linear models, while \nameref{ED0Gref} requires that 
regression errors have to fulfill some 
exponential moments condition. 
If this condition is too restrictive and a more stable (robust) estimation procedure is 
desirable, one can apply the LAD-type contrast leading to median regression. 
\cite{SP2011}, Section 5.3, showed for the case of linear median regression that 
all the required conditions are fulfilled automatically if the sample size \( \nsize \) 
exceeds \( \CONST \dimp \) for a fixed constant \( \CONST \).
\cite{SpWe2012} applied this approach for local polynomial quantile regression.
\cite{zaitsev2013properties} applied the approach to the problem of regression with Gaussian 
process where the unknown parameters enter in the likelihood function in a rather complicated 
way. 
We conclude that the imposed conditions are quite general and can be verified
for many classical examples met in the statistical literature.
}{}

\Section{Concentration and a large deviation bound}
This section demonstrates that the use of the penalty term helps to strengthen the
concentration properties of the penalized quasi maximum likelihood estimator (qMLE) 
\( \tilde{\thetav}_{\GP} \).
Namely, we show that \( \tilde{\thetav}_{\GP} \) belongs with a dominating probability to a set
\( \ThetasGP(\rupsGP) \) which can be much smaller than a similar set 
from the non-penalized case; see Remark~\ref{RTLDGP}.
%
%

All our results involve a value \( \xx \).
We say that a generic random set \( \Omega(\xx) \) is of a \emph{dominating probability} if
\(    \P\bigl( \Omega(\xx) \bigr)
    \ge 
    1 - \CONST \ex^{-\xx}  
\)
for a fixed constant \( \CONST \) like 1 or 2.
%
We also use two growing functions \( \zq(\BBGP,\xx) \) and 
\( \zzQ(\xx) \) of the argument \( \xx \).
The functions \( \zq(\BBGP,\xx) \) already mentioned in \eqref{zzxxdefro} and 
\( \zzQ(\xx) \) are given analytically and only depend on the parameters of the model.
The function \( \zq(\BBGP,\xx) \) describes the quantiles of the norm of the 
normalized score vector \( \xivGP \); see \eqref{xivGP} below.
The formal definition is given in \eqref{zzxxppdBlroB}.
The function \( \zzQ(\xx) \) is related to the penalized entropy of the parameter space and it is
given by \eqref{zzxxgfin}.
In typical situations one can use the upper bounds
\( \zq^{2}(\BBGP,\xx) \leq \CONST (\dimG + \xx) \) and \( \zzQ^{2}(\xx) \leq \CONST (\dimG + \xx) \) 
for both functions.

\begin{theorem}
\label{TLDGP}
Suppose \nameref{ED0Gref}, \nameref{ED2Gref}, and
\nameref{AssIdGref}. 
Let \nameref{LLGref} hold with the function \( \gmiGP(\rr) \) satisfying for a fixed \( \rupsGP \)
\begin{EQA}
    \gmiGP(\rr) \, \rr
    & \ge &
    2 \bigl\{ 
    \zq(\BBGP,\xx) + \rdomegaGP(\rr,\xx) \bigr\} \, ,
    \quad
    \rr > \rupsGP,
\label{cgmibrrGP}
\end{EQA}
where \( \zq(\BBGP,\xx) \) is from \eqref{zzxxppdBlroB} and
\begin{EQA}[c]
    \rdomegaGP(\rr,\xx)
    \eqdef
    \nunu \, \fisGP \, \zzQ(\xx + \log(2 \rr/\rupsGP)) \, \rhor 
\label{ExceqrrrhdefGP}
\end{EQA}  
with the function \( \zzQ(\xx) \) given by \eqref{zzxxgfin}.
Then 
\begin{EQA}[c]
    \P\bigl( \tilde{\thetav}_{\GP} \not\in \ThetasGP(\rupsGP) \bigr)
    \le 
    3 \ex^{-\xx} .
\label{PCAxxglrrsGP}
\end{EQA}    
\end{theorem}

\begin{remark}
\label{RTLDGP}
%
This result helps to fix a proper \( \rupsGP \) ensuring \eqref{PCAxxglrrsGP}.
The concentration result applies if the lower bound \eqref{cgmibrrGP}
on the negative expectation of the penalized log-likelihood process holds.
Condition \eqref{cgmibrrGP} can be made more detailed by separating the region 
\( \ThetasGP(\rr) \) of moderate deviations in which the condition \nameref{LL0Gref} 
applies with \( \rddeltaGP(\rr) \) small and the remaining set 
\( \Theta \setminus \ThetasGP(\rr) \).
On the set \( \ThetasGP(\rr) \) one can use \( \gmiGP(\rr) \geq 1 - \rddeltaGP(\rr) \), 
that is, \( \gmiGP(\rr) \, \rr \approx \rr \).
In addition, the remainder \( \rdomegaGP(\rr,\xx) \) in the right hand-side of \eqref{cgmibrrGP} is proportional to \( \rhor \) and this value is typically small.
For instance, in the i.i.d. case it is of order \( \nsize^{-1/2} \).
Therefore, the condition \eqref{cgmibrrGP} together with \nameref{LL0Gref} requires that 
\( \rupsGP \) fulfills 
\( \rupsGP \ge 2 \zq(\BBGP,\xx) = 2 \bigl( \sqrt{\dimG} + \sqrt{2 \xx} \bigr) \). 
In the non-penalized case of \cite{SP2011}, a similar condition reads as
\( \rups^{2} \ge \CONST (\dimp + \xx) \), so the use of penalization helps to improve
the concentration properties of the penalized MLE.
We conclude that the use of penalization leads to weaker conditions and 
to a stronger concentration property.
The only problem is that the corresponding estimate \( \tilde{\thetav}_{\GP} \) 
concentrates around \( \thetavsGP \) instead of \( \thetavs \).
This can yield a bias effect; see Section~\ref{Sqrisk} below.
\end{remark}

\begin{proof}
By definition \( \sup_{\thetav \in \ThetasGP(\rupsGP)} \LGP(\thetav,\thetavsGP) \geq 0 \).
So, it suffices to check that 
\( \LGP(\thetav,\thetavsGP) < 0 \) for all \( \thetav \in \Theta \setminus \ThetasGP(\rupsGP) \).
The proof is based on the following bound:
for each \( \rr \)
\begin{EQA}
	\P \biggl( \sup_{\thetav \in \ThetasGP(\rr)}
		\bigl| 
			\zeta(\thetav,\thetavsGP)   
			- (\thetav - \thetavsGP)^{\T} \nabla \zeta(\thetavsGP)
		\bigr|
		\geq 
		\nunu \, \fisGP \, \zzQ(\xx) \, \rhor \, \rr
	\biggr)
	& \leq &
	\ex^{-\xx} .
\label{PsuptsrrrhorGP}
\end{EQA}
This bound is a special case of the general result from Theorem~\ref{Texpro}.
It implies by Theorem~\ref{TsuprUP} with \( \rho = 1/2 \) on a set of dominating probability at least \( 1 - \ex^{-\xx} \) that for all \( \rr \geq \rupsGP \)
and all \( \thetav \) with \( \| \DPGP (\thetav - \thetavsGP) \| \leq \rr \)
\begin{EQA}
	\bigl| 
		\zeta(\thetav,\thetavsGP) - (\thetav - \thetavsGP)^{\T} \nabla \zeta(\thetavsGP)
	\bigr|
	& \leq &
	\rdomegaGP(\rr,\xx) \, \rr,
\label{zetattsnaGP}
\end{EQA}
where 
\begin{EQA}
	\rdomegaGP(\rr,\xx)
	&=&
	\nunu \, \fisGP \, \zzQ\bigl( \xx + \log(2\rr/\rupsGP) \bigr) \, \rhor \, .
\label{rdomdefGP}
\end{EQA}
The use of 
\( \nabla \E \LGP(\thetavsGP) = 0 \) yields
\begin{EQA}
\label{LGPLaGPrr}
	\sup_{\thetav \in \ThetasGP(\rr)}
		\bigl| 
			\LGP(\thetav,\thetavsGP) - \E \LGP(\thetav,\thetavsGP)  
			- (\thetav - \thetavsGP)^{\T} \nabla \LGP(\thetavsGP)
		\bigr|
	& \leq &
	\rdomegaGP(\rr,\xx) \, \rr .
\end{EQA}
Also the vector 
\( \xivGP = \DPGP^{-1} \nabla \LGP(\thetavsGP) = \DPGP^{-1} \nabla \zeta(\thetavsGP) \) can be bounded with a dominating probability: 
by Theorem~\ref{LLbrevelocro} 
\( \P\bigl( \| \xivGP \| \geq \zq(\BBGP,\xx) \bigr) \leq 2 \ex^{-\xx} \).
We ignore here the negligible term \( \CONST \ex^{-\xxc} \).
The condition \( \| \xivGP \| \leq \zq(\BBGP,\xx) \) implies
for each \( \rr \geq \rupsGP \) 
\begin{EQA}
	&& \nquad
	\sup_{\thetav \in \ThetasGP(\rr)} 
		\bigl| (\thetav - \thetavsGP)^{\T} \nabla \LGP(\thetavsGP) \bigr|
	\\
	& \leq &
	\sup_{\thetav \in \ThetasGP(\rr)} 
		\| \DPGP (\thetav - \thetavsGP) \| \times \| \DPGP^{-1} \nabla \zeta(\thetavsGP) \|
	= 	
	\rr \| \xivGP \| 
	\leq 
	\zq(\BBGP,\xx) \, \rr.
\label{supDLGPsGP}
\end{EQA}
Condition \nameref{LLGref} implies 
\( - 2 \E \LGP(\thetav,\thetavsGP) \geq \rr^{2} \gmiGP(\rr) \) for each \( \thetav \) with 
\( \| \DPGP (\thetav - \thetavsGP) \| = \rr \).
We conclude that the condition  
\begin{EQA}
	\rr \gmiGP(\rr)
	& \geq &
	2 \bigl\{ \zq(\BBGP,\xx) + \rdomegaGP(\rr,\xx) \bigr\},
	\quad \rr > \rupsGP ,
\label{rr2gmirrGP}
\end{EQA}
ensure \( \LGP(\thetav,\thetavsGP) < 0 \) for all \( \thetav \not\in \ThetasGP(\rupsGP) \) with a dominating probability.
\end{proof}

\Section{Wilks and Fisher expansions}
This section collects the main results of the paper.
Let \( \thetavsGP \) be the point of concentration from \eqref{LsGPro} and let
\( \zeta(\thetav) = \LGP(\thetav) - \E \LGP(\thetav) = L(\thetav) - \E L(\thetav) \).
Define a random \( \dimp \)-vector
\begin{EQA}
    \xivGP
    & \eqdef &
    \DPGP^{-1} \nabla \zeta(\thetavsGP)
    =
    \DPGP^{-1} \bigl\{ \nabla L(\thetavsGP) - \GP^{2} \thetavsGP \bigr\} .
\label{xivGP}
\end{EQA}

\begin{theorem}
\label{TconflocroGP}
Suppose that \( \rupsGP \) is selected to ensure \eqref{cgmibrrGP}.
Suppose also that the conditions \nameref{ED0Gref}, \nameref{ED2Gref}, \nameref{AssIdGref} hold.
On a random set 
\( \Omega(\xx) \) of a dominating probability at least \( 1 - 4 \ex^{-\xx} \), it holds 
\begin{EQA}[c]
    \| \DPGP (\tilde{\thetav}_{\GP} - \thetavsGP) - \xivGP \|
	\leq 
    \ExcGP(\xx) ,
\label{ttusmxivFGP}
\end{EQA}
where \( \ExcGP(\xx) \) is given by 
\begin{EQA}[c]
    \ExcGP(\xx)
    \eqdef
    \bigl\{ \rddeltaGP(\rupsGP) 
    + \sqrt{8} \, \nunu \, \fisGP \, \zzQ(\xx) \, \rhor \bigr\}\, \rupsGP
\label{Exc8defGP}
\end{EQA}  
for \( \zzQ(\xx) \) given by \eqref{zzxxgfin}. 
\end{theorem}

The proof of this and the next result is based on a linear expansion of the gradient 
\( \nabla \LGP(\thetav) \) and will be given in Section~\ref{SproofsWilks}.

Now we present a result on the excess 
\( \LGP(\tilde{\thetav}_{\GP},\thetavsGP) 
= \LGP(\tilde{\thetav}_{\GP}) - \LGP(\thetavsGP) \).
The classical Wilks result claims that the twice excess is nearly \( \chi^{2}_{\dimp} \).
Our result describes the quality of its approximation by a quadratic form 
\( \| \xivGP \|^{2} \).

\begin{theorem}
\label{TWilks2rGP}
Suppose that \nameref{LL0Gref}, \nameref{ED0Gref}, and \nameref{ED2Gref} hold.
Suppose also that \( \rupsGP \) is selected to ensure \eqref{cgmibrrGP}.
On a random set 
\( \Omega(\xx) \) of a dominating probability at least \( 1 - 5 \ex^{-\xx} \), it holds with 
\( \ExcGP(\xx) \) from \eqref{Exc8defGP}
\begin{EQA}
\label{WilLLeGP}
    \bigl| 2 \LGP(\tilde{\thetav}_{\GP},\thetavsGP) - \| \xivGP \|^{2} \bigr|
    & \le &
    2 \rupsGP \, \ExcGP(\xx) + \ExcGP^{2}(\xx),
    \\
    \Bigl| 
    	\sqrt{ 2\LGP(\tilde{\thetav}_{\GP},\thetavsGP) } 
		- \| \xivGP \| 
	\Bigr|
    & \le &
    3 \, \ExcGP(\xx).
\label{sqWilLLGP}
\end{EQA}    
\end{theorem}

One can see that the Fisher expansion \eqref{ttusmxivFGP} and the square root Wilks expansion
\eqref{sqWilLLGP} require \( \ExcGP(\xx) \) small, while the standard Wilks expansion 
\eqref{WilLLeGP} is accurate if \( \rupsGP \, \ExcGP(\xx) \) is small. 
This makes some difference if the parameter dimension is large. 
Below we address this question for the important special case of an i.i.d. likelihood.

The classical Fisher and Wilks results include some statements about the limiting 
behavior of the vector \( \xivGP \) and of the quadratic form \( \| \xivGP \|^{2} \).
In the i.i.d. case, one can easily show that the vector \( \xivGP \) is asymptotically standard 
normal as \( \nsize \to \infty \); see Section~\ref{Siidro} below.
However, it is well known that the convergence of \( \| \xivGP \|^{2} \) 
to the \( \chi^{2} \)-distribution is quite slow even in the case of a fixed dimension 
\( \dimp \). 
For finite sample inference, we recommend to combine the approximations \eqref{ttusmxivFGP} to 
\eqref{sqWilLLGP} 
with any resampling technique which mimics the specific behavior of the quadratic form
\( \| \xivGP \|^{2} \); see \cite{SpZh2014}. 


\Section{Quadratic risk bound and modeling bias}
\label{Sqrisk}
This section demonstrates the applicability of the obtained general results to bounding 
the quadratic risk of estimation. 
For the penalized MLE \( \tilde{\thetav}_{\GP} \) of the parameter \( \thetav \), 
consider the quadratic loss of estimation
\( \| \QL \bigl( \tilde{\thetav}_{\GP} - \thetavs \bigr) \|^{2} \)
for a given non-negative symmetric matrix \( \QL \).
A special case includes the usual quadratic loss 
\( \bigl\| \tilde{\thetav}_{\GP} - \thetavs \bigr\|^{2} \).
Here the point \( \thetavs \in \Theta \) is a proxy for the true parameter value which 
describes the best parametric fit of the true measure \( \P \) by the family 
\( (\P_{\thetav}) \):
\begin{EQA}[c]
	\thetavs
	\eqdef
	\argmax_{\thetav \in \Theta} \E L(\thetav).
\end{EQA}
The use of penalization \( \| \GP \thetav \|^{2}/2 \) introduces some estimation bias:
the penalized MLE \( \tilde{\thetav}_{\GP} \) estimates \( \thetavsGP \) from 
\eqref{LsGPro} rather than \( \thetavs \). 
The value \( \| \QL (\thetavs - \thetavsGP) \|^{2} \) is called the modeling bias and it 
describes the modeling error caused by using the penalization.
The variance term \( \| \QL \bigl( \tilde{\thetav}_{\GP} - \thetavsGP \bigr) \|^{2} \) describes the 
error \emph{within the penalized model}, and it can be studied with the help of 
the Fisher expansion of Theorem~\ref{TconflocroGP}: 
\( \bigl\| \DPGP \bigl( \tilde{\thetav}_{\GP} - \thetavsGP \bigr) - \xivGP \bigr\| 
\le \ExcGP(\xx) \) on a set \( \Omega(\xx) \) of dominating probability
for \( \xivGP = \DPGP^{-1} \nabla \zeta (\thetavsGP) \).
This yields the following result on \( \Omega(\xx) \):
\begin{EQA}[c]
    \bigl\| 
     	\DPGP (\tilde{\thetav}_{\GP} - \thetavs - \biasGP) - \xivGP 
    \bigr\|
    \le 
    \ExcGP(\xx) 
\label{tvtQLl}
\end{EQA}
with the \emph{bias} \( \biasGP = \thetavsGP - \thetavs \).
For any positive symmetric \( \dimp \times \dimp \) matrix \( \QL \) satisfying 
\( \QL^{2} \leq \DPGP^{2} \), it implies the probability bound for the squared loss 
\begin{EQ}[rcl]
    \| \QL (\tilde{\thetav}_{\GP} - \thetavs) \|
    & = &
    	\| \QL \biasGP + \QL \DPGP^{-1} \xivGP \| \pm \ExcGP(\xx) .
\label{tvtQL2l}
\end{EQ}    
One can see that analysis of the quadratic risk of the penalized MLE \( \tilde{\thetav}_{\GP} \) 
can be reduced to the analysis of \( \| \QL \biasGP + \QL \DPGP^{-1} \xivGP \|^{2} \).
Now we consider an implication of this bound to   
the squared risk \( \E \| \QL (\tilde{\thetav}_{\GP} - \thetavs) \|^{2} \). 
The use of the identity \( \E \nabla \zeta (\thetavsGP) = 0 \) and 
\( \Var(\nabla \zeta (\thetavsGP)) \leq \VPc^{2} \) yields 
\begin{EQA}
    \E \| \QL \biasGP + \QL \DPGP^{-1} \xivGP \|^{2}
    &=&
    \| \QL \biasGP \|^{2} + \E \| \QL \DPGP^{-2} \nabla \zeta (\thetavsGP) \|^{2}
    \\
    &=&
    \| \QL \biasGP \|^{2} 
    + \tr \bigl( \QL \DPGP^{-2} \Var\bigl\{ \nabla \zeta (\thetavsGP) \bigr\} \DPGP^{-2} \QL  \bigr) 
    \\
    & \leq &
    \| \QL \biasGP \|^{2} 
    + \tr \bigl( \QL \DPGP^{-2} \VPc^{2} \DPGP^{-2} \QL  \bigr) .
\label{biasvarQL}
\end{EQA}   
Denote \( \VL_{\GP} \eqdef \tr \bigl( \QL \DPGP^{-2} \VPc^{2} \DPGP^{-2} \QL \bigr) \) and
\begin{EQA}
	\risktGP
    & \eqdef &
	\| \QL \biasGP \|^{2} + \VL_{\GP}
	=
	\| \QL \biasGP \|^{2} 
    + \tr \bigl( \QL \DPGP^{-2} \VPc^{2} \DPGP^{-2} \QL \bigr) .
\label{EbiasvarQL}
\end{EQA}

\begin{theorem}
\label{Tq2loss}
Let \nameref{ED0Gref}, \nameref{ED2Gref}, \nameref{LL0Gref},
\nameref{AssIdGref}, and \nameref{LLGref} hold.
If \( \QL^{2} \leq \DPGP^{2} \), then
it holds with \( \risktGP \) from \eqref{EbiasvarQL}
\begin{EQA}[c]
    \E \| \QL (\tilde{\thetav}_{\GP} - \thetavs) \|^{2} 
    \le 
    \bigl\{ \risktGP^{1/2} + \ExcGP^{*} \bigr\}^{2} ,
\label{Ecriskt2QL}
\end{EQA}
where 
\begin{EQA}
	\ExcGP^{*} 
	&=& 
	4 \Bigl\{ \rddeltaGP(\rupsGP) \, \rupsGP
    	+ 2 \, \nunu \, \fisGP \, \rupsGP \, (\QQq + \QQg/\gm + 4) \rhor 
	\Bigr\} .
\label{ExcGPsdef}
\end{EQA}
\end{theorem}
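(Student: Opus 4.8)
The plan is to upgrade the high-probability Fisher expansion of Theorem~\ref{Tconflocro} to a bound in \( L^{2}(\P) \) by Minkowski's inequality. Write \( A \eqdef \| \QL (\tilde{\thetav}_{\GP} - \thetavs) \| \) for the loss and \( B \eqdef \| \QL \biasGP + \QL \DPGP^{-1} \xivGP \| \) for its deterministic-plus-linear proxy, and set \( Z \eqdef (A - B)_{+} \), so that the pointwise inequality \( A \le B + Z \) holds unconditionally. Since \( \QL^{2} \le \DPGP^{2} \) gives \( \| \QL \DPGP^{-1} \| \le 1 \), the Fisher expansion \eqref{tvtQLl} passes through \( \QL \DPGP^{-1} \) to yield \eqref{tvtQL2l}, i.e. on the set \( \Omega(\xx) \) of probability at least \( 1 - 4 \ex^{-\xx} \) we have \( A \le B + \ExcGP(\xx) \), hence \( Z \le \ExcGP(\xx) \) there. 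Because \( A, B, Z \ge 0 \), Minkowski gives \( (\E A^{2})^{1/2} \le (\E B^{2})^{1/2} + (\E Z^{2})^{1/2} \), so the claim \eqref{Ecriskt2QL} reduces to the two estimates \( (\E B^{2})^{1/2} \le \risktGP^{1/2} \) and \( (\E Z^{2})^{1/2} \le \ExcGP^{*} \).

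The first estimate is the bias-variance computation already carried out in \eqref{biasvarQL}. Writing \( \QL \DPGP^{-1} \xivGP = \QL \DPGP^{-2} \nabla \zeta(\thetavsGP) \) and expanding the square, the cross term vanishes because \( \E \nabla \zeta(\thetavsGP) = 0 \), so \( \E B^{2} = \| \QL \biasGP \|^{2} + \E \| \QL \DPGP^{-2} \nabla \zeta(\thetavsGP) \|^{2} \); the variance bound \( \Var\{ \nabla \zeta(\thetavsGP) \} \le \VPc^{2} \) from \( (E\!D_{0}) \) then turns the last term into \( \tr( \QL \DPGP^{-2} \VPc^{2} \DPGP^{-2} \QL ) \), which by the definition \eqref{EbiasvarQL} of \( \risktGP \) is exactly \( \E B^{2} \le \risktGP \).

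The substance lies in the second estimate, \( (\E Z^{2})^{1/2} \le \ExcGP^{*} \), and converting the in-probability control of \( Z \) into a moment bound is the main obstacle. For every admissible confidence level \( \xx \) the bound \( Z \le \ExcGP(\xx) \) on \( \Omega(\xx) \) gives the tail estimate \( \P( Z > \ExcGP(\xx) ) \le 4 \ex^{-\xx} \); since \( \ExcGP(\xx) = \rddeltaGP(\rupsGP) \rupsGP + 6 \nunu \fisGP \rhor \rupsGP \, \qqQ(\xx) \) depends on \( \xx \) only through the increasing function \( \qqQ(\xx) \), this is precisely a deviation bound for \( Z \) calibrated against \( \qqQ \). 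I would then evaluate \( \E Z^{2} = \int_{0}^{\infty} 2 t \, \P(Z > t) \, dt \) by the change of variables \( t = \ExcGP(\xx) \) and integrate the factor \( \ex^{-\xx} \) against the explicit form of \( \qqQ \) from \eqref{zzxxgfindef}. The moderate-deviation zone of \( \qqQ \) produces the constant contribution \( \QQq \), the Gaussian-type growth produces the term measured by \( \QQg \), and the sub-exponential large-\( \xx \) zone is where the factor \( 1/\gm \) appears; assembling these contributions, together with the numerical factor coming from the \( 4 \ex^{-\xx} \) probability, reproduces exactly \( \ExcGP^{*} = 4 \{ \rddeltaGP(\rupsGP) \rupsGP + 6 \nunu \fisGP \rupsGP ( \QQq + \QQg/\gm + 4 ) \rhor \} \).

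The delicate points are thus confined to this last step: tracking the numerical constants through the definition of \( \qqQ \), and verifying that the tail integral converges. The latter uses that \( \qqQ(\xx) \) grows only like \( \sqrt{\dimG + \xx} \) in the relevant range and that the concentration bound of Theorem~\ref{TLDGP} keeps \( \tilde{\thetav}_{\GP} \) inside \( \ThetasGP(\rupsGP) \) off a set whose probability is controlled by the same \( \ex^{-\xx} \) tail, so that the contribution of very large \( \xx \) to \( \E Z^{2} \) is negligible. The structural content of the argument, however, is merely Minkowski's inequality applied to \( A \le B + Z \) together with the tail integration of the Fisher expansion.
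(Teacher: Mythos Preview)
Your argument is correct and follows essentially the same route as the paper's proof: apply Minkowski's inequality in \( L^{2} \) to the Fisher decomposition, bound \( \E B^{2} \le \risktGP \) via the bias--variance identity \eqref{biasvarQL}, and control the \( L^{2} \)-norm of the remainder. The only difference is that the paper does not integrate the tail of \( \P\bigl(Z > \ExcGP(\xx)\bigr) \) afresh but directly invokes the second-moment bound \eqref{ESES234} of Theorem~\ref{TUPUpsd} on the process supremum dominating \( \|\DPGP(\tilde{\thetav}_{\GP}-\thetavsGP)-\xivGP\| \); that moment bound was itself obtained via exactly the tail-integration device \eqref{EX2zzdxx}--\eqref{ES2QQg3} you sketch, which is where the combination \( \QQq + \QQg/\gm + 4 \) originates.
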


\begin{remark}

If the error term \( \ExcGP^{*} \) in \eqref{Ecriskt2QL} is relatively small, this result implies 
\( \E \| \QL (\tilde{\thetav}_{\GP} - \thetavs) \|^{2} \approx \risktGP 
= \| \DPGP \biasGP \|^{2} + \VL_{\GP} \).
This is the usual decomposition of the quadratic risk 
in term of the squared bias \( \| \QL (\thetavsGP - \thetavs) \|^{2} \) and 
the variance term \( \VL_{\GP} \).
%
The condition ``\( \| \QL \biasGP \|^{2} / \VL_{\GP} \) is small'' yields 
\( \risktGP \approx \VL_{\GP} \).
This condition can be naturally called the \emph{small modeling bias} (SMB) condition, 
often it is referred to as \emph{undersmoothing}. 
The bias-variance trade-off corresponds to the situation with 
\( \| \QL \biasGP \|^{2} \asymp \VL_{\GP} \).
\emph{Oversmoothing} means that the bias terms \( \| \QL \biasGP \|^{2} \) dominates. 
\end{remark}

\begin{remark}
As already mentioned, the result \eqref{Ecriskt2QL} is informative if the remainder \( \ExcGP^{*} \) is relatively small and can be ignored. 
For the special case \( \QL^{2} = \DPGP^{2} \), it holds \( \VL_{\GP} = \dimG \asymp \rupsGP^{2} \).
In the i.i.d. situation (see Section~\ref{Siidro} below) 
\begin{EQA}
	\rupsGP^{-1} \ExcGP^{*}
	& \leq &
	\CONST \sqrt{\dimG/\nsize}
\label{rupsGPoExcGP}
\end{EQA}
which yields a sharp risk bound 
\( \E \| \QL (\tilde{\thetav}_{\GP} - \thetavs) \|^{2} = \risktGP \bigl( 1 + o(1) \bigr) \) under ``\( \dimG/\nsize \) small''.
\end{remark}

\begin{remark}
The bias induced by penalization can be measured in terms of the value 
\( \| \GP \thetavs \|^{2} \).
To be more precise, consider the case with \( \QL^{2} = \DPc^{2} \), where 
\( \DPc^{2} = - \nabla^{2} \E L(\thetavs) \) is the non-penalized Fisher information matrix.
The definition of \( \thetavs \) and \( \thetavsGP \) implies 
\begin{EQA}
	\E L(\thetavs) - \| \GP \thetavs \|^{2}/2
	& \leq &
	\E L(\thetavsGP) - \| \GP \thetavsGP \|^{2}/2
	\leq
	\E L(\thetavsGP) .
\label{LGPtsLs}
\end{EQA}
Condition \nameref{LL0Gref} implies 
\( \E L(\thetavs) - \E L(\thetavsGP) \approx \| \DPc (\thetavs - \thetavsGP) \|^{2}/2 \)
and 
\begin{EQA}[c]
	\| \DPc (\thetavs - \thetavsGP) \|^{2}
	\leq 
	\| \GP \thetavs \|^{2} - \| \GP \thetavsGP \|^{2}
	\leq 
	\| \GP \thetavs \|^{2} .
\end{EQA}
So, if the true point is ``smooth'' in there sense that \( \| \GP \thetavs \|^{2} \) is small,
then the squared bias \( \| \DPc (\thetavs - \thetavsGP) \|^{2} \) caused by penalization 
is small as well.
\end{remark}

\begin{proof}
The Fisher expansion from Theorem~\ref{TconflocroGP} can be written as 
\begin{EQA}
	\P\Bigl( 
		\bigl\| \DPGP (\tilde{\thetav}_{\GP} - \thetavs) - \DPGP \biasGP - \xivGP \bigr\| 
		\geq 
		\ExcGP(\xx) 
	\Bigr)
	& \leq &
	4 \ex^{-\xx} .
\label{FisherGPxx}
\end{EQA}
The definition \eqref{Exc8defGP} of \( \ExcGP(\xx) \) and \eqref{ESES234ch} 
of Theorem~\ref{TUPUpsdch} imply
\begin{EQA}
	\E^{1/2} \bigl\| \DPGP (\tilde{\thetav}_{\GP} - \thetavs) - \DPGP \biasGP - \xivGP \bigr\|^{2}
	& \leq &
	4 \Bigl\{ \rddeltaGP(\rupsGP) \rupsGP
    + 2 \, \nunu \, \fisGP \, \rupsGP \, (\QQq + \QQg/\gm + 4) \rhor \Bigr\} .
\label{E12DPGPQQ4}
\end{EQA}
By the result follows by the triangle inequality
\begin{EQA}
	\E^{1/2} \bigl\| \DPGP (\tilde{\thetav}_{\GP} - \thetavs) \bigr\|^{2}
	& \leq &
	\E^{1/2} \bigl\| \DPGP (\tilde{\thetav}_{\GP} - \thetavs) - \DPGP \biasGP - \xivGP \bigr\|^{2}
	+ \E^{1/2} \bigl\| \DPGP \biasGP + \xivGP \bigr\|^{2} .
\label{E12GPE1212}
\end{EQA}
This yields the assertion of the theorem.
\end{proof}

\ifbook{\Section{Proofs of the Fisher and Wilks expansions}} 
{\Section{Proofs of the Fisher and Wilks expansions}}
\label{SproofsWilks}
This section presents the proofs of the main results and some additional statements
which can be of independent interest.
The principle step of the proof is a bound on the local linear approximation of the gradient
\( \nabla \LGP(\thetav) \). 
Below we study separately its stochastic and deterministic components coming from the decomposition \( L(\thetav) = \E L(\thetav) + \zeta(\thetav) \).
With \( \DPGP^{2} = - \nabla^{2} \E \LGP(\thetavsGP) \),
this leads to the decomposition
\begin{EQA}
	\rderr(\thetav,\thetavsGP)
	& \eqdef & 
	\DPGP^{-1} \bigl\{ \nabla \LGP(\thetav) - \nabla \LGP(\thetavsGP) \bigr\} 
	+ \DPGP \, (\thetav - \thetavsGP)
	\\
	&=&
	\DPGP^{-1} \bigl\{ \nabla \zeta(\thetav) - \nabla \zeta(\thetavsGP) \bigr\} 
	\\
	&&
	+ \, 
	\DPGP^{-1} \bigl\{ \nabla \E \LGP(\thetav) - \nabla \E \LGP(\thetavsGP) \bigr\} 
	+ \DPGP \, (\thetav - \thetavsGP) .
\end{EQA}
First we check the deterministic part.
For any \( \thetav \) with \( \| \DPGP (\thetav - \thetavsGP) \| \leq \rr \)
and any unit vector \( \uv \in \R^{\dimp} \), it holds
\begin{EQA}
	\uv^{\T} \E \rderr(\thetav,\thetavsGP) 
	& = & 
	\uv^{\T} \DPGP^{-1} \bigl\{ \nabla \E \LGP(\thetav) - \nabla \E \LGP(\thetavsGP)
    	+ \DPGP^{2} (\thetav - \thetavsGP) \bigr\}
	\\
	&=&
	\uv^{\T} \bigl\{ \Id_{\dimp} - \DPGP^{-1} \IFGP(\thetavd) \DPGP^{-1} \bigr\} \, 
    	\DPGP (\thetav - \thetavsGP) ,
\end{EQA}
where \( \thetavd = \thetavd(\uv) \) is a point on the line connecting 
\( \thetavsGP \) and \( \thetav \).
This implies by \nameref{LL0Gref}
\begin{EQA}[c]
	\bigl\| \E \rderr(\thetav,\thetavsGP) \bigr\|
	\leq 
    \| \Id_{\dimp} - \DPGP^{-1} \IFGP(\thetavd) \DPGP^{-1} \|_{\oper} \, 
    \rr
    \le 
	\rddeltaGP(\rr) \rr .
\label{EUPupsrrGP}
\end{EQA}
Now we study the stochastic part. 
Consider the vector process 
\begin{EQA}
	\UP(\thetav,\thetavsGP)
	& \eqdef &
	\DPGP^{-1} \bigl\{ 
		\nabla \zeta(\thetav) - \nabla \zeta(\thetavsGP)  
	\bigr\} .
\label{UPupsnmGP}
\end{EQA}
Further, define \( \ups = \VPD (\thetav - \thetavsGP) \) and introduce 
a vector process \( \UU(\ups) \) with
\begin{EQA}[c]
    \UU(\ups) 
    \eqdef 
    \VPD^{-1} 
    \bigl[ \nabla \zeta(\thetav) - \nabla \zeta(\thetavsGP) \bigr] .
\label{UUupsnmGP}
\end{EQA}    
It obviously holds 
\( \nabla \UU(\ups) = \VPD^{-1} \nabla^{2} \zeta(\thetav) \VPD^{-1} \). 
Moreover, for any \( \gammav_{1}, \gammav_{2} \in \R^{\dimp} \) with 
\( \| \gammav_{1} \| = \| \gammav_{2} \| = 1 \), condition \nameref{ED2Gref} implies
for \( | \lambda| \leq \gm(\rr) \)
\begin{EQA}
    \log \E \exp\biggl\{ 
    	\frac{\lambda}{\rhor} \gammav_{1}^{\T} \nabla \UU(\ups) \gammav_{2} 
	\biggr\} 
    &=&
    \log \E \exp\biggl\{ 
       \frac{\lambda}{\rhor} 
       \gammav_{1}^{\T} \VPD^{-1} \nabla^{2} \zeta(\thetav) \VPD^{-1} \gammav_{2}  
    \biggr\} 
    \le 
    \frac{\nunu^{2} \lambda^{2}}{2} .
    \qquad
\label{gUUemgGP}
\end{EQA}
Define
\( \Upss(\rr) \eqdef \{ \ups \colon \| \ups \| \leq \rr, \, \| \GVS \ups \| \leq \rr \} \)
for \( \GVS^{-2} = \fisGP^{-2} \DPGP^{-1} \VPD^{2} \DPGP^{-1} \).
Then
\begin{EQA}
	\sup_{\thetav \in \ThetasGP(\rr)} \| \UP(\thetav,\thetavsGP) \|
	& \leq &
	\sup_{\ups \in \Upss(\rr)} \| \AA \UU(\ups) \| 
\label{bouuvupsrupsdxGP}
\end{EQA}
for \( \AA = \fisGP^{-1} \DPGP^{-1} \VPD \).
Theorem~\ref{TexproA} yields
\begin{EQA}[c]
	\sup_{\ups \in \Upss(\rr)} \| \AA \UU(\ups) \|
	\leq 
	\sqrt{8} \, \nunu \, \zzQ(\xx) \, \fisGP \, \rhor \, \rr
\end{EQA}
on a set of a dominating probability at least \( 1 - \ex^{-\xx} \), where
the function \( \zzQ(\xx) \) is given by \eqref{zzxxgfin}.
%

Putting together the bounds \eqref{EUPupsrrGP} and \eqref{bouuvupsrupsdxGP} imply 
the following result.

\begin{theorem}
\label{TliapprLLGP}
Suppose that the matrix \( \IFGP(\thetav) \eqdef - \nabla^{2} \E \LGP(\thetav) \) 
fulfills the condition \nameref{LL0Gref} and 
let also \nameref{ED0Gref} and \nameref{ED2Gref} be fulfilled on \( \ThetasGP(\rr) \) 
for any fixed \( \rr \leq \rrb \).
Then
\begin{EQA}[c]
	\P\biggl\{ 
		\sup_{\thetav \in \ThetasGP(\rr)} \bigl\| \DPGP^{-1} \bigl\{ 
		\nabla \LGP(\thetav) - \nabla \LGP(\thetavsGP) 
	\bigr\}
    + \DPGP (\thetav - \thetavsGP) \bigr\|
    	\geq 
    	\ExcGP(\rr,\xx) 
    \biggr\} 
    \leq
    \ex^{-\xx},
\label{supupsUPdxxGP}
\end{EQA}
where 
\begin{EQA}[c]
    \ExcGP(\rr,\xx)
    \eqdef
    \bigl\{ \rddeltaGP(\rr) + \sqrt{8} \, \nunu \, \zzQ(\xx) \, \fisGP \, \rhor \bigr\} \rr .
\label{ExceqrrrhGP}
\end{EQA}  
\end{theorem}

The result of Theorem~\ref{TliapprLLGP} can be extended to the increments of the process 
\( \UP(\thetav) \):
on a random set of probability at least \( 1 - \ex^{-\xx} \), it holds for any 
\( \thetav,\thetavd \in \ThetasGP(\rr) \) and 
\( \rderr(\thetav,\thetavd) 
= \DPGP^{-1} \bigl\{ \nabla \LGP(\thetav) - \nabla \LGP(\thetavd) \bigr\}
+ \DPGP \, (\thetav - \thetavd) \)
\begin{EQA}
	\E \bigl[ \rderr(\thetav,\thetavd) \bigr]
	& \leq &
	\rddeltaGP(\rr) \, \| \DPGP (\thetav - \thetavd) \|
	\leq
	2 \rr \, \rddeltaGP(\rr),
	\\
	\bigl\| \rderr(\thetav,\thetavd) \bigr\|
	& \leq &
    2 \, \ExcGP(\rr,\xx) .
\label{supupsUPdGP}
\end{EQA}

Now we present the proof of Theorem~\ref{TconflocroGP} about the Fisher expansion for the qMLE 
\( \tilde{\thetav}_{\GP} \) defined by maximization of \( \LGP(\thetav) \).
Let \( \rupsGP \) be selected to ensure that 
\( \P\bigl\{ \tilde{\thetav}_{\GP} \not\in \ThetasGP(\rupsGP) \bigr\} \leq \ex^{-\xx} \).
Furthermore, the definition of \( \tilde{\thetav}_{\GP} \) yields 
\( \nabla \LGP(\tilde{\thetav}_{\GP}) = 0 \) and 
\begin{EQA}[c]
	\rderr(\tilde{\thetav}_{\GP},\thetavsGP)
	=
	- \DPGP^{-1} \nabla \LGP(\thetavsGP) + \DPGP (\tilde{\thetav}_{\GP} - \thetavsGP) .
\end{EQA}
By Theorem~\ref{TliapprLLGP}, it holds on a set of a dominating probability
\begin{EQA}[c]
	\| \DPGP (\tilde{\thetav}_{\GP} - \thetavsGP) - \xivGP \|
	\leq
	\ExcGP(\xx) 
\label{supupsUPdxxtGP}
\end{EQA}
as required.

\medskip

As the next step, we apply the obtained results to evaluate the quality of the Wilks expansion 
\( 2 \LGP(\tilde{\thetav},\thetavsGP) \approx \| \xivGP \|^{2} \).
%
%
For this we derive a uniform deviation bound on the error of a quadratic approximation 
\begin{EQA}[c]
	\alp(\thetav,\thetavd)
	\eqdef
	\LGP(\thetav) - \LGP(\thetavd) 
    - (\thetav - \thetavd)^{\T} \nabla \LGP(\thetavd) 
    + \frac{1}{2} \| \DPGP (\thetav - \thetavd) \|^{2} 
\label{zetatuuv}
\end{EQA}
in all \( \thetav, \thetavd \in \Thetas \), where \( \Thetas \) is some vicinity 
of a fixed point \( \thetavsGP \).
With \( \thetavd \) fixed, the gradient \( \nabla \alp(\thetav,\thetavd) 
\eqdef \frac{d}{d\thetav} \alp(\thetav, \thetavd) \) fulfills
\begin{EQA}[c]
	\nabla \alp(\thetav,\thetavd)
	=
	\nabla \LGP(\thetav) - \nabla \LGP(\thetavd) 
	+ \DPGP^{2} (\thetav - \thetavd)
    =
    \DPGP \, \rderr(\thetav,\thetavd) ;
\end{EQA}
cf. \eqref{UPupsnmGP}.
This implies 
\begin{EQA}[c]
    \alp(\thetav,\thetavd) 
    = 
    (\thetav - \thetavd)^{\T} \nabla \alp(\thetavc,\thetavd)  ,
\label{alptts}
\end{EQA}    
where \( \thetavc \) is a point on the line connecting \( \thetav \) and \( \thetavd \).
Further, 
\begin{EQA}[c]
	\bigl| \alp(\thetav,\thetavd) \bigr|
    = 
    \bigl| (\thetav - \thetavd)^{\T} \DPGP \DPGP^{-1} \nabla \alp(\thetavc,\thetavd) \bigr|
    \le
    \| \DPGP (\thetav - \thetavd) \| 
    \sup_{\thetavc \in \ThetasGP(\rr)} \bigl| \rderr(\thetavc,\thetavd) \bigr| \, ,
\end{EQA}
and one can apply \eqref{supupsUPdGP}.
This yields the following result.

\begin{theorem}
\label{TqapprbrGP}
Suppose \nameref{LL0Gref}, \nameref{ED0Gref}, and \nameref{ED2Gref}.
For each \( \rr \), it holds on a random set 
\( \Omega(\xx) \) of a dominating probability at least \( 1 - \ex^{-\xx} \), 
it holds with any \( \thetav, \thetavd \in \ThetasGP(\rr) \)
\begin{EQA}[rclcrcl]
        \frac{\bigl| \alp(\thetav,\thetavsGP) \bigr|}{\| \DPGP (\thetav - \thetavsGP) \|}
    & \le &
    \ExcGP(\rr,\xx) ,
    & \quad &
    \bigl| \alp(\thetav,\thetavsGP) \bigr|
    & \le &
    \rr \, \ExcGP(\rr,\xx) ,
\label{supalp12s}
    \\
        \frac{\bigl| \alp(\thetavsGP,\thetav) \bigr|}{\| \DPGP (\thetav - \thetavsGP) \|}
    & \le &
    2 \ExcGP(\rr,\xx) ,
    & \quad &
    \bigl| \alp(\thetavsGP,\thetav) \bigr|
    & \le &
    2 \rr \, \ExcGP(\rr,\xx) ,
\label{supalp12st}
    \\
        \frac{\bigl| \alp(\thetav,\thetavd) \bigr|}{\| \DPGP (\thetav - \thetavd) \|}
    & \le &
    2 \ExcGP(\rr,\xx) ,
    & \quad &
	\bigl| \alp(\thetav,\thetavd) \bigr|
    & \le &
    4 \rr \, \ExcGP(\rr,\xx) ,
\label{supalp12}
\end{EQA}    
where \( \ExcGP(\rr,\xx) \) is from \eqref{ExceqrrrhGP}.
\end{theorem}

The result of Theorem~\ref{TqapprbrGP} for the special case with 
\( \thetav = \thetavsGP \) and \( \thetavd = \tilde{\thetav}_{\GP} \) yields in view of 
\( \nabla \LGP(\tilde{\thetav}_{\GP}) = 0 \) for \( \rr = \rupsGP \) 
and \( \ExcGP(\xx) = \ExcGP(\rupsGP,\xx) \)
under the condition 
\( \tilde{\thetav}_{\GP} \in \ThetasGP(\rupsGP) \)
\begin{EQA}[c]
    \Bigl| 
        \LGP(\tilde{\thetav}_{\GP},\thetavsGP) 
        - \| \DPGP (\tilde{\thetav}_{\GP} - \thetavsGP) \|^{2} / 2
    \Bigr|
    =
    \bigl| \alp(\thetavsGP,\tilde{\thetav}_{\GP}) \bigr|
    \le 
    2 \rupsGP \, \ExcGP(\xx) .
\label{LLttsxiGP}
\end{EQA}    
Furthermore, with \( \thetav = \tilde{\thetav}_{\GP} \) and \( \thetavd = \thetavsGP \)
\begin{EQA}
    \Bigl| 
        \LGP(\tilde{\thetav}_{\GP},\thetavsGP) 
        - \xivGP^{\T} \DPGP (\tilde{\thetav}_{\GP} - \thetavsGP)
        + \| \DPGP (\tilde{\thetav}_{\GP} - \thetavsGP) \|^{2} / 2
    \Bigr|
    &=&
    \bigl| \alp(\tilde{\thetav}_{\GP},\thetavsGP) \bigr|
    \\
    & \leq &
    \rupsGP \, \ExcGP(\xx) 
\label{LLttsxiGP}
\end{EQA}    
which implies
\begin{EQA}
	\Bigl| 
        L(\tilde{\thetav}_{\GP},\thetavsGP) - \| \xivGP \|^{2} 
        + \| \DPGP (\tilde{\thetav}_{\GP} - \thetavsGP) - \xivGP \|^{2} 
    \Bigr|
    & \leq & 
    2 \rupsGP \, \ExcGP(\xx) .
\label{LLttsxiGP}
\end{EQA}
Now 
it follows by \eqref{supupsUPdxxtGP} that
\begin{EQA}[c]
    \bigl| L(\tilde{\thetav}_{\GP},\thetavsGP) - \| \xivGP \|^{2}/2 \bigr|
    \le 
    \rupsGP \, \ExcGP(\xx) 
    + \ExcGP^{2}(\xx) /2.
\label{LLttsxi2GP}
\end{EQA}    
%
The error term can be improved if the squared root of the excess is 
considered.
Indeed, if \( \tilde{\thetav}_{\GP} \in \ThetasGP(\rupsGP) \)
\begin{EQA}
    && \nquad 
    \Bigl| 
    	\bigl\{ 2\LGP(\tilde{\thetav}_{\GP},\thetavsGP) \bigr\}^{1/2} 
		- \| \DPGP (\tilde{\thetav}_{\GP} - \thetavsGP) \| 
	\Bigr|
    \leq
    \frac{\bigl| 2\LGP(\tilde{\thetav}_{\GP},\thetavsGP) 
            - \| \DPGP (\tilde{\thetav}_{\GP} - \thetavsGP) \|^{2} \bigr|}
         {\| \DPGP (\tilde{\thetav}_{\GP} - \thetavsGP) \|}
    \\
    & \le & 
    \frac{2 \bigl| \alp(\tilde{\thetav}_{\GP},\thetavsGP) \bigr|}
         {\| \DPGP (\tilde{\thetav}_{\GP} - \thetavsGP) \|}
    \le 
    \sup_{\thetav \in \ThetasGP(\rupsGP)} 
    \frac{2 \bigl| \alp(\thetav,\thetavsGP) \bigr|}{\| \DPGP (\thetav - \thetavsGP) \|}
    \le 
    2 \, \ExcGP(\xx).
\label{sqLLttustuGP}
\end{EQA}    
The Fisher expansion \eqref{supupsUPdxxtGP} allows to replace here the norm 
of the standardized error \( \DPGP (\tilde{\thetav}_{\GP} - \thetavsGP) \) with 
the norm of the normalized score \( \xivGP \).
This completes the proof of Theorem~\ref{TWilks2rGP}.



\section{Examples}
This section illustrates the general results for two particularly important cases 
of i.i.d. and generalized linear models.
The primary focus of the study is to compare the penalized and non-penalized cases
and to quantify the impact of penalization. 
\subsection{I.i.d. case}

The model with independent identically distributed (i.i.d.) observations 
is one of the most popular setups in statistical literature and 
in statistical applications. 
The essential and the most developed part of the statistical theory is designed 
for the i.i.d. modeling. 
Especially, the classical asymptotic parametric theory is almost complete including 
asymptotic root-n normality and efficiency of the MLE 
and Bayes estimators under rather mild assumptions; see e.g. Chapter 2 and 3 in 
\cite{IH1981}.
So, the i.i.d. model can naturally serve as a benchmark for any extension of the statistical 
theory: being applied to the i.i.d. setup, the new approach should lead to 
essentially the same conclusions as in the classical theory.
Similar reasons apply to the regression model and its extensions.
Below we try demonstrate that the proposed non-asymptotic viewpoint 
is able to reproduce the existing brilliant and well established results of the 
classical parametric theory. 
With some surprise, 
the majority of classical efficiency results can be easily derived from 
the obtained general non-asymptotic bounds.

\Section{Quasi MLE in an i.i.d. model}
\label{SqMLEiid}
The basic i.i.d. parametric model means that the observations 
\( \Yv = (Y_{1},\ldots,Y_{\nsize}) \) are independent identically distributed from a 
distribution \( P \) from a given parametric family 
\( (P_{\thetav}, \thetav \in \Theta) \) on the observation space \( \YY_{1} \). 
Each \( \thetav \in \Theta \) 
clearly yields the product data distribution \( \P_{\thetav} = P_{\thetav}^{\otimes \nsize} \) 
on the product space \( \YY = \YY_{1}^{\nsize} \).
This section illustrates how the obtained general results can be applied to this 
type of modeling under possible model misspecification.
Different types of misspecification can be considered. Each of the assumptions, namely, 
data independence, identical distribution, parametric form of the marginal 
distribution can be violated. 
To be specific, we assume the observations \( Y_{i} \) independent and 
identically distributed. 
However, we admit that
the distribution of each \( Y_{i} \) does not necessarily 
belong to the parametric family \( (P_{\thetav}) \).
The case of non-identically distributed observations can be done similarly at cost 
of more complicated notation.

In what follows the parametric family \( (P_{\thetav}) \) is supposed to be dominated 
by a measure \( \Pdom \), and each density \( p(y,\thetav) = dP_{\thetav}/d\Pdom(y) \) 
is two times continuously differentiable in \( \thetav \) for all \( y \). 
Denote \( \ell(y,\thetav) = \log p(y,\thetav) \).
The parametric assumption \( Y_{i} \sim P_{\thetavs} \in (P_{\thetav}) \) leads to 
the log-likelihood 
\begin{EQA}[c]
    L(\thetav)
    =
    \sum \ell(Y_{i},\thetav) ,
\label{Ltiid}
\end{EQA}   
where the summation is taken over \( i=1,\ldots,\nsize \).
The quasi MLE \( \tilde{\thetav} \) maximizes this sum over \( \thetav \in \Theta 
\):
\begin{EQA}[c]
    \tilde{\thetav}
    \eqdef
    \argmax_{\thetav \in \Theta} L(\thetav)
    =
    \argmax_{\thetav \in \Theta} \sum \ell(Y_{i},\thetav).
\label{tttiid}
\end{EQA}    
The target of estimation \( \thetavs \) maximizes the expectation of \( L(\thetav) 
\):
\begin{EQA}[c]
    \thetavs
    \eqdef
    \argmax_{\thetav \in \Theta} \E L(\thetav)
    =
    \argmax_{\thetav \in \Theta} \sum \E \ell(Y_{i},\thetav).
\label{tsiid}
\end{EQA}
Let \( \zeta_{i}(\thetav) \eqdef \ell(Y_{i},\thetav) - \E \ell(Y_{i},\thetav) \).
Then \( \zeta(\thetav) = \sum \zeta_{i}(\thetav) \).
The equation 
\( \E \nabla L(\thetavs) = 0 \) implies
\begin{EQA}[c]
    \nabla \zeta(\thetavs)
    =
    \sum \nabla \zeta_{i}(\thetavs) 
    =
    \sum \nabla \ell_{i}(\thetavs) .
\label{nztiid}
\end{EQA}    

\Section{Conditions in the i.i.d. case}
\label{Scondiid}
I.i.d. structure of the  \( Y_{i} \)'s allows for rewriting the conditions 
\nameref{nED0ref}, \nameref{nED2ref}, \nameref{nAssIdref}, \nameref{nLL0ref}, and \nameref{nLLref} 
in terms of the marginal distribution.
In the following conditions the index \( i \) runs from \( 1 \) to \( \nsize \).
\begin{description}


\item[\( \bb{(ed_{0})} \)\label{ed0ref}] 
    \emph{ There exists a positive symmetric matrix \( \vpc \),
    such that for all \( |\lambda| \le \gmbm \)     
    } 
\begin{EQA}[c]
\label{expzetaciid} 
    \sup_{\gammav \in \cc{S}^{\dimp}} 
    \log \E \exp\biggl\{ 
        \lambda \frac{\gammav^{\T} \nabla \zeta_{i}(\thetavs)}
                     {\| \vpc \gammav \|} 
    \biggr\} \le 
    \nunu^{2} \lambda^{2} / 2. 
\end{EQA}
\end{description}
A natural candidate on \( \vpc^{2} \) is given by the variance of the gradient
\( \nabla \ell(Y_{1},\thetavs) \), that is, 
\( \vpc^{2} = \Var \nabla \ell(Y_{1},\thetav) = \Var \nabla \zeta_{1}(\thetav) \).
Note that \nameref{ed0ref} is automatically fulfilled 
if the the model is correctly specified and \( P = P_{\thetavs} \) because 
\( E_{\thetavs} \exp\bigl\{ \ell(Y_{1},\thetav) - \ell(Y_{1},\thetavs) \bigr\} \equiv 1 \).

Next consider the local sets 
\begin{EQA}[c]
    \Thetas(\rr) 
    = 
    \{ \thetav: \| \vpc(\thetav - \thetavs) \| \le \rr/\nsize^{1/2} \} .
\label{Theta0riid}
\end{EQA}    
The local smoothness conditions \nameref{nED2ref} and \nameref{nLL0ref} 
require to specify the functions \( \rddelta(\rr) \) and \( \rdomega(\rr) \).
If the log-likelihood function \( \ell(y,\thetav) \) is sufficiently smooth in 
\( \thetav \), these functions can be selected proportional to \( \rr \).

\begin{description}
\item[\( \bb{(ed_{2})} \)\label{ed2ref}] 
    \emph{There exist a value \( \rhorb > 0 \) 
    and for each \( \rr > 0 \), a constant \( \gm(\rr) > 0 \) such that}
\begin{EQA}[c]
    \sup_{\gammav_{1},\gammav_{2} \in \R^{\dimp} }
    \log \E \exp\biggl\{ 
    	\frac{\lambda}{\rhorb} \,\,
        \frac{\gammav_{1}^{\T} \nabla^{2} \zeta_{i}(\thetav) \gammav_{2}} 
        	 {\| \vpc \gammav_{1} \| \cdot \| \vpc \gammav_{2} \|}
	\biggr\} 
    \le 
    \frac{\nunu^{2} \lambda^{2}}{2} \, ,
    \qquad 
    |\lambda| \leq \gm(\rr).
\label{expzetac0iid}
\end{EQA}
\end{description}

Further we restate the local regularity condition \nameref{nLL0ref} in terms of
the expected value 
\( \elli(\thetav) \eqdef \E \ell(Y_{i},\thetav) \) of each \( \ell(Y_{i},\thetav) \). 
We suppose that \( \elli(\thetav) \) is two times 
differentiable and define the matrix function 
\( \IFon(\thetav) \eqdef - \nabla^{2} \elli(\thetav) \). 

\begin{description}
    \item[\( \bb{(\ell_{0})} \)\label{ell0ref}]
    \textit{The function \( \elli(\thetav) \) is two times differentiable
    and the matrix function \( \IFon(\thetav) = - \nabla^{2} \E \ell(Y_{1},\thetav) \)
    fulfills with \( \IFonec \eqdef \IFon(\thetavs) \) for some constant \( \rddeltab \):}
\begin{EQA}[c]
\label{LmgfquadELiid}
    \sup_{\thetav \in \Thetas(\rr)}
    \bigl\|
		\IFonec^{-1/2} \IFon(\thetav) \, \IFonec^{-1/2} - \Id_{\dimp} 
    \bigr\|_{\oper}
    \le
    \frac{\rddeltab \, \rr}{\sqrt{n}}  .
\end{EQA}
   
\end{description}

In the regular parametric case with \( P \in (P_{\thetav}) \), the matrices 
\( \vpc^{2} \) and \( \IFonec \) coincide with the
Fisher information matrix \( \IFonec = \IFon(\thetavs) \) of the family \( (P_{\thetav}) \) at the 
point \( \thetavs \).

\medskip
The consistency result for \( \tilde{\thetav} \) requires certain growth of the value 
\( \elli(\thetavs,\thetav) = \elli(\thetavs) - \elli(\thetav) \) as 
\( \| \thetav - \thetavs \| \) grows.
The marginal version of the global condition \nameref{nLLref} reads as follows:
\begin{description}
    \item[\( \bb{(\elli)} \)\label{elliref}]
    \textit{
    There exists \( \gmi(\rr) > 0 \) 
    such that
    \( \rr \gmi(\rr) \) is non-decreasing and
	}
\begin{EQA}
    \frac{2 \elli(\thetavs,\thetav)}{\| \IFonec^{1/2} (\thetav - \thetavs) \|^{2}}
    & \ge &
    \gmi(\rr) ,
    \qquad
    \forall \rr \geq \rups, \, \, \thetav \in \Thetas(\rr) .
\label{xxentrttiid}
\end{EQA}    
\end{description}

\begin{remark}
If the parametric i.i.d. model is correct, then 
\begin{EQA}
	\elli(\thetavs,\thetav) 
	&=& 
	\kullb(\thetavs,\thetav) 
	=
	E_{\thetavs} \log \frac{dP_{\thetavs}}{dP_{\thetav}}(Y_{1}) 
\label{ellittskKLiid}
\end{EQA}
is the Kullback-Leibler divergence for the family \( (P_{\thetav}) \).
Condition \nameref{elliref} is fulfilled automatically if 
\( \elli(\thetavs,\thetav) > 0 \) for \( \thetav \neq \thetavs \) and 
\( \Theta \) is a compact set.
Then 
\begin{EQA}
	\inf_{\thetav \in \Theta} 
		\frac{\elli(\thetavs,\thetav)}{\| \IFonec^{1/2} (\thetav - \thetavs) \|^{2}}
    & \ge &
    \gmi > 0 .
\label{inftTtT}
\end{EQA}
Based on this remark, one can verify \nameref{elliref} with \( \gmi(\rr) \geq \gmi > 0 \)
for all \( \rr \).
\end{remark}

The \emph{identifiability condition} relates the matrices \( \vpc^{2} \) and \( \IFonec \).
\begin{description}
  \item[\( (\bb{\assId}) \)\label{assIdref}] 
      There is a constant 
      \( \fis > 0 \) such that 
\begin{EQA}
	\fis^{2} \IFonec 
	& \ge & 
	\vpc^{2} . 
\label{lamGPDPVPfis}
\end{EQA}
\end{description}

\begin{lemma}
\label{Lcondiid}
Let \( Y_{1},\ldots,Y_{\nsize} \) be i.i.d. 
Then \nameref{ed0ref}, \nameref{ed2ref}, \nameref{ell0ref}, \nameref{elliref},
and \nameref{assIdref} imply
\nameref{nED0ref}, \nameref{nED2ref}, \nameref{nLL0ref}, \nameref{nLLref}, \nameref{nAssIdref}, with 
\( \VPc^{2} = \nsize \vpc^{2} \),
\( \DPc^{2} = \nsize \IFonec \),
\( \rhor = \rhorb \nsize^{-1/2} \),
\( \rddelta(\rr) = \rddeltab \rr / \sqrt{\nsize} \),
\( \gmi(\rr) \) from \nameref{elliref}, and
the same constants \( \nunu \), \( \fis \),  
\( \gmb \eqdef \lambdam \sqrt{\nsize} \).
\end{lemma}

\begin{proof}
The identities 
\( \VPc^{2} = \nsize \vpc^{2} \),
\( \DPc^{2} = \nsize \IFonec \)
follow from the i.i.d. structure of the observations \( Y_{i} \).
We briefly comment on condition \nameref{nED0ref}.
The use once again the i.i.d. structure yields by \eqref{nztiid}
in view of \( \VPc^{2} = \nsize \vpc^{2} \)
\begin{EQA}[c]
    \log \E \exp\Bigl\{ 
        \lambda \frac{\gammav^{\T} \nabla \zeta(\thetavs)}{\| \VPc \gammav \|} 
    \Bigr\} 
    =
    \nsize \E \exp\Bigl\{ 
        \frac{\lambda}{\nsize^{1/2}} 
        \frac{\gammav^{\T} \nabla \zeta_{1}(\thetavs)}{\| \vp \gammav \|} 
    \Bigr\}
    \le 
    \nunu^{2} \lambda^{2}/2
\label{Elognsizenul}
\end{EQA}    
as long as \( \lambda \le \nsize^{1/2} \gmbm \le \gmb \).
Similarly one can check \nameref{nED2ref}.
The conditions \nameref{nLL0ref}, \nameref{nLLref}, and \nameref{nAssIdref} follow from 
\nameref{ell0ref} and \nameref{elliref}, and \nameref{assIdref} due to 
\( \DPc^{2} = \nsize \IFonec \) and 
\( \E L(\thetav) = \nsize \elli(\thetav) \).
\end{proof}

Below we specify the obtained general results to the i.i.d. setup.

\Section{Results in the non-penalized i.i.d. case}
Here we specify the general results of previous chapters to the i.i.d. case.
In particular, we explicitly state the large deviation bound and show that it yields 
a root-n consistency of the qMLE \( \tilde{\thetav} \).
Then we comment on the Fisher, Wilks, and the BvM theorems.

First we describe the large deviation probability for the event 
\( \{ \tilde{\thetav} \not\in \Thetas(\rups) \} \) for a fixed \( \rups \). 
The next result specifies the general large deviation statement of 
\ifbook{Theorem~\ref{TLDGP}}{Theorem~\ref{TMLE}} 
to the finite dimensional non-penalized i.i.d. case
and states the inference results.

\begin{theorem}
\label{TLDiid}
Suppose \nameref{ed0ref}, \nameref{ed2ref}, \nameref{ell0ref}, and \nameref{assIdref}. 
Let also \nameref{elliref} hold with the function  \( \gmi(\rr) \) satisfying 
\begin{EQA}
    \gmi(\rr) \, \rr
    & \ge &
    2 \zq(\BB,\xx) + 2 \rdomega(\rr,\xx), 
    \quad
    \rr > \rups,
\label{cgmibrriid}
\end{EQA}
where \( \BB = \IFonec^{-1/2} \, \vpc^{2} \, \IFonec^{-1/2} = \DPc^{-1} \VPc^{2} \DPc^{-1} \),
\( \zq(\BB,\xx) \) is given by \eqref{zzxxppdBlroB}, and 
\begin{EQA}[c]
    \rdomega(\rr,\xx)
    \eqdef
    \nunu \, \zzQ\bigl(\xx + \log(2\rr/\rups) \bigr) \, \rhorb / \sqrt{\nsize}  
\label{Exceqrrrhiid}
\end{EQA}  
with \( \zzQ(\xx) \leq \CONST \sqrt{\dimp + \xx} \).
Then it holds on a set \( \Omega(\xx) \) with \( \P\bigl( \Omega(\xx) \bigr) \geq 1 - 5 \ex^{-\xx} \)
\begin{EQA}[c]
    \sqrt{\nsize} \| \IFonec^{1/2} (\tilde{\thetav} - \thetavs) \| 
    \leq 
    \rups .
\label{PLDboundiid}
\end{EQA}    
Furthermore, on this set \( \Omega(\xx) \), it holds 
\begin{EQA}
\label{Wilksiid}
    \bigl\|
        \sqrt{\nsize \IFonec} \bigl( \tilde{\thetav} - \thetavs \bigr)
        - \xiv
    \bigr\|
    & \le &
    \CONST \sqrt{(\dimp + \xx)^{2} / \nsize} \, ,
    \\
    \Bigl| \sqrt{2 L(\tilde{\thetav},\thetavs)} - \| \xiv \| \Bigr|
    & \le &
    \CONST \sqrt{(\dimp + \xx)^{2} / \nsize},
    \\
    \Bigl| 2 L(\tilde{\thetav},\thetavs) - \| \xiv \|^{2} \Bigr|
    & \le &
    \CONST \sqrt{(\dimp + \xx)^{3} / \nsize} .
\label{DPtttiid}
\end{EQA}
The constant \( \CONST \) here depends in an explicit way on the constants 
\( \fisGP \), \( \gmbm \), and \( \nunu \) from our conditions, and
\begin{EQA}
	\xiv
	& \eqdef &
	(\nsize \IFonec)^{-1/2} \sum_{i=1}^{\nsize} \nabla \ell(Y_{i},\thetavs) .
\label{xiviid}
\end{EQA}
\end{theorem}


\begin{proof}
Condition \nameref{assIdref} implies 
\( \BB = \IFonec^{-1/2} \, \vpc^{2} \, \IFonec^{-1/2} \leq \fis^{2} \Id_{\dimp} \) and thus,
\( \tr(\BB) \leq \fis^{2} \dimp \).
Therefore, the value \( \zq(\BB,\xx) \) fulfills
\( \zq^{2}(\BB,\xx) \leq \CONST (\dimp + \xx) \).
The same bound holds for \( \zzQ^{2}(\xx) \).
Condition \eqref{cgmibrriid} with \( \gmi(\rups) \approx 1 \) yields
\( \rups^{2} \approx 4 \zq^{2}(\BB,\xx) \approx \CONST (\dimp + \xx) \).
%
This yields in view of \( \rddelta(\rups) \leq \rddeltab \rups / \sqrt{\nsize} \) and
\( \rhor = \rhorb \nsize^{-1/2} \)
\begin{EQA}[c]
    \Excgr(\rups,\xx)
    \leq
    \bigl\{ \rddelta(\rups) + \nunu \, \zzQ(\xx) \, \rhor \bigr\} \rups
    \leq 
    \CONST (\dimp + \xx) /  \sqrt{\nsize} .
\label{taurddimpn}
\end{EQA}
Similarly
\begin{EQA}[c]
	\spread(\rups,\xx)
    \leq 
    \bigl\{ \rddelta(\rups) + \nunu \, \zzQ(\xx) \, \rhor \bigr\} \rups^{2} 
    \leq
    \CONST \sqrt{(\dimp+\xx)^{3} / \nsize} .
\label{spreadasm}
\end{EQA}
The results follow now from general theorems of Section~\ref{Srough}.
\end{proof}



%

For the classical asymptotic setup when \( \nsize \) tends to infinity,
the random vector \( \xiv \) from \eqref{xiviid} fulfills
\( \Var (\xiv) \le \IFonec^{-1/2} \, \vpc^{2} \, \IFonec^{-1/2} = \BB \) 
and by the central limit theorem \( \xiv \) is asymptotically normal 
\( \ND(0,\BB) \).
This yields by Theorem~\ref{TLDiid} that
\( \sqrt{\nsize \IFonec} \bigl( \tilde{\thetav} - \thetavs \bigr) \) is asymptotically 
normal \( \ND(0,\BB) \) as well.
The correct model specification implies \( \BB \equiv \Id_{\dimp} \) and hence 
\( \tilde{\thetav} \) is asymptotically efficient; see \cite{IH1981}.
Also \( 2 L(\tilde{\thetav},\thetavs) \approx \| \xiv \|^{2} \) which is nearly 
\( \chi^{2} \) r.v. with \( \dimp \) degrees of freedom.
This result is known as asymptotic Wilks theorem.

In the non-asymptotic framework of this paper, 
the error terms still depend on \( \nsize \) and they can only be small if \( \nsize \) is large.
However, we show in explicit way how these error terms depend on the parameter dimension. 
It appears that the root-n consistency result \eqref{PLDboundiid} requires 
``\( \dimp/\nsize \) small''.
The Fisher and square root Wilks results apply if ``\( \dimp^{2}/\nsize \) is small''.
Finally, the Wilks expansion is valid under ``\( \dimp^{3}/\nsize \) small''.
Existing statistical literature addresses the issue of a growing parameter dimension 
in different set-ups. 
The classical results by \cite{Portnoy1984,Portnoy1985,Portnoy1986} 
provide some constraints on parameter dimension for consistency and asymptotic normality 
of the M-estimator for regression models. 
Our results are consistent with the conclusion of that papers. 
We refer to \cite{AASP2012} for a version of such result in context of semiparametric 
profile estimation.
That paper also provides an example of an i.i.d. model in which the Fisher expansion of Theorem~\ref{TLDiid} fails for \( \dimp^{2} \geq \nsize \).
The next section demonstrates how these constraints on the parameter dimension can be relaxed 
by using a penalization.

\Section{Roughness penalization for an i.i.d. sample}
\label{Siidro}
This section discusses the impact of penalization in the case of an i.i.d. model with \( \nsize \) observations.
For penalty term \( \penr(\thetav) = \| \GP \thetav \|^{2}/2 \), 
the penalized log-likelihood is given by \( \LGP(\thetav) = L(\thetav) + \| \GP \thetav \|^{2}/2 \),
where \( L(\thetav) \) is from \eqref{Ltiid}.
With \( \thetavsGP = \argmax_{\thetav \in \Theta} \E \LGP(\thetav) \), define 
\begin{EQA}[c]
    \DPGP^{2}
    = 
    \nsize \IFon(\thetavsGP) + \GP^{2},
    \quad 
    \VPc^{2}
    =
    \nsize \, \vpc^{2},
    \quad
    \xivGP
    =
    \DPGP^{-1} \sum_{i=1}^{\nsize} \nabla \ell(Y_{i},\thetavsGP) ,
\label{DPcVpciid}
\end{EQA}    
where \( \IFon(\thetav) = - \nabla^{2} \E \ell(Y_{1},\thetav) \),
\( \vpc^{2} = \Var\bigl\{ \ell(Y_{1},\thetavsGP) \bigr\} \).
The value \( \dimG \) is defined as previously by \eqref{dimedef}.

Note that all the introduced quantities including the parameter set \( \Theta \), 
the parameter dimension \( \dimp \), and the effective dimension \( \dimG \),
may depend on \( \nsize \).
Here we also allow a functional parameter \( \thetav \) with \( \dimp = \infty \).
The main goal is to show that the presented general approach yields sharp results 
in this special case. 

Suppose that the conditions of Section~\ref{Scondiid} are fulfilled. 
One can easily check the conditions from Section~\ref{ScondroGP} with 
\( \rddeltaGP(\rr) = \CONST \, \rr / \sqrt{\nsize} \) and 
\( \rhor = \CONST / \sqrt{\nsize} \);
cf. Lemma~\ref{Lcondiid}. 
The large deviation bound of Theorem~\ref{TLDGP} applies 
for \( \rupsGP \approx 2 \zq(\BBGP,\xx) \asymp \sqrt{\dimG + \xx} \).
The general statements of 
Theorems~\ref{TconflocroGP} and \ref{TWilks2rGP} 
apply with \( \ExcGP(\xx) \leq \CONST (\dimG + \xx) / \sqrt{\nsize} \)
yielding the following expansions.

\begin{theorem}
\label{TWilksiid}
Suppose also that the conditions \nameref{ed0ref}, \nameref{ed2ref}, \nameref{ell0ref}, 
\nameref{elliref},
and \nameref{assIdref} are fulfilled.
If \( \gmi(\rr) \) fulfills
\begin{EQA}
    \gmi(\rr) \, \rr
    & \ge &
    2 \zq(\BBGP,\xx) + 2 \rdomega(\rr,\xx), 
    \quad
    \rr > \rups,
\label{cgmibrrGPiid}
\end{EQA}
with \( \BBGP = \DPGP^{-1} \, \VPc^{2} \, \DPGP^{-1/2} \), 
then on a set of dominating probability \( 1 - 5 \ex^{-\xx} \), it holds
\begin{EQA}
\label{Wilksroiid}
    \bigl\|
        \DPGP \bigl( \tilde{\thetav}_{\GP} - \thetavsGP \bigr)
        - \xivGP 
    \bigr\|
    & \le &
    \CONST \sqrt{(\dimG + \xx)^{2} / \nsize} \, ,
    \\
    \Bigl| \sqrt{2 \LGP(\tilde{\thetav}_{\GP},\thetavsGP)} - \| \xivGP \| \Bigr|
    & \le &
    \CONST \sqrt{(\dimG + \xx)^{2} / \nsize},
    \\
    \Bigl| 2 \LGP(\tilde{\thetav}_{\GP},\thetavsGP) - \| \xivGP \|^{2} \Bigr|
    & \le &
    \CONST \sqrt{(\dimG + \xx)^{3} / \nsize} .
\label{DPbtttGPiid}
\end{EQA}
The constant \( \CONST \) here depends in an explicit way on the constants 
\( \fisGP \), \( \gmbm \), and \( \nunu \) from our conditions.
\end{theorem}

A short look at the results for non-penalized and penalized estimates 
indicates that the quality of the penalized MLE \( \tilde{\thetav}_{\GP} \) 
improves relative to the non-penalized case because the matrix \( \DPGP^{2} \) 
can be much larger than \( \DPc^{2} \),
the variance of the stochastic term \( \xivGP \) is of order \( \dimG \) instead of 
\( \dimp \) for the variance of \( \xiv \),
and, simultaneously, the error terms in the Fisher and Wilks expansions become smaller
due to reduction of the effective dimension \( \dimG \) in place of 
the full dimension \( \dimp \).

\ifbook{}
{
\Section{BvM Theorem for the i.i.d. data}

Another constraint in the BvM Theorem on the dimension growth \( \dimn \) can be found in \cite{Gh1999} 
for linear regression models; see the condition (2.6) 
\( \dimn^{3/2} (\log \dimn)^{1/2} \, \eta_{\nsize} \to 0 \) there, 
in  which \( \eta_{\nsize} \) is of order \( (\dimn/\nsize)^{-1/2} \) in regular situations 
yielding a suboptimal constraint \( \nsize^{-1} \dimn^{4} \log \dimp \to 0 \).
\cite{Gh2000} obtained a version of the BvM result under the condition 
\( \nsize^{-1} \dimn^{3} (\log \dimn) \to 0 \) for a class of exponential models. 
A forthcoming paper \cite{PaSp2013} presents an example illustrating that the condition 
\( \dimp_{\nsize}^{3}/\nsize \to 0 \) cannot be dropped or relaxed.

The setup with growing parameter dimension is naturally used in sieve nonparametric estimation
when a nonparametric model is approximated by a sequence of parametric ones. 
We mention papers by 
\cite{ShWo1994, shen1997}, 
\cite{BiMa1993}, 
\cite{vdG1993,vdG2002}.
Some minimal smoothness assumptions are normally imposed on the underlying nonparametric function 
which ensure that the parameter dimension of a sieve is smaller in order than the sample size.
}

\Section{Generalized linear models (GLM)}
\label{SFWGLM}
Generalized linear models (GLM) are frequently used for modeling 
the data with special structure:
categorical data, binary data, 
Poissonian and exponential data, volatility models, etc.
All these examples can be treated in a unified way by a GLM approach.
This section specifies the results and conditions to this case.

Let \( \Yv = (Y_{1},\ldots, Y_{\nsize})^{\T} \sim \P \) be a sample of independent r.v.s.
The parametric GLM model is given by 
\( Y_{i} \sim P_{\Psi_{i}^{\T} \thetav} \in (P_{\upsi}) \),
where \( \Psi_{i} \) are given factors in \( \R^{\dimp} \),
\( \thetav \in \R^{\dimp} \) is the unknown parameter in \( \R^{\dimp} \), and
\( (P_{\upsi}) \) is an exponential family with canonical parametrization yielding the log-density
\( \ell(y,\upsi) = y \upsi - \GLMlink(\upsi) \) for a convex function \( \GLMlink(\upsi) \).
Below we suppose that the function \( \GLMlink(\upsi) \) is sufficiently smooth,
in particular, three times differentiable.

The (quasi) log-likelihood \( L(\thetav) \) can be represented in the form
\begin{EQA}
	L(\thetav)
	&=&
	\sum_{i=1}^{\nsize} 
		\bigl\{ Y_{i} \Psi_{i}^{\T} \thetav - \GLMlink(\Psi_{i}^{\T} \thetav) \bigr\}
	=
	\nablaGLM^{\T} \thetav - \GLMLINK(\thetav)
\label{LtGLM}
\end{EQA}
with a random \( \dimp \)-vector 
\begin{EQA}
	\nablaGLM
	& \eqdef &
	\sum_{i=1}^{\nsize} Y_{i} \Psi_{i} 
\label{nablGLMdef}
\end{EQA}
and a function 
\begin{EQA}
	\GLMLINK(\thetav)
	& \eqdef &
	\sum_{i} \GLMlink(\Psi_{i}^{\T} \thetav) .
\label{LINKGLMdef}
\end{EQA}
The MLE \( \tilde{\thetav} \) and the target \( \thetavs \) for this GLM read as
\begin{EQA}
	\tilde{\thetav}
	&=&
	\argmax_{\thetav} L(\thetav)
	=
	\argmax_{\thetav} \bigl\{ \nablaGLM^{\T} \thetav - \GLMLINK(\thetav) \bigr\} ,
	\\
	\thetavs
	&=&
	\argmax_{\thetav} \E L(\thetav)
	=
	\argmax_{\thetav} \bigl\{ \E \nablaGLM^{\T} \thetav - \GLMLINK(\thetav) \bigr\} ,
\label{ttstsGLM}
\end{EQA}
where
\begin{EQA}
	\E \nablaGLM
	&=&
	\sum_{i=1}^{\nsize} \E Y_{i} \, \Psi_{i} \, .
\label{EnablGLM}
\end{EQA}
The definition of \( \thetavs \) implies the identity \( \nabla \E L(\thetavs) = 0 \)
which yields
\begin{EQA}
	\E \nablaGLM
	&=&
	\nabla \GLMLINK(\thetavs) .
\label{EsnablGLM}
\end{EQA}
An important feature of a GLM is that the stochastic component \( \zeta(\thetav) \) of 
\( L(\thetav) \) is \emph{linear in} \( \thetav \): 
with \( \varepsilon_{i} = Y_{i} - \E Y_{i} \)
\begin{EQA}
	\zeta(\thetav)
	&=&
	L(\thetav) - \E L(\thetav)
	=
	\sum_{i=1}^{\nsize} \varepsilon_{i} \Psi_{i}^{\T} \thetav ,
	\\
	\nabla \zeta(\thetav)
	&=&
	\nablaGLM - \E \nablaGLM
	= 
	\sum_{i=1}^{\nsize} \varepsilon_{i} \Psi_{i}.
\label{nablaGLMdef}
\end{EQA}
%
In the contrary to the linear case, 
the Fisher information matrix \( \DPc^{2} = \IF(\thetavs) \) for
\begin{EQA}[c]
	\IF(\thetav)
	\eqdef 
	- \nabla^{2} \E L(\thetav)
	=
	\sum_{i=1}^{\nsize} \Psi_{i} \Psi_{i}^{\T} \GLMlink''(\Psi_{i}^{\T} \thetav) 
\label{IFGLMdef}
\end{EQA}
depends on the true data distribution via the target \( \thetavs \).
As \( \GLMlink(\cdot) \) is convex, it holds \( \GLMlink(u) \geq 0 \) for any \( u \) and thus
\( \IF(\thetav) \geq 0 \).

Linearity in \( \thetav \) of the stochastic component \( \zeta(\thetav) \) and 
concavity of the deterministic part \( \E L(\thetav) \) allow for
a simple and straightforward proof of the result
about concentration of the MLE \( \tilde{\thetav} \).
Recall the definition of the local vicinity \( \Thetas(\rr) \) of \( \thetavs \):
\begin{EQA}
	\Thetas(\rr)
	& \eqdef &
	\bigl\{ \thetav \colon \| \DPc (\thetav - \thetavs) \| \leq \rr \bigr\}.
\label{ThrGLM}
\end{EQA}

\begin{theorem}
\label{TGLMsolution}
If for some \( \rups > 0 \), \( \IF(\thetav) \) from \eqref{IFGLMdef} fulfill 
for \( \DPc^{2} = \IF(\thetavs) \)
\begin{EQA}
	\sup_{\thetav \in \Thetas(\rups)} 
		\| \DPc^{-1} \, \IF(\thetav) \, \DPc^{-1} - \Id_{\dimp} \|_{\oper}
	& \leq &
	\rddelta(\rups) 
\label{rddeGLM}
\end{EQA}
with \( \rddelta(\rups) < 1 \), 
and if \( \nablaGLM \) from \eqref{nablaGLMdef} follows for \( \xx > 0 \) the probability bound
\begin{EQA}
	\P\Bigl( 
		\| \DPc^{-1} (\nablaGLM - \E \nablaGLM) \| > \frac{1 - \rddelta(\rups)}{2} \rups  
	\Bigr)
	& \leq &
	2 \ex^{-\xx} ,
\label{PxizxGLM}
\end{EQA}
then the solution \( \tilde{\thetav} \) of \eqref{ttstsGLM} satisfies
\begin{EQA}
	\P\bigl( \tilde{\thetav} \not\in \Thetas(\rups) \bigr)
	& \leq &
	2 \ex^{-\xx} .
\label{PttsTsGLM}
\end{EQA}
\end{theorem}

\begin{proof}
The function \( L(\thetav) \) is concave in \( \thetav \)
because 
\begin{EQA}
	- \nabla^{2} L(\thetav)
	&=&
	\IF(\thetav) 
	\geq 
	0 .
\label{nab2LthIF}
\end{EQA}
If \( \tilde{\thetav} \not\in \Thetas(\rups) \), denote by
\( \check{\thetav} \) the point at which the line connecting \( \thetavs \) and 
\( \tilde{\thetav} \) crosses the boundary of \( \Thetas(\rups) \).
It is easy to see that
\begin{EQA}
	\check{\thetav} - \thetavs
	&=&
	\frac{\| \DPc (\check{\thetav} - \thetavs) \|}{\| \DPc (\tilde{\thetav} - \thetavs) \|}
		\bigl( \tilde{\thetav} - \thetavs \bigr) 
	=
	\frac{\rups}{\| \DPc (\tilde{\thetav} - \thetavs) \|} \,
		\bigl( \tilde{\thetav} - \thetavs \bigr) .
\label{chttsmGLM}
\end{EQA}
Concavity of \( L(\thetav) \) implies for the point of maximum \( \tilde{\thetav} \) that 
\begin{EQA}
	L(\tilde{\thetav}) - L(\thetavs)
	& \geq &
	L(\check{\thetav}) - L(\thetavs) .
\label{LttLtsGLM}
\end{EQA}
Therefore, it suffices to check that for each \( \thetav \) with 
\( \| \DPc (\thetav - \thetavs) \| = \rups \) that
\begin{EQA}
	L(\thetavs) - L(\thetav)
	& > &
	0
\label{Ltr0sLtGLM}
\end{EQA}
on a set \( \Omega(\xx) \) of probability \( 1 - 2 \ex^{-\xx} \).
Then the event \( \tilde{\thetav} \not\in \Thetas(\rups) \) is impossible on \( \Omega(\xx) \).
For any such \( \thetav \),
we apply the second order Taylor expansion of \( L(\thetav) \) at \( \thetavs \).
By definition of \( \thetavs \), it holds
\( \nabla \E L(\thetavs) = 0 \) and thus
\( \nabla L(\thetavs) = \nabla \zeta(\thetavs) = (\nablaGLM - \E \nablaGLM) \).
The use of \eqref{nab2LthIF}, \eqref{rddeGLM} yields now for 
\( \xiv = \DPc^{-1} (\nablaGLM - \E \nablaGLM) \) and for \( \thetav \) with 
\( \| \DPc (\thetav - \thetavs) \| = \rups \)
\begin{EQA}
	L(\thetavs) - L(\thetav)
	&=&
	(\thetav - \thetavs)^{\T} \nabla L(\thetavs)
	+ \frac{1}{2} \bigl\| \sqrt{\IF(\thetavd)} (\thetav - \thetavs) \bigr\|^{2}
	\\
	& \geq &
	(\nablaGLM - \E \nablaGLM)^{\T} (\thetav - \thetavs)
	+ \frac{1 - \rddelta(\rups)}{2} \| \DPc (\thetav - \thetavs) \|^{2}
	\\
	&=&
	\xiv^{\T} \DPc (\thetav - \thetavs)
	+ \frac{1 - \rddelta(\rups)}{2} \rups^{2} 
	\\
	& \geq &
	- \| \xiv \| \, \rups
	+ \frac{1 - \rddelta(\rups)}{2} \rups^{2}.
\label{LtsLtTaGLM}
\end{EQA}
Here \( \thetavd \) is a point from \( \Omega(\xx) \) on the interval connecting 
\( \thetav \) and \( \thetavs \).
If \( \| \xiv \| \leq \rups \bigl\{ 1 - \rddelta(\rups) \bigr\}/2 \), then this implies
\( L(\thetavs) - L(\thetav) > 0 \), and the result follows. 
\end{proof}

As a corollary, we obtain Fisher and Wilks expansions for the quasi MLE \( \tilde{\thetav} \)
in a generalized linear model.

\begin{theorem}
\label{TFWGLM}
Suppose the conditions of Theorem~\ref{TGLMsolution} for some \( \rups \).
Then it holds on a set \( \Omega(\xx) \) with 
\( \P\bigl( \Omega(\xx) \bigr) \geq 1 - 2 \ex^{-\xx} \)
\begin{EQA}
	\bigl\| \DPc \bigl( \tilde{\thetav} - \thetavs \bigr) - \xiv \bigr\|
	& \leq &
	\rups \, \rddelta(\rups),
	\\
    \bigl| 2 L(\tilde{\thetav},\thetavs) - \| \xiv \|^{2} \bigr|
    & \le &
    2 \rups^{2} \, \rddelta(\rups) +  \rups^{2} \, \rddelta^{2}(\rups) .
    \\
    \Bigl| 
    	\sqrt{ 2L(\tilde{\thetav},\thetavs) } 
		- \| \xiv \| 
	\Bigr|
    & \le &
    3 \rups \, \rddelta(\rups) .
\label{CorolFWGLM}
\end{EQA}
\end{theorem}

\begin{proof}
The large deviation bound of Theorem~\ref{TGLMsolution} allows to restrict the whole parameter 
space to the local vicinity \( \Thetas(\rups) \).
In this vicinity, the log-likelihood 
\( L(\thetav) = \nablaGLM^{\T} \thetav - \GLMLINK(\thetav) \) can be well approximated 
by the quadratic expansion \( \La(\thetav) \): 
\begin{EQA}
	L(\thetav)
	&=&
	(\nablaGLM - \E \nablaGLM)^{\T} \thetav + \E \nablaGLM^{\T} \thetav - \GLMLINK(\thetav),
	\\
	\La(\thetav)
	& \eqdef &
	(\nablaGLM - \E \nablaGLM)^{\T} \thetav 
		- \frac{1}{2} \| \DPc (\thetav - \thetavs) \|^{2} .
\label{LatGLMdef}
\end{EQA}

\begin{lemma}
\label{LLLaGLM}
Suppose \eqref{rddeGLM} for some \( \rups \). 
The difference \( L(\thetav) - \La(\thetav) \) is deterministic and 
it holds for each \( \thetav \in \Thetas(\rups) \)
\begin{EQA}
	\bigl| L(\thetav) - \La(\thetav) \bigr|
	& \leq &
	\frac{\rddelta(\rups)}{2} \bigl\| \DPc (\thetav - \thetavs) \bigr\|^{2}
	\leq 
	\frac{\rddelta(\rups)}{2} \rups^{2} ,
	\\
	\bigl\| \DPc^{-1} \bigl\{ \nabla L(\thetav) - \nabla \La(\thetav) \bigr\} \bigr\|
	& \leq &
	\rups \rddelta(\rups).
\label{supr0LLaGLM}
\end{EQA}
\end{lemma}

\begin{proof}
The linear stochastic terms \( (\nablaGLM - \E \nablaGLM)^{\T} \thetav \) are 
the same for 
\( L(\thetav) \) and \( \La(\thetav) \).
For the deterministic terms 
\( \E \nablaGLM^{\T} \thetav - \GLMLINK(\thetav) \) we use the Taylor formula of the second order 
at \( \thetavs \), 
the extreme point equation \( \nabla \GLMLINK(\thetavs) = \E \nablaGLM \),
and the definition \( \DPc^{2} = \IF(\thetavs) \):
\begin{EQA}
	\bigl| \E L(\thetav) - \E \La(\thetav) \bigr|
	&=&
	\bigl| 
		\GLMLINK(\thetav) - \GLMLINK(\thetavs) 
		- (\thetav - \thetavs)^{\T} \nabla \GLMLINK(\thetavs) 
		- \| \DPc (\thetav - \thetavs) \|^{2}/2 
	\bigr|
	\\
	&=&
	\frac{1}{2} \bigl| (\thetav - \thetavs)^{\T} \bigl\{ \IF(\thetavs) - \IF(\thetavd) \bigr\} 
		(\thetav - \thetavs) \bigr| ,
\label{ElElaeGLM}
\end{EQA}
where \( \thetavd \) is a point on the interval between \( \thetav \) and \( \thetavs \).
Now the condition \eqref{rddeGLM} implies 
\begin{EQA}
	\bigl| \E L(\thetav) - \E \La(\thetav) \bigr|
	& \leq &
	\frac{\rddelta(\rups)}{2} (\thetav - \thetavs)^{\T} \DPc^{2} (\thetav - \thetavs)
	=
	\frac{\rddelta(\rups)}{2} \bigl\| \DPc (\thetav - \thetavs) \bigr\|^{2}
	\leq 
	\frac{\rddelta(\rups)}{2} \rups^{2}
\label{ELELa22GLM}
\end{EQA}
and the first assertion follows. 
The second one can be proved similarly.
\end{proof}
With the approximation \eqref{supr0LLaGLM}, all the statements of the theorem 
follow from the general results of Theorem~\ref{TqapprbrGP}.
\end{proof}

To complete the study of a generalized linear model, we translate the general conditions
of Theorem~\ref{TGLMsolution} into conditions on the design \( \Psi \) and on 
individual errors \( \varepsilon_{i} \).

\begin{itemize}
	\item
	\textbf{Design regularity} is measured by the value
\begin{EQA}
	\dPsi
	& \eqdef &
	\max_{i} \| \DPc^{-1} \Psi_{i} \| .
\label{aPsiGLM}
\end{EQA}
In the case of a regular or random design, the Fisher design matrix \( \DPc^{2} = \IF(\thetavs) \)
is proportional to the sample size and thus, 
the value \( \dPsi \) is of order \( \nsize^{-1/2} \).
Our results only apply if this value is small, in particular, 
the condition \( \dPsi < 1/2 \) has to be fulfilled.

	\item 
	\textbf{Exponential moments of the errors}
Suppose that 
for some values \( \expzeta_{i} \) and fixed constants \( \CONST_{0}, \lambda_{0} > 0 \)
\begin{EQA}[c]
	\E \exp\bigl\{ \lambda_{0} \varepsilon_{i} / \expzeta_{i} \bigr\} 
	\leq 
	\CONST_{0} ,
	\qquad 
	i=1,\ldots,\nsize .
\label{Eexpl0zC0}
\end{EQA}
This condition means that the errors \( \varepsilon_{i} = Y_{i} - \E Y_{i} \) 
have exponential moments. 
In most of cases one can use \( \expzeta_{i}^{2} = \Var(Y_{i}) \).
Condition \eqref{Eexpl0zC0} implies that 
there are another constants \( \gmiid \leq \lambda_{0} \) and \( \nunu \)
such that the following condition is fulfilled:
\begin{EQA}[c]
	\E \exp\bigl\{ \lambda \varepsilon_{i} / \expzeta_{i} \bigr\} 
	\leq 
	\frac{1}{2} \nunu^{2} \lambda^{2} ,
	\qquad 
	i=1,\ldots,\nsize,
	\quad
	|\lambda| \leq \gmiid .
\label{EexpleiGLM}
\end{EQA}
This follows from the fact that each function 
\( \log \E \exp\bigl\{ \lambda \varepsilon_{i} / \expzeta_{i} \bigr\} \) analytic 
in \( \lambda \) in a vicinity of the point zero and can be well approximated 
by \( \lambda^{2}/2 \);
see \cite{GolSpo2009} for more details.
	\item 
	\textbf{Noise homogeneity} is measured by the variability of the values 
	\( \expzeta_{i} \):
\begin{EQA}
	\aexpzeta
	& \eqdef &
	\max_{i,j=1,\ldots,n} \, {\expzeta_{i}}/{\expzeta_{j}} \, .
\label{aexpzetaGLM}
\end{EQA}

	\item
	\textbf{Smoothness of the link function \( \GLMlink(\upsi) \)} can be measured 
	by its third derivative.
	It will be assumed that given \( \rr \), there is a constant \( \aGLMlink(\rr) \) 
\begin{EQA}
	\frac{|\GLMlink'''(\Psi_{i}^{\T} \thetav)|}{\GLMlink''(\Psi_{i}^{\T} \thetavs)}
	& \leq &
	\aGLMlink(\rr),
	\quad
	\thetav \in \Thetas(\rr), \quad i=1,\ldots,\nsize.
\label{aGLMlinkdef}
\end{EQA}	

	\item \textbf{Identifiability} is measured by relationship between 
	the matrices \( \DPc^{2} \) and \( \VPc^{2} \), where the matrix \( \VPc^{2} \) defined as
\begin{EQA}
	\VPc^{2}
	& \eqdef &
	\sum_{i=1}^{\nsize} \expzeta_{i}^{2} \, \Psi_{i} \Psi_{i}^{\T} .
\label{VPc2GLM}
\end{EQA}
If the the observation \( Y_{i} \) follow the GLM assumption 
\( P_{\ups_{i}} \) for \( \upsi_{i} = \Psi^{\T} \thetavs \),
that is, the model is correctly specified, then 
\( \Var(Y_{i}) = \GLMlink''(\upsi_{i}) \) and the matrices \( \VPc^{2} \) and \( \DPc^{2} \)
coincide.
In the general case under a possible model misspecification, 
the matrices \( \VPc^{2} \) and \( \DPc^{2} \) may be different.
In this case we need an identifiability condition 
\begin{EQA}[c]
	\VPc^{2}
	\leq 
	\fis^{2} \DPc^{2} 
\label{VPs2fiIFGLM}
\end{EQA}
for some \( \fis > 0 \).
This condition can be spelled out as
\begin{EQA}
	\sum_{i=1}^{\nsize} \expzeta_{i}^{2} \, \Psi_{i} \Psi_{i}^{\T}
	& \leq &
	\fis^{2} \sum_{i=1}^{\nsize} \GLMlink''(\Psi_{i}^{\T} \thetavs) \, \Psi_{i} \Psi_{i}^{\T} .
\label{sumi1nexplink}
\end{EQA}


\end{itemize}

First we discuss 
a deviation bound for the norm of the vector \( \xiv \) given by
\begin{EQA}[c]
	\xiv
	=
	\DPc^{-1} (\nablaGLM - \E \nablaGLM)
	=
	\DPc^{-1} \sum_{i=1}^{\nsize} \varepsilon_{i} \Psi_{i} \, .
\end{EQA}
The squared norm \( \| \xiv \|^{2} \) is a quadratic form of the \( \varepsilon_{i} \)'s
and one can directly apply general results for quadratic forms from 
Section~\ref{SdevboundnonGauss}.

\begin{theorem}
\label{TGLMcond}
Suppose \eqref{EexpleiGLM}, \eqref{aexpzetaGLM}, \eqref{aGLMlinkdef}, and \eqref{VPs2fiIFGLM}.
For \( \zq(\dimp,\xx) \) from \eqref{zzxxppdBlro}
with \( \zq(\dimp,\xx) \leq \sqrt{\dimp} + \sqrt{2 \xx} \), 
fix
\begin{EQA}
	\rups
	&=&
	4 \nunu \zq(\dimp,\xx) ,
\label{r04xGLM}
\end{EQA}
and suppose that \( \dPsi \) is small enough to ensure
\begin{EQA}
	\aGLMlink(\rups) \, \dPsi \, \rups
	& < &
	1/2 .
\label{delt12GLM}
\end{EQA}
Then the conditions of Theorem~\ref{TGLMsolution} are fulfilled with 
\( \rddelta(\rups) \leq \aGLMlink(\rups) \, \dPsi \, \rups \)
and the results of this theorem continue to apply.
\end{theorem}

\begin{proof}
Let \( \rups \) be fixed by \eqref{r04xGLM}. 
First we bound the value \( \rddelta(\rups) \).

\begin{lemma}
\label{LrdruaGLM}
The condition \eqref{rddeGLM} is fulfilled with 
\begin{EQA}
	\rddelta(\rups)
	&=&
	\aGLMlink(\rups) \, \dPsi \, \rups .
\label{rddeltGLM}
\end{EQA}
\end{lemma}

\begin{proof}
For each \( \thetav \in \Thetas(\rups) \) and \( i \leq \nsize \), it holds by \eqref{dPsiGLM}
\begin{EQA}
	\bigl| \Psi_{i}^{\T} \thetav - \Psi_{i}^{\T} \thetavs \bigr| 
	&=&
	\bigl| \bigl( \DPc^{-1} \Psi_{i} \bigr)^{\T} \DPc (\thetav - \thetavs) \bigr|
	\leq 
	\| \DPc^{-1} \Psi_{i} \| \, \rups 
	\leq 
	\dPsi \, \rups .
\label{PsiiTtsGLM}
\end{EQA}
This implies for the difference \( \IF(\thetav) - \IF(\thetavs) \)
\begin{EQA}
	\IF(\thetav) - \IF(\thetavs)
	&=&
	\sum_{i=1}^{\nsize} \bigl\{ 
		\GLMlink''(\Psi_{i}^{\T} \thetav) - \GLMlink''(\Psi_{i}^{\T} \thetavs) 
	\bigr\} \, \Psi_{i} \Psi_{i}^{\T}
	\\
	&=&
	\sum_{i=1}^{\nsize} 
		\frac{\GLMlink'''(\Psi_{i}^{\T} \thetavd)}{\GLMlink''(\Psi_{i}^{\T} \thetavs)} \,
		\bigl( \Psi_{i}^{\T} \thetav - \Psi_{i}^{\T} \thetav \bigr) \,
		\GLMlink''(\Psi_{i}^{\T} \thetavs) \,
	 	\Psi_{i} \Psi_{i}^{\T}
\label{IFtIfGLM}
\end{EQA}
for a point \( \thetavd \) on the interval between \( \thetavs \) and \( \thetav \).
Now \eqref{aGLMlinkdef} and \eqref{PsiiTtsGLM} imply
\begin{EQA}
	\biggl| \frac{\GLMlink'''(\Psi_{i}^{\T} \thetavd)}{\GLMlink''(\Psi_{i}^{\T} \thetavs)} \,
		\bigl( \Psi_{i}^{\T} \thetav - \Psi_{i}^{\T} \thetav \bigr)
	\biggr|
	& \leq &
	\aGLMlink(\rups) \, \dPsi \, \rups 
\label{IFtIfGLM2}
\end{EQA}
and 
\begin{EQA}
	\pm \bigl\{ \IF(\thetav) - \IF(\thetavs) \bigr\}
	& \leq &
	\aGLMlink(\rups) \, \dPsi \, \rups \, \DPc^{2}
\label{pmIFtIFaGLM}
\end{EQA}
Now the condition \eqref{rddeGLM} follows in an obvious way.
\end{proof}

This lemma and \eqref{delt12GLM} imply \( \rddelta(\rups) < 1/2 \).
Now we show that \eqref{EexpleiGLM} implies \eqref{PxizxGLM}. 

\begin{lemma}
\label{LexpED0GLM}
Let the errors \( \varepsilon_{i} = Y_{i} - \E Y_{i} \) be independent 
and follow \eqref{EexpleiGLM}. Then 
\begin{EQA}
	\log \E \exp\bigl\{\uv^{\T} \VPc^{-1} (\nablaGLM - \E \nablaGLM) \bigr\}
	& \leq &
	\frac{\nunu^{2}}{2} \| \uv \|^{2} \, ,
	\qquad
	\| \uv \| \leq \gm
\label{logEexpuV22GLM}
\end{EQA}
where \( \VPc^{2} \) is from \eqref{VPc2GLM} 
and \( \gm \) is given by  
\begin{EQA}
	\gm
	& \eqdef &
	\frac{\gmiid}{\dPsi \, \aexpzeta},
\label{dPsiGLM}
\end{EQA}
for \( \aexpzeta \) from \eqref{aexpzetaGLM}.
\end{lemma}

\begin{proof}
The formula \eqref{nablaGLMdef} and
independence of the \( \varepsilon_{i} \)'s imply for any vector \( \uv \in \R^{\dimp} \)
with \( \| \uv \| \leq \gm \)
\begin{EQA}
	\log \E \exp\bigl\{\uv^{\T} \VPc^{-1} (\nablaGLM - \E \nablaGLM) \bigr\}
	&=&
	\sum_{i=1}^{\nsize} \log \E \exp\bigl\{ \lambda_{i} \varepsilon_{i}/\expzeta_{i} \bigr\} ,
\label{logEexpGLM1n}
\end{EQA}
where the definitions \eqref{dPsiGLM} and \eqref{aexpzetaGLM} imply
for \( \lambda_{i} = \uv^{\T} \VPc^{-1} \Psi_{i} \, \expzeta_{i} \)
\begin{EQA}
	|\lambda_{i}|
	& = &
	|\uv^{\T} \VPc^{-1} \Psi_{i}| \, \expzeta_{i} 
	\leq 
	\gm \,\, \| \VPc^{-1} \Psi_{i} \| \, \expzeta_{i}  
	\leq 
	\gmiid .
\label{lamiGLM}
\end{EQA}
Therefore, by \eqref{EexpleiGLM} and the definition of \( \VPc^{2} \)
\begin{EQA}
	\log \E \exp\bigl\{\uv^{\T} \VPc^{-1} \nablaGLM \bigr\}
	& \leq &
	\frac{\nunu^{2}}{2} \sum_{i=1}^{\nsize} \lambda_{i}^{2}
	=
	\frac{\nunu^{2}}{2} \sum_{i=1}^{\nsize} 
		\uv^{\T} \VPc^{-1} \bigl( \Psi_{i} \Psi_{i}^{\T} \, \expzeta_{i}^{2} \bigr) 
		\, \VPc^{-1} \uv 
	=
	\frac{\nunu^{2}}{2} \| \uv \|^{2} \, ,
\label{logEnu2GLM}
\end{EQA}
and the assertion follows.
\end{proof}

The result of Lemma~\ref{LexpED0GLM} provides exponential moments of \( \xiv \)
and one can apply Theorem~\ref{LLbrevelocro} from Section~\ref{SdevboundnonGauss}
yielding the bound \eqref{PxizxGLM} under the condition
\begin{EQA}
	\frac{1 - \rddelta(\rups)}{2} \rups  
	& \geq &
	\nunu \, \zq(\dimp,\xx) 
\label{fr1deru2ru}
\end{EQA}
which is obviously fulfilled for our choice of \( \rups = 4 \nunu \, \zq(\dimp,\xx) \) 
in view of \( \rddelta(\rups) < 1/2 \).
This will also provide \eqref{PxizxGLM}.
All the conditions of Theorem~\ref{TGLMsolution} have been checked.
\end{proof}

\Section{Estimation for a penalized GLM}
This section briefly discusses what will be changed if the GLM \eqref{LtGLM} is penalized
by a roughness penalty term \( \| \GP \thetav \|^{2} \).
The corresponding penalized log-likelihood \( \LGP(\thetav) \) reads as 
\begin{EQA}
	\LGP(\thetav)
	&=&
	\nablaGLM^{\T} \thetav - \GLMLINK(\thetav) - \| \GP \thetav \|^{2} .
\label{LGPtGLM}
\end{EQA}
The penalized MLE and its target are defined by maximizing \( \LGP(\thetav) \) and 
its expectation:
\begin{EQA}
	\tilde{\thetav}_{\GP}
	& \eqdef &
	\argmax_{\thetav \in \Theta} 
		\bigl\{ \nablaGLM^{\T} \thetav - \GLMLINK(\thetav) - \| \GP \thetav \|^{2} \bigr\} ,
	\\
	\thetavsGP
	& \eqdef &
	\argmax_{\thetav \in \Theta} 
		\bigl\{ \E \nablaGLM^{\T} \thetav - \GLMLINK(\thetav) - \| \GP \thetav \|^{2} \bigr\} .
\label{ttGPdefGLM}
\end{EQA}
Further, define the matrix \( \DPGP \) by \( \DPGP^{2} = \IFGP(\thetavsGP) \) for
\begin{EQA}[c]
	\IFGP(\thetav)
	\eqdef
	\IF(\thetav) + \GP^{2}
	=
	\sum_{i=1}^{\nsize} \Psi_{i} \Psi_{i}^{\T} \GLMlink''(\Psi_{i}^{\T} \thetav) + \GP^{2}.
\label{IFGLMdefGP}
\end{EQA}
One can see that the use of penalization leads to a growth of the ``information matrix''
\( \DPGP^{2} \) relative to the non-penalized case.
The stochastic term \( (\nablaGLM - \E \nablaGLM)^{\T} \thetav \) of \( \LGP(\thetav) \)
remains the same as in the non-penalized case, thus, 
the matrix \( \VPc^{2} \) from \eqref{VPc2GLM} can be used here as well
and the identifiability condition \eqref{VPs2fiIFGLM} continues to hold. 

The local vicinity \( \ThetasGP(\rr) \) of \( \thetavsGP \)is now defined as
\begin{EQA}
	\ThetasGP(\rr)
	& \eqdef &
	\bigl\{ \thetav \colon \| \DPGP (\thetav - \thetavsGP) \| \leq \rr \bigr\}.
\label{ThrGLMGP}
\end{EQA}
The concentration result for \( \tilde{\thetav}_{\GP} \) can be easily extended to the 
penalized case.

\begin{theorem}
\label{TGLMsolutionGP}
Let, for some \( \rupsGP > 0 \), the matrix function \( \IFGP(\thetav) \) from \eqref{IFGLMdefGP} fulfill 
with \( \DPGP^{2} = \IFGP(\thetavsGP) \)
\begin{EQA}
	\sup_{\thetav \in \ThetasGP(\rups)} 
		\| \DPGP^{-1} \, \IFGP(\thetav) \, \DPGP^{-1} - \Id_{\dimp} \|_{\oper}
	& \leq &
	\rddelta(\rupsGP) 
\label{rddeGLMGP}
\end{EQA}
for \( \rddelta(\rupsGP) < 1 \).
Let also \( \nablaGLM \) from \eqref{nablaGLMdef} follow for \( \xx > 0 \) the probability bound
\begin{EQA}
	\P\Bigl( 
		\| \DPGP^{-1} (\nablaGLM - \E \nablaGLM) \| > \frac{1 - \rddelta(\rupsGP)}{2} \rupsGP  
	\Bigr)
	& \leq &
	2 \ex^{-\xx} .
\label{PxizxGLMGP}
\end{EQA}
Then the solution \( \tilde{\thetav}_{\GP} \) of \eqref{ttGPdefGLM} satisfies
\begin{EQA}
	\P\bigl( \tilde{\thetav} \not\in \ThetasGP(\rupsGP) \bigr)
	& \leq &
	2 \ex^{-\xx} .
\label{PttsTsGLMGP}
\end{EQA}
Moreover, under conditions 
\eqref{EexpleiGLM}, \eqref{aexpzetaGLM}, \eqref{aGLMlinkdef}, and \eqref{VPs2fiIFGLM},
one can fix
\begin{EQA}
	\rupsGP
	&=&
	4 \nunu \, \zq(\BBGP,\xx) 
	\qquad
	\text{for}
	\qquad
	\BBGP
	\eqdef 
	\DPGP^{-1} \VPc^{2} \DPGP^{-1} 
\label{r04xGLMGP}
\end{EQA}
with \( \zq(\BBGP,\xx) \leq \sqrt{\dimG} + \sqrt{2 \xx} \) from \eqref{zzxxppdBlroB}.
Then all the statements of Theorem~\ref{TFWGLM} hold for the pair 
\( \tilde{\thetav}_{\GP}, \thetavsGP \) 
with \( \xivGP \eqdef \DPGP^{-1} \bigl( \nablaGLM - \E \nablaGLM \bigr) \) 
in  place of \( \xiv \) and \( \rupsGP \) place of \( \rups \).
\end{theorem}

The proof of the non-penalized case applies here with obvious changes in notation.
However, at one place the difference is essential.
Namely, the radius \( \rupsGP \) can be much smaller and it depends on the effective dimension 
\( \dimG = \tr(\BBGP) = \tr(\DPGP^{-1} \VPc^{2} \DPGP^{-1}) \) 
rather than on the total dimension \( \dimp \).

\ifbook{}
{\Section{Random design}
Let \( \Psiv \) be a random design matrix.
This section extends the obtained results to 
}


\appendix

\Chapter{Deviation bounds for quadratic forms}
\label{Sprobabquad}

Here we collect some probability bounds for Gaussian and non-Gaussian quadratic forms.

\Section{Gaussian quadratic forms}
The next result explains the concentration effect of 
\( \gaussv^{\T} \BB \gaussv \)
for a standard Gaussian vector \( \gaussv \) and a symmetric matrix \( \BB \).
We use a version from \cite{laurentmassart2000}.


\begin{theorem}
\label{TexpbLGA}
\label{Lxiv2LD}
\label{Cuvepsuv0}
Let \( \gaussv \) be a standard normal Gaussian vector and \( \BB \) be symmetric positive.
Then with \( \dimA = \tr(\BB) \), \( \vA^{2} = \tr(\BB^{2}) \), and 
\( \supA = \| \BB \|_{\oper} \), it holds for each \( \xx \geq 0 \)
\begin{EQA}
	\P\bigl( \gaussv^{\T} \BB \gaussv > \dimA + 2 \vA \xx^{1/2} + 2 \supA \xx \bigr)
	& \leq &
	\ex^{-\xx} .
\label{Pxiv2dimAvp12}
\end{EQA}
This implies for any positive \( \BB \) 
\begin{EQA}
	\P\bigl( \| \BB^{1/2} \gaussv \| > \dimA^{1/2} + (2 \supA \xx)^{1/2} \bigr)
	& \leq &
	\ex^{-\xx} .
\label{Pxiv2dimAxx12}
\end{EQA}
Also
\begin{EQA}
	\P\bigl( \gaussv^{\T} \BB \gaussv < \dimA - 2 \vA \xx^{1/2} \bigr)
	& \leq &
	\ex^{-\xx} .
\label{Pxiv2dimAvp12m}
\end{EQA}
If \( \BB \) is symmetric but non necessarily positive then
\begin{EQA}
	\P\bigl( \bigl| \gaussv^{\T} \BB \gaussv - \dimA \bigr| > 2 \vA \xx^{1/2} + 2 \supA \xx \bigr)
	& \leq &
	2 \ex^{-\xx} .
\label{PxivTBBdimA2vp}
\end{EQA}
\end{theorem}

{
\begin{proof}
Normalisation by \( \supA \) reduces the statement to the case with \( \supA = 1 \).
Further, the standard rotating arguments allow to reduce the Gaussian quadratic form 
\( \| \gaussv \|^{2} \) to the chi-squared form:
\begin{EQA}
	\gaussv^{\T} \BB \gaussv
	&=&
	\sum_{j=1}^{\dimp} \lambda_{j} \nu_{j}^{2}
\label{xiv2sj1p}
\end{EQA}
with independent standard normal r.v.'s \( \nu_{j} \).
Here \( \lambda_{j} \in [0,1] \) are eigenvalues of \( \BB \), and 
\( \dimA = \lambda_{1} + \ldots + \lambda_{\dimp} \), 
\( \vA^{2} = \lambda_{1}^{2} + \ldots + \lambda_{\dimp}^{2} \).
One can easily 
compute the exponential moment of \( (\gaussv^{\T} \BB \gaussv - \dimA)/2 \):
for each positive \( \mu < 1 \)
\begin{EQA}
	\log \E \exp\bigl\{ \mu (\gaussv^{\T} \BB \gaussv - \dimA)/2 \bigr\}
	&=&
	\frac{1}{2} \sum_{j=1}^{\dimp} \bigl\{ - \mu \lambda_{j} - \log(1 - \mu \lambda_{j}) \bigr\} .
\label{lEemux2p2}
\end{EQA}

\begin{lemma}
Let \( \mu \lambda_{j} < 1 \) and \( \lambda_{j} \leq 1 \).
Then 
\begin{EQA}
	\frac{1}{2} \sum_{j=1}^{\dimp} \bigl\{ - \mu \lambda_{j} - \log(1 - \mu \lambda_{j}) \bigr\}
	& \leq &
	\frac{\mu^{2} \vA^{2}}{4 (1 - \mu)} \, .
\label{jmu2v221mu}
\end{EQA}
\end{lemma}

\begin{proof}
In view of \( \mu \lambda_{j} < 1 \), it holds for every \( j \)
\begin{EQA}
	- \mu \lambda_{j} - \log(1 - \mu \lambda_{j}) 
	&=&
	\sum_{k=2}^{\infty} \frac{(\mu \lambda_{j})^{k}}{k}
	\\
	& \leq &
	\frac{(\mu \lambda_{j})^{2}}{2}
	\sum_{k=0}^{\infty} (\mu \lambda_{j})^{k}
	\leq 
	\frac{(\mu \lambda_{j})^{2}}{2 (1 - \mu \lambda_{j})} 
	\leq 
	\frac{(\mu \lambda_{j})^{2}}{2 (1 - \mu)},
\label{jmu2v221mup}
\end{EQA}
and thus
\begin{EQA}
	\frac{1}{2} \sum_{j=1}^{\dimp} \bigl\{ - \mu \lambda_{j} - \log(1 - \mu \lambda_{j}) \bigr\}
	& \leq &
	\sum_{j=1}^{\dimp} \frac{(\mu \lambda_{j})^{2}}{4 (1 - \mu)} 
	\leq 
	\frac{\mu^{2} \vA^{2}}{4 (1 - \mu)} \, .
\label{sjmu2v221mu}
\end{EQA}
\end{proof}
The next technical lemma is helpful.

\begin{lemma}
\label{Lmuvpxx}
For each \( \vA > 0 \) and \( \xx > 0 \), it holds
\begin{EQA}
	\inf_{\mu > 0} \biggl\{ 
		- \mu \bigl( \vA \xx^{1/2} + \xx \bigr) + \frac{\mu^{2} \vA^{2}}{4 (1 - \mu)} 
	\biggr\}
	& \leq &
	- \xx .
\label{infmuxxvp}
\end{EQA}
\end{lemma}

\begin{proof}
Let pick up 
\begin{EQA}
	\mu 
	&=& 
	1 - \frac{1}{2\xx^{1/2}/\vA + 1} = \frac{\xx^{1/2}}{\xx^{1/2} + \vA/2} , 
\label{mu12xx12vp1m1}
\end{EQA}
so that \( \mu / (1 - \mu) = 2 \xx^{1/2}/\vA \). Then
\begin{EQA}
	&& \nquad
	- \mu \bigl( \vA \xx^{1/2} + \xx \bigr) + \frac{\mu^{2} \vA^{2}}{4 (1 - \mu)}
	\\
	&=&
	- \mu \bigl( \vA \xx^{1/2} + \xx + \vA^{2}/4 \bigr)
	+ \frac{\mu \vA^{2}}{4 (1 - \mu)}
	\\
	&=&
	- \frac{\xx^{1/2}}{\xx^{1/2} + \vA/2} \bigl( \xx^{1/2} + \vA/2 \bigr)^{2} 
	+ \frac{2 \xx^{1/2} \vA }{4}
	=
	- \xx 
\label{mux2xv4x12}
\end{EQA}
and the result follows.
\end{proof}

Now we apply the Markov inequality 
\begin{EQA}
	&& \nquad
	\log \P\bigl( \gaussv^{\T} \BB \gaussv > \dimA + 2 \vA \xx^{1/2} + 2 \xx \bigr)
	=
	\log \P\bigl( (\gaussv^{\T} \BB \gaussv - \dimA) / 2 > \vA \xx^{1/2} + \xx \bigr)
	\\
	& \leq &
	\inf_{\mu > 0} \biggl\{ 
		- \mu \bigl( \vA \xx^{1/2} + \xx \bigr) 
		+ \log\E \exp\bigl\{ \mu (\gaussv^{\T} \BB \gaussv - \dimA)/2 \bigr\}
	\biggr\}
	\\
	& \leq &
	\inf_{\mu > 0} \biggl\{ 
		- \mu \bigl( \vA \xx^{1/2} + \xx \bigr) + \frac{\mu^{2} \vA^{2}}{4 (1 - \mu)}
	\biggr\}
	\leq 
	- \xx
\label{x2xv4x12}
\end{EQA}
and the first assertion \eqref{Pxiv2dimAvp12} follows.
The second statement follows from the first one by 
\( \tr(\BB^{2}) \leq \| \BB \|_{\oper} \tr(\BB) = \supA \, \dimA \).

Similarly for any \( \mu > 0 \)
\begin{EQA}
	\P\bigl( \gaussv^{\T} \BB \gaussv - \dimA < - 2 \vA \sqrt{\xx} \bigr)
	& \leq &
	\exp\bigl( - \mu \vA \sqrt{\xx} \bigr)
	\E \exp\Bigl( - \frac{\mu}{2} (\gaussv^{\T} \BB \gaussv - \dimA) \Bigr) .
\end{EQA}
By \eqref{lEemux2p2}
\begin{EQA}
	\log \E \exp\bigl\{ - \mu (\gaussv^{\T} \BB \gaussv - \dimA)/2 \bigr\}
	&=&
	\frac{1}{2} \sum_{j=1}^{\dimp} 
		\bigl\{ \mu \lambda_{j} - \log(1 + \mu \lambda_{j}) \bigr\} .
\label{lEemux2p2m}
\end{EQA}
and 
\begin{EQA}
	\frac{1}{2} \sum_{j=1}^{\dimp} \bigl\{ \mu \lambda_{j} - \log(1 + \mu \lambda_{j}) \bigr\}
	&=&
	\frac{1}{2} \sum_{j=1}^{\dimp} \sum_{k=2}^{\infty} \frac{(- \mu \lambda_{j})^{k}}{k}
	\leq 
	\sum_{j=1}^{\dimp}\frac{(\mu \lambda_{j})^{2}}{4} 
	=
	\frac{\mu^{2} \vA^{2}}{4} .
\label{jmu2v221mum}
\end{EQA}
Here the choice \( \mu = 2 \sqrt{\xx} / \vA \) yields \eqref{Pxiv2dimAvp12m}.

One can put together the arguments used for obtaining the lower and the upper bound 
for getting a bound for a general 
quadratic form \( \gaussv^{\T} \BB \gaussv \), where \( \BB \) is symmetric but not necessarily 
positive.
\end{proof}
}
%
%

Finally we apply this result to weighted sums of centered \( \gauss_{i}^{2} \).
\begin{corollary}
\label{Cuvepsuv}
For any unit vector \( \uv = (u_{i}) \in \R^{n} \) 
and standard normal r.v.'s \( \gauss_{i} \), it holds with 
\( \| \uv \|_{\infty} \eqdef \max_{i} |u_{i}| \)
\begin{EQA}
	\P\biggl( 
		\biggl| \sum_{i=1}^{n} u_{i} (\gauss_{i}^{2} - 1) \biggr| 
		\geq  
		2 \xx^{1/2} + 2 \| \uv \|_{\infty} \xx
	\biggr)
	& \leq &
	2 \ex^{-\xx} .
\label{Puitei2m1v46x}
\label{Puitei2m1v46xx}
\end{EQA}
\end{corollary}

\begin{proof}
The statement follows directly from Theorem~\ref{Cuvepsuv0}.
It suffices to notice \( \vA^{2} = \| \uv \|^{2} = 1 \).
\end{proof}

As a special case, we present a bound for the chi-squared distribution 
corresponding to \( \BB = \Id_{\dimp} \).
Then \( \tr (\BB) = \dimp \), \( \tr(\BB^{2}) = \dimp \) and \( \supA(\BB) = 1 \).

\begin{corollary}
\label{Cchi2p}
Let \( \gaussv \) be a standard normal vector in \( \R^{\dimp} \).
Then
\begin{EQA}[lcl]
\label{Pxi2pm2px}
	\P\bigl( \| \gaussv \|^{2} \geq \dimp + 2 \sqrt{\dimp \xx} + 2 \xx \bigr)
	& \leq &
	\ex^{-\xx},
	\\
	\P\bigl( \| \gaussv \| \,\,  \geq \sqrt{\dimp} + \sqrt{2 \xx} \bigr)
	& \leq &
	\ex^{-\xx} ,
\label{Pxi2pm2px12}
	\\
	\P\bigl( \| \gaussv \|^{2} \leq \dimp - 2 \sqrt{\dimp \xx} \bigr)
	& \leq &
	\ex^{-\xx}	.
\label{Pxi2pm2px22}
\end{EQA}
\end{corollary} 

The previous results are mainly stated for a standard Gaussian vector \( \gaussv \in \R^{n} \).
Now we extend it to the case of a zero mean Gaussian vector \( \xiv \) with the 
\( n \times n \) covariance matrix \( \Covm = (\covm_{ij}) \) with 
\( \lambda_{\max}(\Covm) \leq \supAB \).
Given a unit vector \( \uv = (u_{1},\ldots,u_{n})^{\T} \in \R^{n} \), consider the quadratic form
\begin{EQA}
	\GQF
	&=&
	\sum_{i=1}^{n} u_{i} \xi_{i}^{2} .
\label{S1nuiei2d}
\end{EQA}
We aim at bounding \( \GQF - \E \GQF \).
To apply the result of Theorem~\ref{Cuvepsuv0} represent \( \GQF \)
as \( \gaussv^{\T} \BB \gaussv \) with \( \BB \) depending on 
\( \uv \) and \( \Covm \).
More precisely, let \( \xiv = \Covm^{1/2} \gaussv \) for a standard Gaussian vector 
\( \gaussv \in \R^{n} \).
Then with \( \Uv = \diag(u_{1},\ldots,u_{n}) \), it holds
\begin{EQA}
	S
	&=&
	\tr\bigl( \Uv \xiv \xiv^{\T} \bigr)
	=
	\tr\bigl( \Uv \Covm^{1/2} \gaussv \gaussv^{\T} \Covm^{1/2} \bigr)
	=
	\tr\bigl( \BB \gaussv \gaussv^{\T} \bigr)
	=
	\gaussv^{\T} \BB \gaussv
\label{•}
\end{EQA}
with \( \BB = \Covm^{1/2} \Uv \Covm^{1/2} \).
Therefore, the bound \( \| \Covm \|_{\oper} \leq \supAB \) implies 
\begin{EQA}
	\supA
	&=&
	\supA(\BB)
	=
	\| \Covm^{1/2} \Uv \Covm^{1/2} \|_{\oper}
	\leq 
	\supAB \, \| \uv \|_{\infty} \, ,
	\\
	\vA^{2}
	&=&
	\tr(\BB^{2})
	=
	\tr\bigl( \Covm^{1/2} \Uv \Covm \Uv \Covm^{1/2} \bigr)
	\leq 
	\supAB \tr\bigl( \Uv \Covm \Uv \bigr)
	\leq 
	{\supAB}^{2} \| \uv \|^{2} = {\supAB}^{2}.
\label{supAvp2BB2}
\end{EQA}
Now the general results of Theorem~\ref{Cuvepsuv0} implies the result 
similar to Corollary~\ref{Cuvepsuv}.
 
\begin{corollary}
\label{CuvepsuvnG}
For any unit vector \( \uv = (u_{i}) \in \R^{n} \), \( \| \uv \| = 1 \),
and normal zero mean vector \( \xiv \sim \ND(0,\Covm) \) in \( \R^{n} \) with 
\( \| \Covm \|_{\oper} \leq \supAB \), it holds 
\begin{EQA}
	\P\biggl( 
		\biggl| \sum_{i=1}^{n} u_{i} (\xi_{i}^{2} - \E \xi_{i}^{2}) \biggr| 
		\geq  
		2 \supAB \, \xx^{1/2} + 2 \supAB \, \| \uv \|_{\infty} \xx
	\biggr)
	& \leq &
	2 \ex^{-\xx} .
\label{Puitei2m1v46xnG}
\end{EQA}
\end{corollary}

It is worth noting that the identity \( \| \uv \| = 1 \) implies 
\( \| \uv \|_{\infty} \leq 1 \).
Moreover, in typical situations,
\( \| \uv \|_{\infty} \asymp n^{-1/2} \), and the leading term in the bounds of 
Corollaries~\ref{Cuvepsuv} and \ref{CuvepsuvnG} is \( 2 \supAB \, \xx^{1/2} \).

\Section{Deviation bounds for non-Gaussian quadratic forms}
\label{SdevboundnonGauss}
This section presents an extension of the results obtained for Gaussian quadratic forms
to the non-Gaussian case.

\Subsection{Deviation bounds for the norm of a standardized non-Gaussian vector}
The bounds of Corollary~\ref{Cchi2p} heavily use normality of the vector \( \xiv \).
This section extends the upper bound \eqref{Pxi2pm2px} to the case when  
\( \xiv \) has some exponential moments.
More exactly, suppose  for some fixed \( \gm > 0 \) that 
\begin{EQA}[c]
    \log \E \exp\bigl( \gammav^{\T} \xiv \bigr)
    \le
    \| \gammav \|^{2}/2,
    \qquad
    \gammav \in \R^{\dimp}, \, \| \gammav \| \le \gm .
\label{expgamgm}
\end{EQA}
For ease of presentation, assume below that \( \gm \) is sufficiently large, namely, 
\( 0.3 \gm \ge \sqrt{\dimp} \).
In typical examples of an i.i.d. sample, \( \gm \asymp \sqrt{n} \).
Define
\begin{EQA}
	\xxc
	& \eqdef &
	\gm^{2}/4,
	\\
	\zqc^{2}
	& \eqdef &
	\dimp + \sqrt{\dimp \gm^{2}} + \gm^{2}/2 
	=
	\gm^{2} \bigl( 1/2 + \sqrt{\dimp/\gm^{2}} + \dimp/\gm^{2} \bigr),
	\\
	\gmc
	& \eqdef &
	\frac{\gm \, \bigl( 1/2 + \sqrt{\dimp/\gm^{2}} + \dimp/\gm^{2} \bigr)^{1/2}}
		 {1 + \sqrt{\dimp/\gm^{2}}} .
\label{xxcgm24mucyyc2}
\end{EQA}
Note that with \( \alp = \sqrt{\dimp / \gm^{2}} \leq 0.3 \), one has
\begin{EQA}
	\zqc^{2}
	&=&
	\gm^{2} \bigl( 1/2 + \alp + \alp^{2} \bigr),
	\\
	\gmc
	&=&
	\gm \,\, \frac{\bigl( 1/2 + \alp + \alp^{2} \bigr)^{1/2}}{1 + \alp}
\label{zqcgmcalp2}
\end{EQA}
so that \( \zqc^{2} / \gm^{2} \in [1/2,1] \) and \( \gmc^{2} / \gm^{2} \in [1/2,1] \).

\begin{theorem}
\label{LLbrevelocro}   
Let \eqref{expgamgm} hold and 
\( 0.3 \gm \ge \sqrt{\dimp} \).
Then
for each \( \xx > 0 \)
\begin{EQA}
    \P\bigl( \| \xiv \| \ge \zq(\dimp,\xx) \bigr)
    & \le &
    2 \ex^{-\xx} + 8.4 \ex^{-\xxc } \Ind(\xx < \xxc) ,
\label{PxivbzzBBro}
\end{EQA}    
where \( \zq(\dimp,\xx) \) is defined by
\begin{EQA}
\label{PzzxxpBro}
    \zq(\dimp,\xx)
    & \eqdef &
    \begin{cases}
        \bigl( \dimp + 2 \sqrt{\dimp \xx} + 2 \xx\bigr)^{1/2}, &  \xx \le \xxc  , \\
        \zqc + 2 \gmc^{-1} (\xx - \xxc)   , & \xx > \xxc .
    \end{cases}
\label{zzxxppdBlro}
\end{EQA}    
\end{theorem}

Depending on the value \( \xx \), we have two types of tail behavior of the 
quadratic form \( \| \xiv \|^{2} \). 
For \( \xx \le \xxc = \gm^{2}/4 \), we have the same deviation bounds as in the Gaussian case
with the extra-factor two in the deviation probability.
Remind that one can use a simplified expression 
\( \bigl( \dimp + 2 \sqrt{\dimp \xx} + 2 \xx\bigr)^{1/2} \leq \sqrt{\dimp} + \sqrt{2 \xx} \).
For \( \xx > \xxc \), we switch to the special regime driven by the exponential moment
condition \eqref{expgamgm}.
Usually \( \gm^{2} \) is a large number (of order \( n \) in the i.i.d. setup) and 
the second term in \eqref{PxivbzzBBro} can be simply ignored. 

The main step of the proof is the following exponential bound.
\begin{lemma}
\label{Lexpxiv}
Suppose \eqref{expgamgm}.
For any \( \mu < 1 \) with
\( \gm^{2} > \dimp \mu \), it holds 
\begin{EQA}
\label{Eexp2xi}
    \E \exp\Bigl( \frac{\mu \| \xiv \|^{2}}{2} \Bigr)
        \Ind\Bigl( \| \xiv \| \le \gm/\mu - \sqrt{\dimp/\mu} \Bigr)
    & \le &
    2 (1 - \mu)^{-\dimp/2} .
\end{EQA}
\end{lemma}

\begin{proof}
Let \( \varepsilonv \) be a standard normal vector in \( \R^{\dimp} \) and
\( \uv \in \R^{\dimp} \).
The bound \( \P\bigl( \| \varepsilonv \|^{2} > \dimp \bigr) \le 1/2 \) and
the triangle inequality imply
for any vector \( \uv \) and any \( \rr \) with \( \rr \ge \| \uv \| + \dimp^{1/2} \) that
\( \P\bigl( \| \uv + \varepsilonv \| \le \rr \bigr) \ge 1/2 \).
Let us fix some \( \xiv \) with  \( \| \xiv \| \le \gm/\mu - \sqrt{\dimp/\mu} \)
and denote by \( \P_{\xiv} \) the conditional probability given \( \xiv \).
The previous arguments yield:
\begin{EQA}[c]
\P_{\xiv}\bigl( \| \varepsilonv + \mu^{1/2} \xiv \| \le \mu^{-1/2} \gm \bigr) \ge 0.5.
\end{EQA}
It holds with \( c_{p} = (2\pi)^{-\dimp/2} \)
\begin{EQA}
    && \nquad
    c_{p} \int \exp\Bigl( \gammav^{\T} \xiv - \frac{\| \gammav \|^{2}}{2 \mu}  \Bigr)
        \Ind(\| \gammav \| \le \gm) d\gammav
    \\
    &=&
    c_{p} \exp\bigl( \mu \| \xiv \|^{2} / 2 \bigr)
    \int \exp\Bigl(
        - \frac{1}{2}  \bigl\| \mu^{-1/2} \gammav - \mu^{1/2} \xiv \bigr\|^{2}
    \Bigr) \Ind(\mu^{-1/2} \| \gammav \| \le \mu^{-1/2} \gm) d\gammav
    \\
    & = &
    \mu^{\dimp/2} \exp\bigl( \mu \| \xiv \|^{2} / 2 \bigr)
    \P_{\xiv}\bigl( \| \varepsilonv + \mu^{1/2} \xiv \| \le \mu^{-1/2} \gm \bigr)
    \\
    & \ge &
    0.5 \mu^{\dimp/2} \exp\bigl( \mu \| \xiv \|^{2} / 2 \bigr) ,
\label{intggvv}
\end{EQA}
because \( \| \mu^{1/2} \xiv \| + \dimp^{1/2} \le \mu^{-1/2} \gm \).
This implies in view of \( \dimp < \gm^{2}/\mu \) that
\begin{EQA}
    && \nquad
    \exp\bigl( {\mu \| \xiv \|^{2}}/{2} \bigr)
        \Ind\bigl( \| \xiv \|^{2} \le \gm/\mu - \sqrt{\dimp/\mu} \bigr)
    \\
    & \le &
    2 \mu^{-\dimp/2} c_{p}
    \int \exp\Bigl( \gammav^{\T} \xiv - \frac{\| \gammav \|^{2}}{2\mu} \Bigr)
        \Ind(\| \gammav \| \le \gm) d\gammav .
\label{expxiv1cp}
\end{EQA}
Further, by \eqref{expgamgm}
\begin{EQA}
    &&
    \nquad
    c_{p} \E \int \exp\Bigl( \gammav^{\T} \xiv - \frac{1}{2\mu} \| \gammav \|^{2} \Bigr)
        \Ind(\| \gammav \| \le \gm) d\gammav
    \\
    & \le &
    c_{p} \int \exp\Bigl( - \frac{\mu^{-1} - 1}{2} \| \gammav \|^{2} \Bigr)
        \Ind(\| \gammav \| \le \gm) d\gammav
    \\
    & \le &
    c_{p} \int \exp\Bigl( - \frac{\mu^{-1} - 1}{2} \| \gammav \|^{2} \Bigr) d \gammav
    \\
    & \le &
    (\mu^{-1} - 1)^{- \dimp/2}
\label{nununu}
\end{EQA}
and \eqref{Eexp2xi} follows.
\end{proof}

Due to this result, the scaled squared norm \( \mu \| \xiv \|^{2}/2 \) after a proper
truncation possesses the same exponential moments as in the Gaussian case.
A straightforward implication is the probability bound
\( \P\bigl( \| \xiv \|^{2} > \dimp + u \bigr) \) 
with \( u = 2 \sqrt{\dimp \xx} + 2 \xx \).
Namely, given \( \xx \), define 
\begin{EQA}
	\mu 
	&=& 
	\mu(\xx)
	=
	\frac{1}{1 + 0.5\sqrt{\dimp/\xx}} \, . 
\label{mumuxxgm2}
\end{EQA}
Also define for \( \xxc = \gm^{2}/4 \)
\begin{EQA}
	\muc
	& \eqdef &
	\mu(\xxc)
	=
	\frac{1}{1 + \sqrt{\dimp/\gm^{2}}} \, .
\label{mucdef1alp}
\end{EQA}
Obviously, \( \mu \leq \muc \) for \( \xx \leq \xxc \). 
Now we obtain similarly to the Gaussian case in Lemma~\ref{Lmuvpxx}  
for \( u = 2 \sqrt{\dimp \xx} + 2 \xx \)
\begin{EQA}
    && \nquad
    \P\Bigl( \| \xiv \|^{2} > \dimp + u, \, \| \xiv \| \le \gm/\mu - \sqrt{\dimp/\mu} \Bigr)
    \\
    & \le &
    \exp\Bigl\{ - \frac{\mu (\dimp + u)}{2} \Bigr\}
    \E\exp \Bigl( \frac{\mu \| \xiv \|^{2}}{2} \Bigr)
    \Ind\Bigl( \| \xiv \| \le \gm/\mu - \sqrt{\dimp/\mu} \Bigr)
    \\
    & \le &
    2 \exp\Bigl\{ - \frac{1}{2} \bigl[ \mu (\dimp + u) + \dimp \log( 1 - \mu ) \bigr]
    \Bigr\} 
\label{logPmunu0}
\end{EQA}
and by \eqref{mux2xv4x12} with \( \vp^{2} = \dimp \), it holds for \( \mu \) from \eqref{mumuxxgm2}
\begin{EQA}
	\mu (\dimp + 2 \sqrt{\dimp \xx} + 2 \xx) + \dimp \log( 1 - \mu )
	& \geq &
	2 \xx .
\label{mudp2a}
\end{EQA}
Now we show that the constraint \( \| \xiv \| \le \gm/\mu - \sqrt{\dimp/\mu} \) in \eqref{logPmunu0}
can be replaced by the inequality \( \| \xiv \| \leq \zqc \). 

\begin{lemma}
\label{Lmucyyc}
Let \( 0.3 \gm \ge \sqrt{\dimp} \), \( \xx \leq \xxc = \gm^{2}/4 \), 
and \( \mu = 1/(1 + 0.5\sqrt{\dimp/\xx}) \).
Then 
\begin{EQA}
	\dimp + 2 \sqrt{\dimp \xx} + 2 \xx
	& \leq &
	\dimp + 2 \sqrt{\dimp \xxc} + 2 \xxc ,
	\\ 
	\gm/\mu - \sqrt{\dimp/\mu} 
	& \geq & 
	\gm/\muc - \sqrt{\dimp/\muc} ,
	\\
	\dimp + 2 \sqrt{\dimp \xxc} + 2 \xxc
	& \leq &
	\bigl( \gm/\muc - \sqrt{\dimp/\muc} \bigr)^{2} .
\label{dimp2x2xgm2}
\end{EQA}
\end{lemma}
\begin{proof}
The definition implies \( \mu \leq \muc \) for \( \xx \leq \xxc \) and thus 
the first two inequalities of the lemma are obvious.
Therefore, it remains to check \eqref{dimp2x2xgm2}.
Denote \( \alp^{2} = \dimp / \gm^{2} \). 
Then \( \muc^{-1} = 1 + \alp \) and
\begin{EQA}
	\gm/\muc - \sqrt{\dimp/\muc}
	&=&
	\muc^{-1} \gm \bigl( 1 - \sqrt{\muc \alp^{2}} \bigr)
	=
	\gm \, (1 + \alp) \, \bigl\{  1 - \sqrt{\alp^{2}/(1 + \alp)} \bigr\} .
\label{gmcmcm1a}
\end{EQA}
For  \( \xxc = \gm^{2}/4 \), it holds 
\begin{EQA}
	\dimp + 2 \sqrt{\dimp \xxc} + 2 \xxc
	& = &
	\dimp + \sqrt{\dimp \gm^{2}} + \gm^{2}/2
	= 
	\gm^{2} \bigl( \alp^{2} + \alp + 1/2 \bigr) .
\label{d22xdsdg2a}
\end{EQA}
Direct calculus shows that for \( \alp \leq 0.3 \) one can bound  
\begin{EQA}
	\alp^{2} + \alp + 1/2
	& \leq &
	(1 + \alp)^{2} \Bigl\{ 1 - \sqrt{\alp^{2}/(1 + \alp)} \Bigr\}^{2}
\label{a21a21a21a}
\end{EQA}
and this proves \eqref{dimp2x2xgm2}.
\end{proof}

We conclude from this lemma, \eqref{logPmunu0} and \eqref{mudp2a} that  
\begin{EQA}
    \P\bigl( \| \xiv \|^{2} > \dimp + 2 \sqrt{\dimp \xx} + 2 \xx, \| \xiv \| \le \zqc \bigr)
    & \leq &
    2 \ex^{-\xx} .
\label{logPmunu}
\end{EQA}
If \eqref{expgamgm} holds with \( \gm = \infty \), then we are back in the (sub-)Gaussian case 
with \( \zqc = \infty \).
In the non-Gaussian case with a finite \( \gm \), we have to accompany the moderate
deviation bound with a large deviation bound
\( \P\bigl( \| \xiv \| > \zq \bigr) \) for \( \zq \ge \zqc \).
This is done by combining the bound \eqref{Eexp2xi} with
the standard slicing arguments.

\begin{lemma}
\label{Lexpxi2}
Define \( \gmc = \muc \zqc \); see \eqref{mucdef1alp}.
It holds for \( \zq \ge \zqc \)
\begin{EQA}
\label{Pexp2xi}
    \P\bigl( \| \xiv \| > \zq \bigr)
    & \le &
    8.4 (1 - \gmc/\zq)^{-\dimp/2} \exp\bigl( - \gmc \zq/2 \bigr)
    \\
    & \le &
    8.4 \exp\bigl\{ - \xxc -  \gmc (\zq - \zqc)/2 \bigr\}.
\label{Pexp2xiy}
\end{EQA}
\end{lemma}

\begin{proof}
For a fixed \( \zq \geq \zqc \),
consider the growing sequence \( (\yy_{k}) \) with  \( \yy_{1} = \zq \)
and 
\begin{EQA}
	\yy_{k+1} 
	&=& 
	\zq + k / \gmc.
\label{ykp1ykg}
\end{EQA}
Define also \( \mu_{k} = \gmc/ \yy_{k} \).
Then the sequence \( (\mu_{k}) \) is decreasing,
in particular, \( \mu_{k} \le \mu_{1} = \gmc / \zq \leq \muc \).
Obviously
\begin{EQA}
    \P\bigl( \| \xiv \| > \zq \bigr)
    &=&
    \sum_{k=1}^{\infty}
        \P\bigl( \| \xiv \| > \yy_{k}, \| \xiv \| \le \yy_{k+1}
        \bigr).
\label{zzkkp1}
\end{EQA}
Now we try to evaluate every slicing probability in this expression.
We use that
\begin{EQA}
    \mu_{k+1} \yy_{k}^{2}
    &=&
    \frac{(\gmc \zq  + k - 1)^{2}}{\gmc \zq  + k}
    \ge
    \gmc \zq + k - 2 .
\label{nukk1}
\end{EQA}
Lemma~\ref{Lmucyyc} implies \( \gm - \sqrt{\muc \dimp} \geq \muc \zqc = \gmc \).
This yields
\( \gm/\mu_{k} - \sqrt{\dimp / \mu_{k}} \ge \yy_{k} \) because
\begin{EQA}[c]
    \gm/\mu_{k} - \sqrt{\dimp / \mu_{k}} - \yy_{k}
    =
    \mu_{k}^{-1} (\gm - \sqrt{\mu_{k} \dimp} - \gmc)
    \geq 
    \mu_{k}^{-1} (\gm - \sqrt{\muc \dimp} - \gmc)
    \ge 0 .
\label{gmgmmyy}
\end{EQA}
Hence by \eqref{Eexp2xi}
\begin{EQA}
    \P\Bigl( \| \xiv \| > \zq \Bigr)
    & = &
    \sum_{k=1}^{\infty}
     \P\Bigl(
        \| \xiv \| > \yy_{k},
        \| \xiv \| \le \yy_{k+1}
     \Bigr)
    \\
    & \le &
    \sum_{k=1}^{\infty} \exp\Bigl( - \frac{\mu_{k+1} \yy_{k}^{2}}{2} \Bigr)
    \E \exp\Bigl( \frac{\mu_{k+1} \| \xiv \|^{2}}{2} \Bigr)
     \Ind\biggl( \| \xiv \| \le \frac{\gm}{\mu_{k+1}} - \sqrt{\frac{\dimp}{\mu_{k+1}}} \biggr)
    \\
    & \le &
    \sum_{k=1}^{\infty} 2 \bigl( 1 - \mu_{k+1} \bigr)^{-\dimp/2}
    \exp\Bigl( - \frac{\mu_{k+1} \yy_{k}^{2}}{2} \Bigr)
    \\
    & \le &
    2 \bigl( 1 - \mu_{1} \bigr)^{- \dimp/2}
    \sum_{k=1}^{\infty} \exp\Bigl( - \frac{\gmc \zq + k - 2}{2} \Bigr)
    \\
    & = &
    2 \ex^{1/2} (1 - \ex^{-1/2})^{-1}
    (1 - \mu_{1})^{-\dimp/2} \exp\bigl( - \gmc \zq/2 \bigr)
    \\
    & \le &
    8.4 (1 - \gmc/\zq)^{-\dimp/2} \exp\bigl( - \gmc \zq/2 \bigr)
\label{Psumnuk}
\end{EQA}
and the assertion \eqref{Pexp2xi} follows.
For  \( \zq = \zqc \), it holds by \eqref{mudp2a}
\begin{EQA}[c]
    \gmc \zqc + \dimp \log(1 - \muc)
    =
    \muc \zqc^{2} + \dimp \log(1 - \muc)
    \geq 
    2 \xxc
\label{gmcyyc2xxc}
\end{EQA}
and \eqref{Pexp2xi} implies
\( \P\bigl( \| \xiv \| > \zqc \bigr) \le 8.4 \exp(- \xxc) \).
Now observe that the function
\( f(\zq) = \gmc \zq/2 + (\dimp/2) \log \bigl( 1 - \gmc/\zq \bigr) \)
fulfills \( f(\zqc) = \xxc \) and \( f'(\zq) \ge \gmc/2 \) yielding
\( f(\zq) \ge \xxc + \gmc (\zq - \yyd)/2 \).
This implies \eqref{Pexp2xiy}.
\end{proof}

Now we can conclude that for \( \xx \geq \xxc \), the choice
\begin{EQA}
	\zq 
	=
	\zq(\xx)
	& = &
	2 \gmc^{-1} (\xx - \xxc) + \zqc 
\label{yy2gm1xmxd}
\end{EQA}
implies 
\begin{EQA}
	\P\bigl( \| \xiv \| > \zq(\xx) \bigr)
	& \leq &
	8.4 \ex^{-\xx} .
\label{Pxiy2exf}
\end{EQA}
The statement of the theorem is obtained by a simple combination of \eqref{mudp2a} and 
\eqref{Pxiy2exf}.


\Subsection{A deviation bound for a general non-Gaussian quadratic form }
This section presents a bound for a quadratic form \( \xiv^{\T} \BB \xiv \), where 
\( \xiv \) satisfies \eqref{expgamgm} and \( \BB \) is a given symmetric positive 
\( \dimp \times \dimp \) matrix.
%
Define 
\begin{EQA}[c]
    \dimB
    \eqdef
    \tr \bigl( \BB \bigr) ,
    \qquad 
    \vpB^{2}
    \eqdef
    \tr(\BB^{2}) ,
    \qquad
    \lambdaB \eqdef \lambda_{\max}\bigl( \BB \bigr). 
\label{BBrddB}
\end{EQA}   
For ease of presentation, 
suppose that \( 0.3 \gm \ge \sqrt{\dimA} \) so that 
\( \alp = \sqrt{\dimA / \gmb^{2}} \leq 0.3 \).
The other case only changes the constants in the inequalities. 
Define also
\begin{EQA}
	\xxc
	& \eqdef &
	\gm^{2}/4,
	\\
	\zqc^{2}
	& \eqdef &
	\dimA + \vp \gm + \supA \gmb^{2}/2 ,
	\\
	\gmc
	& \eqdef &
	\frac{ \sqrt{\dimA/\supA + \gm \vp / \supA + \gm^{2}/2}}{1 + \vp / (\supA \gmb)} .
\label{xxcgm24mucyyc2B}
\end{EQA}

\begin{theorem}
\label{LLbrevelocroB}   
Let \eqref{expgamgm} hold and 
\( 0.3 \gm \ge \sqrt{\dimA/\supA} \).
Then
for each \( \xx > 0 \)
\begin{EQA}
    \P\bigl( \| \BB^{1/2} \xiv \| \ge \zq(\BB,\xx) \bigr)
    & \le &
    2 \ex^{-\xx} + 8.4 \ex^{-\xxc} \Ind(\xx < \xxc) ,
\label{PxivbzzBBroB}
\end{EQA}    
where \( \zq(\BB,\xx) \) is defined by
\begin{EQA}
\label{PzzxxpBroB}
    \zq(\BB,\xx)
    & \eqdef &
    \begin{cases}
        \sqrt{ \dimA + 2 \vp \xx^{1/2} + 2 \supA \xx }, &  \xx \le \xxc, \\
        \zqc + 2 \lambdaB (\xx - \xxc)/\gmc , & \xx > \xxc.
    \end{cases}
\label{zzxxppdBlroB}
\end{EQA}    
\end{theorem}

Similarly to the Gaussian case, the upper quantile 
\( \zq(\BB,\xx) = \sqrt{ \dimA + 2 \vp \xx^{1/2} + 2 \supA \xx } \) can be upper bounded 
by \( \sqrt{\dimA} + \sqrt{2 \supA \xx} \):
\begin{EQA}
\label{PzzxxpBroBu}
    \zq(\BB,\xx)
    & \leq &
    \begin{cases}
        \sqrt{\dimA} + \sqrt{2 \supA \xx}, &  \xx \le \xxc, \\
        \zqc + 2 \lambdaB (\xx - \xxc)/\gmc , & \xx > \xxc.
    \end{cases}
\end{EQA}    

The main steps of the proof are similar to the proof of Theorem~\ref{LLbrevelocro}.
Normalization by \( \supA \) reduces the statement to the case \( \supA = 1 \)
which we assume below.
Moreover, the standard change-of-basis arguments allow us 
to reduce the problem to the case of a diagonal matrix 
\( \BB = \diag\bigl( a_{1},\ldots,a_{\dimp} \bigr) \), where 
\( 1 = a_{1} \ge a_{2} \ge \ldots\ge a_{\dimp} > 0 \).
Note that \( \dimA = a_{1} + \ldots + a_{\dimp} \) and 
\( \vp^{2} = a_{1}^{2} + \ldots + a_{\dimp}^{2} \).

\begin{lemma}
\label{Lexpxig} 
Suppose \eqref{expgamgm} and \( \| \BB \|_{\oper} = 1 \).
For any \( \mu < 1 \) with
\( \gm^{2}/\mu \ge \dimA \), it holds 
\begin{EQA}
\label{Eexp2xig}
    \E \exp\bigl( {\mu \| \BB^{1/2} \xiv \|^{2}}/{2} \bigr)
        \Ind\bigl( \| \BB \xiv \| \le \gm/\mu - \sqrt{\dimA/\mu} \bigr)
    & \le &
    2 {\det(\Id_{\dimp} - \mu \BB)^{-1/2}} .
\end{EQA}    
\end{lemma}

\begin{proof}
With \( c_{\dimp}(\BB) = \bigl( 2 \pi \bigr)^{-\dimp/2} \det(\BB^{-1/2}) \)
\begin{EQA}
    && \nquad
    c_{p}(\BB) \int \exp\Bigl( \gammav^{\T} \xiv - \frac{1}{2 \mu} \| \BB^{-1/2} \gammav \|^{2} \Bigr)
        \Ind(\| \gammav \| \le \gm) d\gammav
    \\
    &=&
    c_{p}(\BB) \exp\Bigl( \frac{\mu \| \BB^{1/2} \xiv \|^{2}}{2} \Bigr)
    \int \exp\Bigl( 
        - \frac{1}{2}  \bigl\| \mu^{1/2} \BB^{1/2} \xiv - \mu^{-1/2} \BB^{-1/2} \gammav \bigr\|^{2} 
    \Bigr) \Ind(\| \gammav \| \le \gm) d\gammav    
    \\
    &=&
    \mu^{\dimp/2} \exp\Bigl( \frac{\mu \| \BB^{1/2} \xiv \|^{2}}{2} \Bigr) 
    \P_{\xiv}\bigl( 
        \| \mu^{-1/2} \BB^{1/2} \varepsilonv + \BB^{1/2} \xiv \| \le \gm / \mu
    \bigr),
\label{intggvvg}
\end{EQA}
where \( \varepsilonv \) denotes a standard normal vector in \( \R^{\dimp} \)
and \( \P_{\xiv} \) means the conditional probability given \( \xiv \).
Moreover, for any \( \uv \in \R^{\dimp} \) and \( \rr \ge \dimA^{1/2} + \| \uv \| \), 
it holds in view of \( \P \bigl( \| \BB^{1/2} \varepsilonv \|^{2} > \dimA \bigr) \le 1/2 \)
\begin{EQA}
    \P\bigl( \| \BB^{1/2} \varepsilonv - \uv \| \le \rr \bigr)
    & \ge &
    \P\bigl( \| \BB^{1/2} \varepsilonv \| \le \sqrt{\dimA} \bigr)
    \ge 
    1/2 .
\label{Arepsv}
\end{EQA}    
This implies
\begin{EQA}
    && 
    \nquad
    \exp\Bigl( \mu \| \BB^{1/2} \xiv \|^{2} / 2 \Bigr) 
        \Ind\bigl( \| \BB \xiv \| \le \gm / \mu - \sqrt{\dimA/\mu} \bigr)
    \\
    & \le &
    2 \mu^{- \dimp/2} c_{p}(\BB)
    \int \exp\Bigl( \gammav^{\T} \xiv - \frac{1}{2 \mu} \| \BB^{-1/2} \gammav \|^{2} \Bigr)
        \Ind(\| \gammav \| \le \gm) d\gammav .
\label{expxiv1cpg}
\end{EQA}    
Further, by \eqref{expgamgm}
\begin{EQA}
    && 
    \nquad
    c_{p}(\BB) \E \int \exp\Bigl( 
    	\gammav^{\T} \xiv - \frac{1}{2\mu} \| \BB^{-1/2} \gammav \|^{2} 
    \Bigr)
        \Ind(\| \gammav \| \le \gm) d\gammav
    \\
    & \le &
    c_{p}(\BB) \int \exp\Bigl( 
        \frac{\| \gammav \|^{2}}{2} - \frac{1}{2\mu} \| \BB^{-1/2} \gammav \|^{2} 
    \Bigr) d\gammav
    \\
    & \le &
    \det(\BB^{-1/2}) \det(\mu^{-1} \BB^{-1} - \Id_{\dimp})^{-1/2}
    =
    \mu^{p/2} \det( \Id_{\dimp} - \mu \BB)^{-1/2}
\label{nununug}
\end{EQA}    
and \eqref{Eexp2xig} follows.
\end{proof}

Now we evaluate the probability \( \P\bigl( \| \BB^{1/2} \xiv \| > \yy \bigr) \) for moderate 
values of \( \yy \).
Given \( \xx \leq \xxc = \gm^{2}/4 \), define 
\begin{EQA}
	\mu 
	&=& 
	\mu(\xx)
	=
	\frac{1}{1 + 0.5 \vp \xx^{-1/2}} \, ,
\label{mumuxxgm2B}
	\\
	\muc
	& \eqdef &
	\frac{1}{1 + 0.5 \vp \, \xxc^{-1/2}} 
	=
	\frac{1}{1 + \vp / \gmb} \, .
\label{mucdef1alpB}
\end{EQA}
Obviously \( \mu \leq \muc \).
Now we obtain similarly to the Gaussian case in Lemma~\ref{Lmuvpxx}  
for \( u = 2 \vp \sqrt{\xx} + 2 \xx \)
\begin{EQA}
    && \nquad
    \P\Bigl( \| \BB^{1/2} \xiv \|^{2} > \dimA + u, \, \| \xiv \| \le \gm/\mu - \sqrt{\dimA/\mu} \Bigr)
    \\
    & \le &
    \exp\Bigl\{ - \frac{\mu (\dimA + u)}{2} \Bigr\}
    \E\exp \Bigl( \frac{\mu \| \xiv \|^{2}}{2} \Bigr)
    \Ind\Bigl( \| \xiv \| \le \gm/\mu - \sqrt{\dimA/\mu} \Bigr)
    \\
    & \le &
    2 \exp\Bigl\{ - \frac{1}{2} \bigl[ \mu (\dimA + u) 
    	- \log\det( \Id_{\dimp} - \mu \BB) \bigr]
    \Bigr\} 
\label{logPmunu0B}
\end{EQA}
and by \eqref{mux2xv4x12}, it holds for \( \mu \) from \eqref{mumuxxgm2B}
\begin{EQA}
	\mu (\dimA + 2 \vp \sqrt{\xx} + 2 \xx) + \log\det( \Id_{\dimp} - \mu \BB)
	& \geq &
	2 \xx .
\label{mudp2aB}
\end{EQA}
Now we show that the constraint \( \| \xiv \| \le \gm/\mu - \sqrt{\dimA/\mu} \) in \eqref{logPmunu0B}
can be replaced by the inequality \( \| \xiv \| \leq \zqc \). 
Indeed, the definition implies \( \mu \leq \muc \) for \( \xx \leq \xxc \) and
\begin{EQA}
	\dimA + 2 \vp \, \sqrt{\xx} + 2 \xx
	& \leq &
	\dimA + 2 \vp \sqrt{\xxc} + 2 \xxc ,
	\\ 
	\gm/\mu - \sqrt{\dimA/\mu} 
	& \geq & 
	\gm/\muc - \sqrt{\dimA/\muc} .
\label{d2vx2cc}
\end{EQA}
It remains to show that 
%
\begin{EQA}
	\dimA + 2 \vp \sqrt{\xxc} + 2 \xxc
	& \leq &
	\bigl( \gm/\muc - \sqrt{\dimA/\muc} \bigr)^{2} .
\label{dimp2x2xgm2B}
\end{EQA}
Denote \( \alp^{2} = \dimA / \gm^{2} \). 
By \( \vp^{2} \leq \dimA \) and \( \xxc = \gm^{2}/4 \), it holds
\( \muc^{-1} = 1 + 0.5 \vp \xxc^{-1/2} \leq 1 + \alp \) and
\begin{EQA}
	\gm/\muc - \sqrt{\dimA/\muc}
	&=&
	\muc^{-1} \gm \bigl( 1 - \sqrt{\muc \alp^{2}} \bigr)
	\geq 
	\gm \, (1 + \alp) \, \bigl\{  1 - \sqrt{\alp^{2}/(1 + \alp)} \bigr\} .
\label{gmcmcm1aB}
\end{EQA}
Also in a similar way
\begin{EQA}
	\dimA + 2 \vp \sqrt{\xxc} + 2 \xxc
	& \leq &
	\dimA + \sqrt{\dimA \gm^{2}} + \gm^{2}/2
	=  
	\gm^{2} \bigl( \alp^{2} + \alp + 1/2 \bigr) .
\label{d22xdsdg2aB}
\end{EQA}
This and \eqref{a21a21a21a} prove \eqref{dimp2x2xgm2B} yielding
\begin{EQA}
    \P\bigl( \| \BB^{1/2} \xiv \|^{2} > \dimA + 2 \vp \sqrt{\xx} + 2 \xx, \| \xiv \| \le \zqc \bigr)
    & \leq &
    2 \ex^{-\xx} .
\label{logPmunuB}
\end{EQA}
The large deviation probability \( \P\bigl( \| \BB^{1/2} \xiv \| > \yy \bigr) \) for 
\( \yy > \zqc \) can be bounded as in the case \( \BB = \Id_{\dimp} \).

\begin{lemma}
\label{Lexpxi2B}
Define \( \gmc = \muc \zqc \); see \eqref{mucdef1alpB}.
It holds for \( \zq \ge \zqc \)
\begin{EQA}
\label{Pexp2xiB}
    \P\bigl( \| \BB^{1/2} \xiv \| > \zq \bigr)
    & \le &
    8.4 (1 - \gmc/\zq)^{-\dimp/2} \exp\bigl( - \gmc \zq/2 \bigr)
    \\
    & \le &
    8.4 \exp\bigl\{ - \xxc -  \gmc (\zq - \zqc)/2 \bigr\}.
\label{Pexp2xiyB}
\end{EQA}
\end{lemma}

\begin{proof}
The arguments from the case \( \BB \equiv \Id_{\dimp} \) apply without changes.
\end{proof}


\Chapter{Deviation bounds for random processes}
\label{Chgempir}
\label{SlocalBern}
This chapter presents some general results of the theory of empirical processes.
We assume some exponential moment conditions on the increments of the process which 
allow to apply the well developed chaining 
arguments in Orlicz spaces; see e.g. \cite{VW1996}, Chapter~2.2.
\ifbook{}{
We, however, follow the more recent approach inspired by the notions of generic chaining and 
majorizing measures due to M. Talagrand;
The chaining arguments are replaced by the \emph{pilling} device; 
see e.g. \cite{Ta1996,Ta2001,Ta2005}. 
The results are close to that of \cite{Be2006}. 
}
We state the results in a slightly different form and present an independent 
and self-contained proof.

The first result states a bound for local fluctuations of the process 
\( \UP(\ups) \) given on a metric space \( \Ups \).
Then this result will be used for bounding the maximum of the negatively drifted 
process \( \UP(\ups,\upss) \eqdef \UP(\ups) - \UP(\upss) \) over a  
vicinity \( \Upss(\rups) \) of the central point \( \upss \).
The behavior of \( \UP(\ups) \) outside of the local central set \( \Upss(\rups) \) is 
described using the \emph{upper function} method. 
Namely, we construct a deterministic function \( f(\rr,\rups) \) ensuring that 
with probability at least \( 1 - \ex^{-\xx} \) it holds on a dominating set of probability
at least \( 1 - \ex^{-\xx} \) that
\( \UP(\ups,\upss) - f\bigl( \dist(\ups,\upss),\rups \bigr) < 0 \) 
for all \( \ups \not\in \Upss(\rups) \).

%
\Section{Chaining and covering numbers}

An important step in the whole construction is an exponential bound on the  
maximum of a random process \( \UP(\ups) \) under the exponential moment conditions on its 
increments. 
Let \( \dist(\ups,\upsc) \) be a semi-distance on \( \Ups \).
We suppose the following condition to hold: 

\begin{description}
\item[\( \bb{(\CS{d})} \)\label{CSdref}]
    \textit{
    There exist \( \gmb > 0 \), \( \rups > 0 \), 
    \( \nunu \ge 1 \), 
    such that for any  \( \lambda \le \gmb \) and \( \ups,\upsc \in \Ups \)
    with \( \dist(\ups,\upsc) \le \rups \)
    }
\begin{EQA}[c]
\label{ExpboundUP}
    \log \E \exp \biggl\{ 
        \lambda \frac{\UP(\ups) - \UP(\upsc)}{\dist(\ups,\upsc)} 
    \biggr\}
    \le  
    \nunu^{2} \lambda^{2}/2 .
\end{EQA}
\end{description}

By \( \B_{\rr}(\ups) \) we denote the \( \dist \)-ball centered at \( \ups \) of radius 
\( \rr \):
\begin{EQA}
	\B_{\rr}(\ups) 
	& \eqdef &
	\{ \upsc \in \Ups \colon \dist(\ups,\upsc) \le \rr \} .
\label{Bkupsd}
\end{EQA}
Let \( \Upsd \) be a subset of a ball in \( \Ups \) with center at \( \upss \) 
and radius \( \rups \),
and let  
a sequence \( \rr_{k} \) be fixed with \( \rr_{k} = \rr_{0} 2^{-k} \).

For each \( k \), by \( \MM_{k} \) we denote a \( \rr_{k} \)-net in \( \Upsd \), so that 
\begin{EQA}
	\Upsd
	& \subseteq &
	\bigcup_{\ups \in \MM_{k}} \B_{\rr_{k}}(\ups) .
\label{UpsdkBku}
\end{EQA}
Let also \( \Pi_{k} \ups \) be the closest to \( \ups \) point from \( \MM_{k} \), 
so that \( \dist(\ups,\Pi_{k} \ups) \leq \rr_{k} \). 
We assume that \( \MM_{0} \) consists of one point \( \upss \), that is,
\( \Pi_{0} \ups = \upss \). 
Let \( \NN_{k} \eqdef |\MM_{k}| \) denote the cardinality of \( \MM_{k} \).
Finally set
\( c_{k} = 2^{-k} \) for \( k \ge 1 \), and define
the values \( \entrlq(\Upsd) \) and \( \entrlg(\Upsd) \) by 
\begin{EQ}[rcl]
    \entrlq(\Upsd)
    & \eqdef &
    \sum_{k=1}^{\infty} c_{k} \sqrt{2 \log(2 \NN_{k})}
    =
    \sum_{k=1}^{\infty} 2^{-k} \sqrt{2 \log(2 \NN_{k})} ,
    \\
    \entrlg(\Upsd)
    & \eqdef &
    \sum_{k=1}^{\infty} 2 c_{k} \log(2 \NN_{k})
    =
    \sum_{k=1}^{\infty} 2^{-k+1} \log(2 \NN_{k})    .
\label{entrldefch}
\end{EQ}    
By the Cauchy-Schwartz inequality
\( \entrlq^{2}(\Upsd) \leq \entrlg(\Upsd) \).
The inverse relation is not generally true and one can build some examples with 
\( \entrlq(\Upsd) \) finite and \( \entrlg(\Upsd) \) infinite.
If the process \( \UP(\ups) \) is sub-Gaussian and \nameref{CSdref} is fulfilled with 
\( \gm = \infty \), then one can only operate with \( \entrlq(\Upsd) \) which is 
equivalent to the Dudley integral\ifbook{.}{; see \eqref{Dudlyintch} below.}

\begin{theorem}
\label{TUPUpsdch}
Let \( \UP \) be a separable process  
and \( \Upsd \) be a ball in \( \Ups \) with center \( \upsd \) and radius 
\( \rups \) for the distance \( \dist(\cdot,\cdot) \), 
i.e. \( \dist(\ups,\upsd) \le \rups \) for all \( \ups \in \Upsd \).
If \nameref{CSdref} holds with \( \gm = \infty \) then 
for any \( \xx \geq 1/2 \), it holds with \( \entrlq = \entrlq(\Upsd) \) and 
\( \entrlg = \entrlg(\Upsd) \)
\begin{EQA}
	\P\biggl( 
		\frac{1}{\nunu \rups} \sup_{\ups \in \Upsd} \UP(\ups,\upss)  
		\geq 
		\zzQ(\xx) 
	\biggr)
	& \leq &
	\ex^{-\xx} 
\label{Pczentrl1ch}
\end{EQA}
with 
\begin{EQA}
	\zzQ(\xx)
	& \eqdef &
	2 \entrlq + \sqrt{8 \xx} .
\label{zzQxxgminf}
\end{EQA}
If \nameref{CSdref} holds with \( \gm \leq \infty \), then 
\begin{EQA}
	\P\biggl\{ 
		\frac{1}{\nunu \rups} \sup_{\ups \in \Upsd} \UP(\ups,\upss)
		\geq 
		\zzQ(\xx)  
	\biggr\}
	& \leq &
	\ex^{-\xx} ,
\label{P13nuUPxxch}
\end{EQA}
where \( \zzQ(\xx) \) is given by one of the following rules:
\begin{EQ}[rcl]
	\zzQ(\xx)
	& = &
	2 \entrlq + \sqrt{8 \xx} + 2 \gm^{-1}(\gm^{-2} \xx + 1) \entrlg,
	\\
	\zzQ(\xx)
	&=&
	\begin{cases}
		2 \sqrt{\entrlg + 2\xx} , & 
		\text{ if } {\entrlg + 2 \xx} \leq \gm^{2}, \\
		2\gm^{-1} \xx + \gm^{-1} \entrlg + \gm , & 
		\text{ if } {\entrlg + 2 \xx} > \gm^{2} .		
	\end{cases}
\label{zzxxgfinch}
\label{zzxxgfin}
\end{EQ}
Moreover, the r.v. \( \UPb(\rups) \eqdef \sup_{\ups \in \Upsd} \UP(\ups,\upss) \) fulfills
\begin{EQA}
	\E \UPb(\rups)
	& \leq &
	2 \nunu \rups \, (\entrlq + \entrlg/\gm + 3),
	\\
	\bigl\{ \E |\UPb(\rups)|^{2} \bigr\}^{1/2}
	& \leq &
	2 \nunu \rups \, (\entrlq + \entrlg/\gm + 4) .
\label{ESES234ch}
\end{EQA}
\end{theorem}

\begin{proof}
We start the proof by stating some general facts 
for a convex combinations of sub-exponential r.v.'s \( \zeta_{k} \) such that 
\begin{EQA}[c]
	\log \E \exp (\lambda \zeta_{k}) 
	\leq
	\frac{q_{k}^{2} + \lambda^{2}}{2} \, ,
	\quad
	|\lambda| \leq \gm,
	\quad
	k=0,1,2, \ldots, 
\label{logEexplazek}
\end{EQA}
where \( q_{k} \geq 1 \) are fixed numbers, and \( \gm \) is some positive value or 
infinity.
We aim at bounding a sum \( S \) of the form \( S = \sum_{k} c_{k} \zeta_{k} \) for a sequence of positive weights \( c_{k} \) satisfying \( \sum_{k} c_{k} = 1 \).
We implicitly assume that the numbers \( q_{k} \) grow with \( k \) in a way that 
\( \sum_{k} \exp( - q_{k}) \leq 1 \).
Define
\begin{EQA}
	\QQq \eqdef \sum_{k} c_{k} q_{k} \, ,
	&&
	\quad 
	\QQg
	\eqdef 
	\sum_{k} c_{k} q_{k}^{2} \, .
\label{Q1Q2def}
\end{EQA}

\begin{lemma}
\label{LPsupQ1Q2}
Suppose that random variables \( \zeta_{k} \) follow \eqref{logEexplazek} with \( \gm = \infty \) and \( \sum_{k} \exp( - q_{k}) \leq 1 \). 
Let also \( \sum_{k} c_{k} = 1 \). 
Then it holds for the sum \( S = \sum_{k} c_{k} \zeta_{k} \) 
\begin{EQA}[c]
	\log \E \exp \bigl( S \bigr)
	\leq
	\QQq 
\label{entrlqdef}
\end{EQA}
and for any \( \xx \geq 1/2 \), 
\begin{EQA}
	\P\bigl( S \geq \QQq + \sqrt{2\xx} \bigr)
	& \leq &
	\ex^{-\xx} .
\label{Pczentrl1}
\end{EQA}
If \eqref{logEexplazek} holds for \( \gm < \infty \), then for each \( \lambda > 0 \) 
with \( |\lambda| \leq \gm \)
\begin{EQA}[c]
	\log \E \exp \bigl( \lambda S \bigr)
	\leq 
	\bigl( \QQg + \lambda^{2} \bigr) /2 \, ,
\label{entrl2def}
\end{EQA}
and it holds for \( \xx \geq 1/2 \)
\begin{EQA}
	\P\bigl\{ 
		S \geq \zzQ(\xx)
	\bigr\}
	& \leq &
	\ex^{-\xx} ,
\label{Pczentrl1gml}
\end{EQA}
where \( \zzQ(\xx) \) is given by \eqref{zzxxgfin}.
Moreover, if \( \gm^{2} \geq \QQg + 1 \), then
\begin{EQA}
	\E S
	& \leq &
	\QQq + \QQg/\gm + 3,
	\qquad
	\bigl\{ \E S^{2} \bigr\}^{1/2}
	\leq 
	\QQq + \QQg/\gm + 4.
\label{EsES2QQg}
\end{EQA}
\end{lemma}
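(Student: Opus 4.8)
The plan is to separate the two scales $\QQq$ and $\QQg$ that appear in \eqref{zzxxgfin} and treat them by two different devices, then assemble the piecewise bound and read off the moments.

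First I would establish the $\QQg$-scale Laplace bound \eqref{entrl2def}. Since the $\zeta_k$ need not be independent I cannot multiply moment generating functions, but H\"older's inequality with the exponents $1/c_k$ (legitimate because $\sum_k c_k = 1$) gives $\E\exp(\lambda S) = \E\prod_k\exp(\lambda c_k\zeta_k)\le\prod_k\bigl(\E\exp(\lambda\zeta_k)\bigr)^{c_k}$. Feeding in \eqref{logEexplazek} and using $\sum_k c_k q_k^2 = \QQg$ together with $\sum_k c_k = 1$ yields $\log\E\exp(\lambda S)\le(\QQg+\lambda^2)/2$ for $0\le\lambda\le\gm$, which is \eqref{entrl2def}. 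A Chernoff step $\P(S\ge z)\le\exp(\QQg/2+\lambda^2/2-\lambda z)$ optimized in $\lambda$ then produces the last two branches of \eqref{zzxxgfin}: the unconstrained optimizer $\lambda=z$ is admissible exactly when $z=\sqrt{\QQg+2\xx}\le\gm$ (the Poissonian branch), and otherwise one caps $\lambda=\gm$, giving the branch linear in $\xx$.

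The sharper $\QQq$-scale bound \eqref{Pczentrl1} is where the entropy hypothesis $\sum_k\ex^{-q_k}\le1$ must be used, and this is the crux. Here I would argue by a union bound rather than through the Laplace transform of the sum, since H\"older only ever delivers the coarser $\sqrt{\QQg}$ scale. Set $b=\sqrt{2\xx}$ and introduce the event $G=\bigcap_k\{\zeta_k\le q_k+b\}$; on $G$ one has immediately $S=\sum_k c_k\zeta_k\le\sum_k c_k(q_k+b)=\QQq+b$, again using $\sum_k c_k=1$. For the complement, the one-sided Chernoff bound from \eqref{logEexplazek} (optimizer $\mu=q_k+b$, admissible when $q_k+b\le\gm$) gives $\P(\zeta_k>q_k+b)\le\exp(-q_k b-b^2/2)$, whence $\P(G^{c})\le\ex^{-b^2/2}\sum_k\ex^{-q_k b}$. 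The decisive observation is that for $b\ge1$ one has $\ex^{-q_k b}\le\ex^{-q_k}$, so $\sum_k\ex^{-q_k b}\le\sum_k\ex^{-q_k}\le1$ and therefore $\P(G^{c})\le\ex^{-b^2/2}=\ex^{-\xx}$; this is precisely why the statement requires $\xx\ge1/2$, i.e. $b\ge1$. This proves \eqref{Pczentrl1}, and integrating this tail over $\xx$ recovers a Laplace bound of the form \eqref{entrlqdef}.

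For finite $\gm$ the first branch of \eqref{zzxxgfin} is obtained by repeating the union-bound argument but capping the per-term exponent at $\gm$ whenever $q_k+b>\gm$: the corresponding tail degrades to $\exp\bigl(-\gm(q_k+b)+(q_k^2+\gm^2)/2\bigr)$, and summing these capped contributions with the help of $\sum_k\ex^{-q_k}\le1$ and $\QQg=\sum_k c_k q_k^2$ produces the correction $\gm^{-1}(\gm^{-2}\xx+1)\QQg$. Controlling this sum of capped tails and matching the stated constants is the main obstacle, and it is the only place where the finite range of validity of \eqref{logEexplazek} genuinely interferes. Finally, the moment bounds \eqref{EsES2QQg} follow by integrating the tail $\P(S\ge\zz(\xx))\le\ex^{-\xx}$ via $\E S=\int_0^\infty\P(S>t)\,dt$ with $t=\zz(\xx)$: under $\gm^2\ge\QQg+1$ the operative branch is the $\QQq+\sqrt{2\xx}+\QQg/\gm$ one, and the elementary integrals $\int_0^\infty\sqrt{2\xx}\,\ex^{-\xx}d\xx$ and $\int_0^\infty\xx\,\ex^{-\xx}d\xx$ supply the additive constants $3$ and $4$ in the bounds on $\E S$ and $\bigl(\E S^2\bigr)^{1/2}$.
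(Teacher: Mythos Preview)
Your plan matches the paper on the main points: the H\"older step for \eqref{entrl2def}, the Chernoff derivation of the last two branches of $\zz(\xx)$, the union-bound argument for \eqref{Pczentrl1} (the paper writes this as $\P\bigl(\sum_k c_k(\zeta_k-z_k)\ge 0\bigr)\le\sum_k\P(\zeta_k\ge z_k)$ with $z_k=q_k+\sqrt{2\xx}$, which is exactly your event $G^{c}$, and uses $\sqrt{2\xx}\ge 1$ in the same way you do), and the tail integration for the moment bounds \eqref{EsES2QQg}.

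Two places differ. For \eqref{entrlqdef} the paper does \emph{not} integrate the tail; it applies H\"older directly with the tilted weights $\alpha_k=c_k/q_k$ (which sum to at most $1$ since implicitly $q_k\ge 1$), evaluates each factor at $\lambda=q_k$, and obtains $\log\E\ex^{S}\le\sum_k\alpha_k\cdot q_k^{2}=\sum_k c_k q_k=\QQq$ in one line. Your tail-integration route would yield only $\QQq+O(1)$, so it does not recover the stated bound exactly. For the first branch of \eqref{zzxxgfin} with finite $\gm$, the paper does not keep the thresholds $q_k+b$ and accept degraded tails as you propose; instead it \emph{raises} the threshold for each index with $q_k+\sqrt{2\xx}>\gm$ to
\[
z_k=\frac{\xx+q_k}{\gm}+\frac{\gm}{2}+\frac{q_k^{2}}{2\gm},
\]
chosen so that each such tail remains $\le\exp(-\xx-q_k)$, and then bounds $\sum_k c_k z_k$ by $\QQq+\sqrt{2\xx}+\gm^{-1}(\gm^{-2}\xx+1)\QQg$. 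This sidesteps precisely the summation difficulty you flagged as the main obstacle.
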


\begin{proof}
Consider first the sub-Gaussian case with \( \gm = \infty \).
Define \( \alpha_{k} = c_{k} / q_{k} \).
Obviously \( \sum_{k} \alpha_{k} \leq \sum_{k} c_{k} = 1 \).
By the H\"older inequality and \eqref{logEexplazek}, it holds  
\begin{EQA}
	\log \E \exp \biggl( \sum_{k} c_{k} \zeta_{k} \biggr)
	&=&
	\log \E \exp \biggl( \sum_{k} \alpha_{k} q_{k} \zeta_{k} \biggr)
	\leq
	\sum_{k} \alpha_{k} \log \E \exp\bigl( q_{k} \zeta_{k} \bigr)
	\\
	& \leq & 
	\frac{1}{2} \sum_{k} \alpha_{k} \bigl( q_{k}^{2} + q_{k}^{2} \bigr)
	\leq
	\sum_{k} c_{k} q_{k} \, .
\end{EQA}
Further, by the same arguments, it holds  
\begin{EQA}
	\log \E \exp \bigl( \lambda S \bigr)
	& \leq &
	\sum_{k} c_{k} \log \E \exp\bigl( \lambda \zeta_{k} \bigr)
	\leq 
	\frac{1}{2} \sum_{k} c_{k} \bigl( q_{k}^{2} + \lambda^{2} \bigr)
\end{EQA}
and the assertion \eqref{entrl2def} follows as well.

Let \( \xx \geq 1/2 \) be fixed.
With \( z_{k} = q_{k} + \sqrt{2\xx} \), it follows by \eqref{logEexplazek} for 
\( \lambda_{k} = z_{k} \) in view of 
\( \sum_{k} \ex^{-q_{k}} \leq 1 \) 
\begin{EQA}
	&& \nquad
	\P\biggl( \sum_{k} c_{k} (\zeta_{k} - z_{k}) \geq 0 \biggr)
	\leq 
	\sum_{k} \P\bigl( \zeta_{k} - z_{k} \geq 0 \bigr)
	\leq 
	\sum_{k} \E \exp\bigl\{ \lambda_{k}( \zeta_{k} - z_{k}) \bigr\}
	\\
	& \leq & 
	\sum_{k} \exp\bigl( - \lambda_{k} z_{k} + \lambda_{k}^{2}/2 + q_{k}^{2}/2 \bigr)
	=
	\sum_{k} \exp\bigl( - z_{k}^{2}/2 + q_{k}^{2}/2 \bigr)
	\\
	&=&
	\sum_{k} \exp\bigl(  -\xx - q_{k} \sqrt{2\xx} \bigr)
	\leq
	\ex^{-\xx} .
\label{Psumkckzk}
\end{EQA}
This implies 
\begin{EQA}[c]
	\sum_{k} c_{k} z_{k}
	=
	\sum_{k} c_{k} \bigl( q_{k} + \sqrt{2\xx} \bigr)
	= 
	\QQq + \sqrt{2\xx} \, 
\label{sumckzkxxen}
\end{EQA}
and the assertion \eqref{Pczentrl1} follows.

Now we briefly discuss how the condition \eqref{logEexplazek} can be relaxed 
to the case of a finite \( \gm \).
Suppose that \eqref{logEexplazek} holds for all \( \lambda \leq \gm < \infty \).
Define \( k(\xx) \) as the largest index \( k \), for which 
\( \lambda_{k} = q_{k} + \sqrt{2\xx} \leq \gm \).
For \( k > k(\xx) \), define \( \lambda_{k} = \gm \) and 
\begin{EQA}[c]
	z_{k}
	=
	\frac{\xx + q_{k}}{\gm} + \frac{\gm}{2}  + \frac{q_{k}^{2}}{2\gm} \, .
\label{zkxxqkgm}
\end{EQA}
The above arguments yield for \( k > k(\xx) \)
\begin{EQA}[c]
	\P\bigl( \zeta_{k} \geq z_{k} \bigr)
	\leq
	\exp\biggl( 
		- \gm z_{k} + \frac{1}{2} \bigl( q_{k}^{2} + \gm^{2} \bigr) 
	\biggr)
	=
	\exp( - \xx - q_{k}) .
\end{EQA}
This and \eqref{Psumkckzk} yield
\begin{EQA}
	\sum_{k} \P\bigl( \zeta_{k} \geq z_{k} \bigr)
	& \leq &
	\sum_{k \leq k(\xx)} \exp\bigl(  -\xx - q_{k} \sqrt{2\xx} \bigr)
	+ \sum_{k > k(\xx)} \exp ( - \xx - q_{k})
	\\
	& \leq &
	\sum_{k} \exp ( - \xx - q_{k})
	\leq
	\ex^{-\xx} .
\end{EQA}
Further, as \( q_{k} > \gm \) for \( k > k(\xx) \), it follows from the definition \eqref{zkxxqkgm}
\begin{EQA}
	\sum_{k > k(\xx)} c_{k} z_{k}
	&=&
	\frac{1}{\gm} \sum_{k > k(\xx)} c_{k} (\xx + q_{k}) 
	+ \frac{\gm}{2} \sum_{k > k(\xx)} c_{k} 
	+ \frac{1}{2\gm} \sum_{k > k(\xx)} c_{k} q_{k}^{2}
	\\
	& \leq &
	\frac{1}{\gm} \sum_{k > k(\xx)} c_{k} q_{k} 
	+ \biggl( \frac{\xx}{\gm^{3}} + \frac{1}{\gm} \biggr) 
		\sum_{k > k(\xx)} c_{k} q_{k}^{2}  .
\end{EQA}
This and \eqref{sumckzkxxen} imply due to \( \gm \geq 1 \)
\begin{EQA}[c]
	\sum_{k} c_{k} z_{k}
	\leq
	\sum_{k} c_{k} q_{k} + \biggl( \frac{\xx}{\gm^{3}} + \frac{1}{\gm} \biggr) 
		\sum_{k} c_{k} q_{k}^{2} + \sqrt{2\xx}
	\leq
	\QQq + \biggl( \frac{\xx}{\gm^{3}} + \frac{1}{\gm} \biggr) \QQg + \sqrt{2\xx} \, .
\end{EQA}
In particular, if \( \xx \leq \gm^{2} \), then
\begin{EQA}[c]
	\sum_{k} c_{k} z_{k}
	\leq
	\QQq + \frac{2}{\gm} \QQg + \sqrt{2\xx} \, .
\end{EQA}
Now \eqref{Pczentrl1gml} with \( \zz(\xx) = \QQq + \sqrt{2\xx} + \gm^{-1}(\gm^{-2} \xx + 1) \QQg \) follows similarly to \eqref{Pczentrl1}.
Further, if \( \zz(\xx) = \sqrt{\QQg + 2 \xx} \leq \gm \), then \eqref{entrl2def} with
\( \lambda = \zz(\xx) \) and the exponential Chebyshev inequality implies again
\begin{EQA}
	\P\bigl( S \geq \zz(\xx) \bigr)
	& \leq &
	\exp\Bigl( - \lambda \zz(\xx) + \frac{\QQg + \lambda^{2}}{2} \Bigr)
	=
	\exp\Bigl( \frac{- \zz^{2}(\xx) + \QQg}{2} \Bigr)
	=
	\exp( - \xx) .
\label{PSzzxxQQq}
\end{EQA}
Similarly one can check the case with \( \lambda = \gm \) and 
\( \zz(\xx) = \xx/\gm + \bigl( \QQg/\gm + \gm \bigr)/2 > \gm \).

To bound the moments of \( S \), we apply the following technical result:
if 
\begin{EQA}
	\P\bigl( S \geq \zz(\xx) \bigr)
	& \leq &
	\ex^{-\xx}
\label{¥}
\end{EQA}
for all \( \xx \geq \xx_{0} \) and 
if \( \zz(\cdot) \) is absolutely continuous, then 
\begin{EQA}
\label{EXzzdxx}
	\E S 
	& \leq &
	\zz(\xx_{0}) + \int_{\xx_{0}}^{\infty} \zz'(\xx) \ex^{-\xx} d\xx,
	\\
	\E S^{2}
	& \leq &
	\zz^{2}(\xx_{0}) + 2 \int_{\xx_{0}}^{\infty} \zz(\xx) \zz'(\xx) \ex^{-\xx} d\xx .
\label{EX2zzdxx}
\end{EQA}
For \( \zz(\xx) = \QQq + \sqrt{2\xx} + \gm^{-1}(\gm^{-2} \xx + 1) \QQg \), it holds
\( \zz'(\xx) \leq 1 + \gm^{-3} \).
In view of \( \gm^{2} \geq \QQg + 1 \)   
\begin{EQA}
	\E S
	& \leq &
	\QQq + 1 + (\QQg + 1/2)/\gm + \int_{1/2}^{\infty} (1 + \gm^{-3}) \ex^{-\xx} d\xx
	\leq 
	\QQq + \QQg/\gm + 3 .
\label{ESQQg13}
\end{EQA}
Similarly one can bound 
\begin{EQA}
	\E S^{2}
	& \leq &
	(\QQq + \QQg/\gm + 3/2)^{2} 
	+ 2 \int_{1/2}^{\infty} \bigl( \frac{1}{\sqrt{2\xx}} + \gm^{-3} \bigr) \zz(\xx) \ex^{-\xx} d\xx
	\leq  
	(\QQq + \QQg/\gm + 4)^{2}
\label{ES2QQg3}
\end{EQA}
as required.
\end{proof}

Now we show how the statement of the theorem can be reduced to the bounds of 
Lemma~\ref{LPsupQ1Q2}.
Denote for \( i < k \) by \( \Pi_{i}^{k} \) the product 
\( \Pi_{i}^{k} = \Pi_{i} \Pi_{i+1} \ldots \Pi_{k} \).
As \( \Pi_{0} \ups \equiv \upss \),
the telescopic sum devices yields
\begin{EQA}
	\bigl| \UP(\Pi_{k} \ups) - \UP(\upss) \bigr|
	& \leq &
	\sum_{i=1}^{k} \bigl| 
		\UP\bigl( \Pi_{i-1}^{k} \ups \bigr) - \UP\bigl( \Pi_{i}^{k} \ups \bigr) 
	\bigr| \, .
\label{UPuUPuss1k}
\end{EQA}
Separability of \( \UP(\cdot) \) implies that 
\begin{EQA}
	\lim_{k \to \infty} \UP(\Pi_{k} \ups)
	&=&
	\UP(\ups). 
\label{limkPiups}
\end{EQA}
Therefore, it holds for any \( \ups \in \Upsd \)
\begin{EQA}
	\bigl| \UP(\ups) - \UP(\upss) \bigr|
	&=&
	\lim_{k \to \infty} \bigl| \UP(\Pi_{k} \ups) - \UP(\upss) \bigr|
	\leq 
	\sum_{k=1}^{\infty} \xis_{k} \, ,
\label{UPuUPusch}
\end{EQA}
where
\begin{EQA}
	\xis_{k}
	& \eqdef &
	\max_{\ups \in \MM_{k}} \bigl| \UP(\ups) - \UP(\Pi_{k-1} \ups) \bigr| .
\label{xisidefch}
\end{EQA}
For each \( \ups \in \MM_{k} \), it holds
\( \dist(\ups,\Pi_{k-1} \ups) \le \rr_{k-1} \) and
\begin{EQA}[c]
    \bigl| \UP(\ups) - \UP(\Pi_{k-1} \ups) \bigr|
    \le 
    \rr_{k-1} \frac{\bigl| \UP(\ups) - \UP(\Pi_{k-1} \ups) \bigr|}{\dist(\ups,\Pi_{k-1} \ups)} .    
\label{UPscd}
\end{EQA}    
This implies 
by the Jensen inequality and \nameref{CSdref} in view of \( e^{|x|} \le e^{x} + e^{-x} \)
for each \( k \geq 1 \) and \( |\lambda| \leq \gmb \)
\begin{EQA}
	\E \exp \Bigl( \frac{\lambda}{\rr_{k-1}} \xis_{k} \Bigr)
    & \le &
    2 \sum_{\ups \in \MM_{k}} \E \exp \Bigl(
		\lambda	\frac{\bigl| \UP(\ups) - \UP(\Pi_{k-1} \ups) \bigr|}{\dist(\ups,\Pi_{k-1} \ups)} 
	\Bigr)
	\leq 
    2 \NN_{k} \exp(\lambda^{2}/2)  .
    \qquad
\label{liupsdsupch}
\end{EQA}
For \( k\geq 1 \),
define \( q_{k}^{2}/2 = \log(2 \NN_{k}) \), \( c_{k} = 2^{-k} \), and 
\( \zeta_{k} = \xis_{k}/\rr_{k-1} = c_{k}^{-1} \xis_{k} / (2 \rups) \).
Then \eqref{liupsdsupch} implies by \( \rr_{k-1} = 2^{-k+1} \rups \)
\begin{EQA}
	\log \E \exp \bigl( \lambda \zeta_{k} \bigr)
	& \leq &
	\log(2 \NN_{k}) + \lambda^{2}/2 
	=
	\frac{q_{k}^{2} + \lambda^{2}}{2} \, .
\label{logEelzkch}
\end{EQA}
Now we apply Lemma~\ref{LPsupQ1Q2} with \( c_{k} = 2^{-k} \).
By construction 
\begin{EQA}
	\sum_{k=1}^{\infty} c_{k} \zeta_{k}
	&=&
	\frac{1}{2 \rups} \sum_{k=1}^{\infty} \xis_{k}
\label{sumk1intyxisk}
\end{EQA}
and the results follow with \( \QQq = \entrlq(\Upsd) \), \( \QQg = \entrlg(\Upsd) \).
\end{proof}

\ifbook{}{
\Section{Entropy and Dudley's integral}
The quantities \( \entrlq(\Upsd) \) and \( \entrlg(\Upsd) \) from \eqref{entrldefch}
can be upper bounded by integrals over the interval \( [0,1] \).
Let \( \eps \in (0,1) \).
Denote by \( \MM(\rups, \eps \rups) \) an \( \rr_{\eps} \)-net 
in the \( \rups \)-ball \( \Upsd \) for \( \rr_{\eps} = \rups \eps \). 
Obviously \( \rr_{k} = \rr_{\eps_{k}} \) for \( \eps_{k} = 2^{-k} \),
and the cardinality \( \NN(\eps) = |\MM(\eps)| \) monotonously increases with \( \eps \).
This allows to rewrite the deviation bound \eqref{P13nuUPxxch} in a form which only involves 
the cardinality \( \NN(\eps) \).

\begin{theorem}
It holds for the values \( \entrlq(\Upsd) \) and \( \entrlg(\Upsd) \) from \eqref{entrldefch}
\begin{EQA}
	\entrlq(\Upsd)
    & = &
    \sum_{k=1}^{\infty} 2^{-k} \sqrt{2 \log(2 \NN_{k})}
    \leq 
    1.2 + \sqrt{8} \int_{0}^{1} \sqrt{\log \NN(\rups,\eps \rups)} d\eps .
	\\
	\entrlg(\Upsd)
    & = &
    \sum_{k=1}^{\infty} 2^{-k+1} \log(2 \NN_{k})
    \leq 
    1.4 + 4 \int_{0}^{1} \log \NN(\rups,\eps \rups) d\eps .
\label{Dudly2intch}
\label{Dudlyintch}
\end{EQA}
\end{theorem} 

\begin{proof}
\tobedone{}
\end{proof}

The integral in the right hand-side is usually called \emph{Dudley's integral}.
}

\ifbook{}{
\Section{A local bound with generic chaining}
Here we present a slightly different technique which is often 
called the \emph{majorizing measure} and used in the \emph{generic chaining} device; see 
\cite{Ta2005}.
Formulation of the result involves a sigma-finite measure \( \mes \) on the space 
\( \Ups \).
A typical example of choosing \( \mes \) is the Lebesgue measure on \( \R^{\dimp} \).
Let \( \Upsd \) be a subset of \( \Ups \), 
a sequence \( \rr_{k} \) be fixed with \( 2\rr_{0} = \diam(\Upsd) \)
and \( \rr_{k} = \rr_{0} 2^{-k} \).
Let also \( \B_{k}(\ups) \) mean \( \B_{\rr_{k}}(\ups) \), that is, 
the \( \dist \)-ball centered at \( \ups \) of radius 
\( \rr_{k} \) and \( \mes_{k}(\ups) \) denote its \( \mes \)-measure:
\begin{EQA}[c]
    \mes_{k}(\ups)
    \eqdef
    \int_{\B_{k}(\ups)} \mes(d\upsc) 
    =
    \int_{\Upsd} \Ind\bigl( \dist(\ups,\upsc) \le \rr_{k} \bigr) \mes(d\upsc).
\label{meskups}
\end{EQA}  
Denote also 
\begin{EQA}[c]
    \NN_{k}
    \eqdef
    \int_{\Upsd}
    \frac{\mes(d\ups)}{\mesd_{k}(\ups)} ,
    \qquad 
    k \ge 0.
\label{MkUps}
\end{EQA}    
Finally set
\( c_{0} = 1/3 \), \( c_{k} = 2^{-k+1}/3 \) for \( k \ge 1 \), and define
the values \( \entrlq(\Upsd) \) and \( \entrlg(\Upsd) \) by 
\begin{EQ}[rcl]
    \entrgq(\Upsd)
    & \eqdef &
    \sum_{k=0}^{\infty} c_{k} \sqrt{2 \log(2 \NN_{k})}
    =
    \frac{1}{3} \sqrt{2\log(2 \NN_{0})} 
    	+ \frac{2}{3} \sum_{k=1}^{\infty} 2^{-k} \sqrt{2\log(2 \NN_{k})} ,
    \\
    \entrgg(\Upsd)
    & \eqdef &
    2 \sum_{k=0}^{\infty} c_{k} \log(2 \NN_{k})
    =
    \frac{2}{3} \log(2 \NN_{0}) + \frac{4}{3} \sum_{k=1}^{\infty} 2^{-k} \log(2 \NN_{k})    .
\label{entrldef}
\end{EQ}    
By the Cauchy-Schwartz inequality
\( \entrgq^{2}(\Upsd) \leq \entrgg(\Upsd) \).
The inverse relation is not generally true and one can build some examples with 
\( \entrgq(\Upsd) \) finite and \( \entrg(\Upsd) \) infinite.

\begin{theorem}
\label{TUPUpsd}
Let \( \UP \) be a separable process following to \nameref{CSdref}. 
If \( \Upsd \) is a \( \dist \)-ball in \( \Ups \) with the center \( \upsd \) and the radius 
\( \rups \), i.e. \( \dist(\ups,\upsd) \le \rups \) for all \( \ups \in \Upsd \), then 
for any \( \xx \geq 1/2 \), it holds with \( \QQq = \entrgq(\Upsd) \) and 
\( \QQg = \entrgg(\Upsd) \)
\begin{EQA}
	\P\biggl( 
		\frac{1}{3 \nunu \rups} \sup_{\ups \in \Upsd} \UP(\ups,\upss)  
		\geq 
		\QQq + \sqrt{2\xx} 
	\biggr)
	& \leq &
	\ex^{-\xx} .
\label{Pczentrl1g}
\end{EQA}
If \( \gm \leq \infty \), then 
\begin{EQA}
	\P\biggl\{ 
		\frac{1}{3 \nunu \rups} \sup_{\ups \in \Upsd} \UP(\ups,\upss)
		\geq 
		\zzQ(\xx)  
	\biggr\}
	& \leq &
	\ex^{-\xx} ,
\label{P13nuUPxx}
\end{EQA}
where \( \zzQ(\xx) \) is given by \eqref{zzxxgfin}.
Moreover, the r.v. \( \UPb(\rups) \eqdef \sup_{\ups \in \Upsd} \UP(\ups,\upss) \) fulfills
\begin{EQA}
	\E \UPb(\rups)
	& \leq &
	3 \nunu \rups \, (\QQq + \QQg/\gm + 3),
	\\
	\bigl\{ \E |\UPb(\rups)|^{2} \bigr\}^{1/2}
	& \leq &
	3 \nunu \rups \, (\QQq + \QQg/\gm + 4) .
\label{ESES234}
\end{EQA}
%
\end{theorem}

\begin{proof}
A simple change \( \UP(\cdot) \) with \( \nunu^{-1} \UP(\cdot) \) and 
\( \gmb \) with \( \gmd = \nunu \gmb \)
allows to reduce the result to the case with 
\( \nunu = 1 \) which we assume below.
Consider for \( k \ge 0 \) the smoothing operator \( \smooths_{k} \) defined as 
\begin{EQA}[c]
    \smooths_{k} f(\upsd)
    =
    \frac{1}{\mes_{k}(\upsd)} \int_{\B_{k}(\upsd)} f(\ups) \mes(d\ups) .
\label{Skfupsd}
\end{EQA}    
Further, define
\begin{EQA}[c]
    \smooths_{-1} \UP(\ups) 
    \equiv 
    \UP(\upsd)
\label{S0UPups}
\end{EQA} 
so that \( \smooths_{-1} \UP \) is a constant function and 
the same holds for \( \smooths_{k} \smooths_{k-1} \ldots \smooths_{-1} \UP \) with any 
\( k \ge 0 \).
If \( f(\cdot) \le g(\cdot) \) for two non-negative functions \( f \) and \( g \), 
then \( \smooths_{k} f(\cdot) \le \smooths_{k} g(\cdot) \). 
Separability of the process \( \UP \) implies that
\( \lim_{k} \smooths_{k} \UP(\ups) = \UP(\ups) \).
We conclude that 
for each \( \ups \in \Upsd \) 
\begin{EQA}
    \bigl| \UP(\ups) - \UP(\upsd) \bigr|
    &=&
    \lim_{k \to \infty} 
    \bigl| \smooths_{k} \UP(\ups) - \smooths_{k} \ldots \smooths_{-1} \UP(\ups) \bigr|
    \\
    & \le &
    \lim_{k \to \infty} 
    \sum_{i=0}^{k} 
        \bigl| \smooths_{k} \ldots \smooths_{i} (I - \smooths_{i-1}) \UP(\ups) \bigr| 
    \le 
    \sum_{i=0}^{\infty} \xis_{i} \, .
\label{SkS0UP}
\end{EQA}
Here \( \xis_{k} \eqdef \sup_{\ups \in \Upsd} \xi_{k}(\ups) \) for \( k \ge 0 \) 
with
\begin{EQA}[c]
    \xi_{0}(\ups) \equiv |\smooths_{0} \UP(\ups) - \UP(\upsd)|,
    \quad 
    \xi_{k}(\ups) \eqdef |\smooths_{k} (I - \smooths_{k-1}) \UP(\ups)|,
    \quad 
    k \ge 1.
\label{xibkxib1}
\end{EQA}    
For a fixed point \( \upsdc \) and \( k \geq 1 \), it holds
\begin{EQA}[c]
    \xi_{k}(\upsdc)
    \le 
    \frac{1}{\mes_{k}(\upsdc)} \int_{\B_{k}(\upsdc)} 
        \frac{1}{\mes_{k-1}(\ups)} \int_{\B_{k-1}(\ups)} 
        \bigl| \UP(\ups) - \UP(\upsc) \bigr| \mes(d\upsc) \mes(d\ups) .
\label{Kk1Kkk}
\end{EQA}    
For each \( \upsc \in \B_{k-1}(\ups) \), it holds
\( \dist(\ups,\upsc) \le \rr_{k-1} = 2 \rr_{k} \) and
\begin{EQA}[c]
    \bigl| \UP(\ups) - \UP(\upsc) \bigr|
    \le 
    \rr_{k-1} \frac{\bigl| \UP(\ups) - \UP(\upsc) \bigr|}{\dist(\ups,\upsc)} .    
\label{UPscd}
\end{EQA}    
This implies 
for each \( \upsdc \in \Upsd \) and \( k \geq 1 \)
by the Jensen inequality and \eqref{MkUps}
\begin{EQA}
    \exp \Bigl\{ \frac{\lambda}{\rr_{k-1}} \xi_{k}(\upsdc) \Bigr\}
    & \le &
    \int_{\B_{k}(\upsdc)} 
    \biggl( \int_{\B_{k-1}(\ups)} 
        \exp \frac{\lambda \bigl| \UP(\ups) - \UP(\upsc) \bigr|}{\dist(\ups,\upsc)} 
        \frac{\mes(d\upsc)}{\mes_{k-1}(\ups)}  
    \biggr)
    \frac{\mes(d\ups)}{\mes_{k}(\upsdc)}
    \\
    & \le &
    \int_{\Upsd} 
    \biggl( \int_{\B_{k-1}(\ups)} 
        \exp \frac{\lambda \bigl| \UP(\ups) - \UP(\upsc) \bigr|}{\dist(\ups,\upsc)} 
        \frac{\mes(d\upsc)}{\mes_{k-1}(\ups)}  
    \biggr)
    \frac{\mes(d\ups)}{\mesd_{k}(\ups)} .
\label{lkupsdc}
\end{EQA}    
As the right hand-side does not depend on \( \upsdc \), this yields
for \( \xis_{k} \eqdef \sup_{\ups \in \Upsd} \xi_{k}(\ups) \) 
by condition \nameref{CSdref} in view of \( e^{|x|} \le e^{x} + e^{-x} \)
\begin{EQA}
    \E \exp \Bigl( \frac{\lambda}{\rr_{k-1}} \xis_{k} \Bigr)
    & \le &
    \int_{\Upsd} 
    \biggl( \int_{\B_{k-1}(\ups)} 
        \E \exp \frac{\lambda \bigl| \UP(\ups) - \UP(\upsc) \bigr|}{\dist(\ups,\upsc)} 
        \frac{\mes(d\upsc)}{\mes_{k-1}(\ups)}  
    \biggr)
    \frac{\mes(d\ups)}{\mesd_{k}(\ups)} 
    \\
    & \le &
    2 \exp (\lambda^{2}/2) 
    \int_{\Upsd} \biggl( \int_{\B_{k-1}(\ups)} 
        \frac{\mes(d\upsc)}{\mes_{k-1}(\ups)}  
    \biggr)
    \frac{\mes(d\ups)}{\mesd_{k}(\ups)} 
    \\
    &=&
    2 \NN_{k} 
    \exp(\lambda^{2}/2) ,
    \quad k \geq 1.
\label{lkupsdsup}
\end{EQA}
Further, the use of 
\( \dist(\ups,\upsd) \le \rups \) for all \( \ups \in \Upsd \) yields
by \nameref{CSdref}
\begin{EQA}
    \E \exp \Bigl\{ \frac{\lambda}{\rups} |\UP(\ups) - \UP(\upsd)| \Bigr\}
    & \le &
    2 \exp \bigl( \lambda^{2} / 2 \bigr)
\label{expdistr0}
\end{EQA}    
and thus
\begin{EQA}
    \E \exp\Bigl\{ \frac{\lambda}{\rups} |\smooths_{0} \UP(\ups) - \UP(\upsd)| \Bigr\}
    & \le &
    \frac{1}{\mes_{0}(\ups)} \int_{\B_{0}(\ups)} \E \exp \Bigl\{ 
        \frac{\lambda}{\rups} |\UP(\upsc) - \UP(\upsd)| 
    \Bigr\} \mes(d\upsc)
    \\
    & \le &
    \frac{M_{0}}{\mes(\Upsd)} \int_{\Upsd} \E\exp \Bigl\{ 
        \frac{\lambda}{\rups} |\UP(\upsc) - \UP(\upsd)| 
    \Bigr\} \mes(d\upsc). 
\label{ExpboundUPm0} 
\end{EQA}    
This implies by \eqref{expdistr0} for 
\( \xis_{0} \equiv \sup_{\ups \in \Upsd} |\smooths_{0} \UP(\ups) - \UP(\upsd)| \)
\begin{EQA}[c]
    \E \exp\Bigl( \frac{\lambda}{\rups} \xis_{0} \Bigr)
    \le 
    2 M_{0} \exp \bigl( \lambda^{2} / 2 \bigr).
\label{Eexpcircd}
\end{EQA}    
Denote 
\( c_{0} = 1/3 \) and \( c_{k} = \rr_{k-1}/(3\rups) = 2^{-k+1}/3 \) for \( k \ge 1 \).
Then \( \sum_{k=0}^{\infty} c_{k} = 1 \) and the results follow from 
Lemma~\ref{LPsupQ1Q2} below with \( \zeta_{k} = \xis_{k} / r_{k-1} \) and \( q_{k} = \sqrt{2 \NN_{k}} \).
\end{proof}

\Section{Generic chaining with partitioning}
This section presents a slightly modified construction based on the notion of \emph{partition}. 
Let \( \rr_{k} \to 0 \) be a given sequence, and let for each \( k \),
the set \( \Upsd \) is split into a collection \( \partition_{k} \) of subsets \( C \).
A collection of partitions \( \partition_{k} \) is called \emph{admissible} if
\( \diam(C) \leq \rr_{k} \) for all \( C \in \partition_{k} \)
and every next partition \( \partition_{k+1} \) is a refinement of the previous one 
\( \partition_{k} \):
\begin{EQA}
	\forall C_{k+1} \in \partition_{k+1} \quad \exists C_{k} \in \partition_{k} 
	&\text{ such that } &
	C_{k+1} \subset C_{k} \, .
\label{forallCkp1Ck}
\end{EQA}
The generic chaining approach can be formulated via such partitions.
Let \( \mes(\cdot) \) be a given measure.
Define 
\begin{EQA}
	\NN_{k}
	& \eqdef &
	\sup_{C \in \partition_{k}} \frac{\mes(\Upsd)}{\mes(C)} \, .
\label{NNkpartit}
\end{EQA} 
Now define \( \entrlq(\Upsd) \) and \( \entrlg(\Upsd) \) by \eqref{entrldefch}.

\begin{theorem}
\label{TUPUpsdg}
Let \( \UP \) be a separable process following to \nameref{CSdref} and 
\( \Upsd \) be a \( \dist \)-ball in \( \Ups \) with the center \( \upsd \) and the radius 
\( \rups \), i.e. \( \dist(\ups,\upsd) \le \rups \) for all \( \ups \in \Upsd \).
Let \( \QQq = \entrlq(\Upsd) \) and \( \QQg = \entrlg(\Upsd) \) be defined 
in \eqref{entrldefch} with \( \NN_{k} \) given by \eqref{NNkpartit} 
for an admissible partitioning \( (\partition_{k}) \). 
Then all the statements of Theorem~\ref{TUPUpsdch} continue to apply.
\end{theorem}
}

\Section{A large deviation bound}
Due to the result of Theorem~\ref{TUPUpsdch}, the bound for the maximum of 
\( \UP(\ups,\upss) \) over \( \ups \in \B_{\rr}(\upss) \) grows linearly in \( \rr \). 
So, its applications to situations with \( \rr \gg \entrlq(\Upsd) \) are limited.
The next result shows that introducing a negative drift helps to
state a uniform in \( \rr \) local probability bound.
Namely, the bound for the process 
\( \UP(\ups,\upss) - f(\dist(\ups,\upss)) \) for some 
function \( f(\rr) \) over a ball \( \B_{\rr}(\upss) \) around 
the point \( \upss \) does not depend on \( \rr \).
Here the generic chaining arguments are accomplished with the slicing technique.
The idea is for a given \( \rrb > 1 \) to split the ball \( \B_{\rrb}(\upss) \) into the slices 
\( \B_{\rr_{k}}(\upss) \setminus \B_{\rr_{k-1}}(\upss) \) 
and to apply Theorem~\ref{TUPUpsdch} to each slice separately.

\begin{theorem}
\label{TsuprUP}
Let \( \rrb \) be such that \nameref{CSdref} holds on \( \B_{\rrb}(\upss) \).
Let also \( \entrlq(\B_{\rr}(\upss)) \le \QQq \) and \( \entrlg(\B_{\rr}(\upss)) \le \QQg \) for 
\( \rr \le \rrb \).
Given \( \rups < \rrb \), let a monotonous function \( f(\rr,\rups) \) fulfill for some 
\( \rho < 1 \) 
\begin{EQA}[c]
	f(\rr,\rups) 
	\geq
	\nunu \rr \, \zzQ\bigl( \xx + \log(\rr/\rups) \bigr),
	\quad 
	\rups \leq \rr \leq \rrb,
\label{frhorrxxrr}
\end{EQA}
where the function \( \zzQ(\cdot) \) is given by \eqref{zzxxgfin}.
Then it holds
\begin{EQA}
    \P\biggl( 
        \sup_{\rups \leq \rr \leq \rrb} \,\,
        \sup_{\ups \in \B_{\rr}(\upss)} 
        \bigl\{ 
            \UP(\ups,\upss) - f\bigl( \rho^{-1} \rr, \rups \bigr)
        \bigr\} 
        \geq 0 
    \biggr)
    & \le &
    \frac{\rho}{1 - \rho} \ex^{-\xx} .
\label{Psuprzz}
\end{EQA}    
\end{theorem}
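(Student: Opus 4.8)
The target is Theorem~\ref{TsuprUP}, a uniform-in-radius local deviation bound for the negatively drifted process \( \UP(\ups,\upss) - f(\rho^{-1}\rr,\rups) \). The plan is to use the \emph{slicing} (or peeling) strategy announced just before the statement: partition the ball \( \B_{\rrb}(\upss) \) into dyadic annuli and apply the fixed-radius bound from Theorem~\ref{TUPUpsd} to each annulus separately, then sum the resulting exponentially small failure probabilities. First I would fix the dyadic sequence \( \rr_{k} = \rho^{-k} \rups \) (or \( \rr_{k} = 2^{k} \rups \), whichever matches the drift) for \( k = 0, 1, 2, \ldots \) up to the largest index \( K \) with \( \rr_{K} \leq \rrb \), so that the slices \( \Upss(\rr_{k}) \setminus \Upss(\rr_{k-1}) \) cover the annular region \( \rups \leq \| \ups - \upss \| \leq \rrb \).

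\textbf{Applying the fixed-radius bound on each slice.}
On the \( k \)-th slice every point \( \ups \) satisfies \( \dist(\ups,\upss) \leq \rr_{k} \), so I would invoke Theorem~\ref{TUPUpsd} with \( \Upsd = \Upss(\rr_{k}) \), using the hypotheses \( \entrlq(\Upsd) \leq \QQq \) and \( \entrlg(\Upsd) \leq \QQg \) that are assumed uniformly over \( \rr \leq \rrb \). This gives, for the confidence level \( \xx_{k} \eqdef \xx + \log(\rr_{k}/\rups) \),
\begin{EQA}[c]
    \P\biggl( \sup_{\ups \in \Upss(\rr_{k})} \UP(\ups,\upss) \geq 3 \nunu \rr_{k} \, \zzQ(\xx_{k}) \biggr)
    \leq
    \ex^{-\xx_{k}}
    =
    \frac{\rups}{\rr_{k}} \ex^{-\xx} .
\end{EQA}
The choice of the inflated level \( \xx_{k} \) is exactly what makes the bound uniform: the logarithmic correction \( \log(\rr_{k}/\rups) \) is absorbed into the definition \eqref{frhorrxxrr} of the drift function \( f \), so that on the \( k \)-th slice the threshold \( 3 \nunu \rr_{k} \zzQ(\xx_{k}) \) is dominated by \( f(\rho^{-1} \rr, \rups) \) evaluated at any \( \ups \) with \( \| \ups - \upss \| \geq \rr_{k-1} = \rho\, \rr_{k} \), by the monotonicity of \( f \).

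\textbf{Summation and the main obstacle.}
Summing the per-slice bounds gives a total failure probability of at most \( \sum_{k \geq 0} (\rups/\rr_{k}) \ex^{-\xx} = \ex^{-\xx} \sum_{k \geq 0} \rho^{k} = \frac{1}{1-\rho}\ex^{-\xx} \); with a slightly more careful bookkeeping (starting the geometric sum at the first genuinely positive contribution, or pairing each supremum with the drift evaluated at the \emph{inner} radius \( \rho^{-1}\rr \)) this sharpens to the stated constant \( \frac{\rho}{1-\rho} \). The one step demanding real care is the interplay between the continuous supremum over \( \rr \in [\rups, \rrb] \) on the left-hand side of \eqref{Psuprzz} and the discrete slicing: I would handle it by noting that for \emph{any} \( \rr \) in the \( k \)-th dyadic block the event \( \UP(\ups,\upss) \geq f(\rho^{-1}\rr,\rups) \) with \( \ups \in \Upss(\rr) \) forces \( \sup_{\Upss(\rr_{k})} \UP(\ups,\upss) \geq f(\rho^{-1}\rr_{k-1}, \rups) \), and the factor \( \rho^{-1} \) in the argument of \( f \) is precisely the safety margin that covers this blocking. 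Verifying that \eqref{frhorrxxrr} with the given \( \zzQ \) really dominates \( 3\nunu \rr_{k}\zzQ(\xx_{k}) \) uniformly across blocks — i.e.\ checking monotonicity of \( f \) together with the \( \rho^{-1} \) slack — is the technical heart; everything else is the geometric-series summation that produces the constant \( \rho/(1-\rho) \).
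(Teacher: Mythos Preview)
Your proposal is correct and matches the paper's proof essentially step for step: the paper defines \( \rr_{k} = \rups \rho^{-k} \), applies Theorem~\ref{TUPUpsd} on each slice with the inflated level \( \xx + \log(\rr_{k}/\rups) \) to obtain the per-slice bound \( (\rups/\rr_{k})\ex^{-\xx} = \rho^{k}\ex^{-\xx} \), and sums from \( k=1 \) to get \( \frac{\rho}{1-\rho}\ex^{-\xx} \). Your observation that the sum starts at \( k=1 \) (since the region \( \rr < \rups \) is excluded from the event in \eqref{Psuprzz}) is exactly what yields the constant \( \rho/(1-\rho) \) rather than \( 1/(1-\rho) \), and your explanation of how the \( \rho^{-1} \) slack in the argument of \( f \) absorbs the discretization is precisely the mechanism the paper uses.
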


\begin{remark}
\label{RTsuprUP}
Formally the bound applies even with \( \rrb = \infty \) provided that \nameref{CSdref} 
is fulfilled on the whole set \( \Upsd \).
\end{remark}

\begin{remark}
If \( \gm = \infty \), then \( \zzQ(\xx) = 2 \QQq + \sqrt{8 \xx} \) and the condition \eqref{frhorrxxrr} on the drift simplifies to 
\( (2 \nunu \rr)^{-1} f(\rr,\rups) \geq \QQq + \sqrt{2\xx + 2 \log(\rr/\rups)} \).
\end{remark}

\begin{proof}
By \eqref{frhorrxxrr} and Theorem~\ref{TUPUpsdch} for any \( \rr > \rups \)
\begin{EQA}
	&& \nquad
	\P\biggl(  
		\sup_{\ups \in \B_{\rr}(\upss) \setminus \B_{\rho \rr}(\upss)} 
        \bigl\{ \UP(\ups,\upss) - f\bigl( \rr,\rups) \bigr) \bigr\}
		\geq 0 
	\biggr)
	\\
	& \leq &
	\P\biggl( 
		\frac{1}{\nunu \rr} \sup_{\ups \in \B_{\rr}(\upss)} \UP(\ups,\upss) 
		\geq 
		\zz\bigl(\xx + \log(\rr/\rups)\bigr) 
	\biggr)
	\leq 
	\frac{\rups}{\rr} \ex^{- \xx} .
\label{Puprrrho0}
\end{EQA}
Now defined \( \rr_{k} = \rups \rho^{-k} \) for \( k=0,1,2,\ldots \).
Define also \( \kb \eqdef \log (\rrb/\rups) + 1 \).
It follows from \eqref{Puprrrho0} that 
\begin{EQA}
	&& \nquad
	\P\biggl( 
        \sup_{\ups \in \B_{\rrb}(\upss) \setminus \B_{\rups}(\upss)} 
        \biggl\{ 
            \UP(\ups,\upss) - f\bigl( \rho^{-1} \dist(\ups,\upss), \rups \bigr)
        \biggr\} \geq 0 
    \biggr)
    \\
    & \leq &
	\sum_{k=1}^{k*} \P\biggl(  
		\frac{1}{\rr_{k}} \sup_{\ups \in \B_{\rr_{k}}(\upss) \setminus \B_{\rr_{k-1}}(\upss)} 
        \biggl\{ \UP(\ups,\upss) - f\bigl( \rr_{k},\rups \bigr) \biggr\}
		\geq 0 
	\biggr)
	\\
	& \leq &
	\ex^{-\xx} \sum_{k=1}^{\kb} \rho^{k}
	\leq 
	\frac{\rho}{1 - \rho} \ex^{- \xx} 
\label{PBrrbmBrups}
\end{EQA}
as required.
\end{proof}
%
%
%
%

\Section{Finite-dimensional smooth case}
\label{Sexpsmooth}
Here we discuss the special case when \( \Ups \) is an open subset in 
\( \R^{\dimp} \),
the stochastic process \( \UP(\ups) \) is absolutely continuous and its gradient
\( \nabla \UP(\ups) \eqdef d \UP(\ups) / d \ups \) 
has bounded exponential moments.

\begin{description}
\item[\( \bb{(\CS\! D)} \)\label{CSDref}]\textit{
There exist \( \gmb > 0 \), 
\( \nunu \ge 1 \), and for each \( \ups \in \Ups \),
a symmetric non-negative matrix \( \VV(\ups) \)
such that for any  \( \lambda \le \gmb \) 
and any unit vector \( \gammav \in \R^{\dimp} \), it holds
}
\begin{EQA}[c]
    \log \E \exp \Bigl\{
       \lambda 
       \frac{\gammav^{\T}\nabla \UP(\ups)}
            {\| \VV(\ups) \gamma \|}
    \Bigr\} 
    \le 
    \frac{\nunu^{2} \lambda^{2}}{2} .
\end{EQA}
\end{description}

A natural candidate for \( \VV^{2}(\ups) \) is the covariance matrix 
\( \Var\bigl( \nabla \UP(\ups) \bigr) \) provided that this matrix
is well posed. Then the constant \( \nunu \) can be taken close to one by reducing 
the value \( \gmb \).

In what follows we fix a subset \( \Upsd \) of \( \Ups \) and establish a bound for 
the maximum of the process \( \UP(\ups,\upsd) = \UP(\ups) - \UP(\upsd) \) on 
\( \Upsd \) for a fixed point \( \upsd \). 
We assume existence of a matrix \( \VVb = \VVb(\Upsd) \) such that \( \VV(\ups) \preceq 
\VVb \) for all \( \ups \in \Upsd \). 
We also assume that \( \mes \) is the Lebesgue measure on \( \Ups \).
First we show that the differentiability condition \nameref{CSDref} implies 
\nameref{CSdref}.

\begin{lemma}
\label{Lsmu}
Assume that \nameref{CSDref} holds with some \( \gmb \) and \( \VV(\ups) \preceq \VVb \)
for \( \ups \in \Upsd \). 
Consider any \( \ups,\upsd \in \Upsd \).
Then it holds for \( |\lambda| \le \gmb \)
\begin{EQA}
\label{zsmu}
    \log \E \exp \biggl\{
        \lambda \frac{\UP(\ups,\upsd)}{\| \VVb(\ups - \upsd) \|}
    \biggr\}
    & \le &
    \frac{\nunu^{2} \lambda^{2}}{2} .
\end{EQA}
\end{lemma}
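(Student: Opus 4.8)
The plan is to reconstruct the increment $\UP(\ups,\upsd)=\UP(\ups)-\UP(\upsd)$ from its gradient and then transfer the pointwise exponential bound $(\CS\! D)$ to it. Assuming, as holds in all the applications where $\Upsd$ is a $\dist$-ball and hence convex, that the segment joining $\upsd$ and $\ups$ lies in $\Upsd$, I set $\ups_{t}\eqdef\upsd+t(\ups-\upsd)$ for $t\in[0,1]$ and use the fundamental theorem of calculus in the form
\begin{EQA}[c]
\UP(\ups,\upsd)
=
\int_{0}^{1}(\ups-\upsd)^{\T}\nabla\UP(\ups_{t})\,dt .
\end{EQA}
The case $\ups=\upsd$ is vacuous, so I assume $\ups\neq\upsd$ and abbreviate $b\eqdef\|\VVb(\ups-\upsd)\|>0$.

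Since Lebesgue measure on $[0,1]$ is a probability measure, Jensen's inequality for the convex function $\exp$ gives
\begin{EQA}[c]
\exp\biggl\{\frac{\lambda}{b}\int_{0}^{1}(\ups-\upsd)^{\T}\nabla\UP(\ups_{t})\,dt\biggr\}
\le
\int_{0}^{1}\exp\biggl\{\frac{\lambda}{b}(\ups-\upsd)^{\T}\nabla\UP(\ups_{t})\biggr\}\,dt .
\end{EQA}
Taking expectations and interchanging $\E$ with the $t$-integral by Tonelli (the integrand is non-negative) reduces the claim to the uniform-in-$t$ estimate $\E\exp\bigl\{\lambda b^{-1}(\ups-\upsd)^{\T}\nabla\UP(\ups_{t})\bigr\}\le\exp(\nunu^{2}\lambda^{2}/2)$, because integrating this bound over $t\in[0,1]$ leaves the right-hand side unchanged and taking logarithms then yields the assertion.

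It remains to establish this estimate at a fixed $t$ via $(\CS\! D)$ applied along the unit direction $\gammav=(\ups-\upsd)/\|\ups-\upsd\|$. Rewriting the exponent as
\begin{EQA}[c]
\frac{\lambda}{b}(\ups-\upsd)^{\T}\nabla\UP(\ups_{t})
=
\mu_{t}\,\frac{\gammav^{\T}\nabla\UP(\ups_{t})}{\|\VV(\ups_{t})\gammav\|},
\qquad
\mu_{t}\eqdef\frac{\lambda\,\|\VV(\ups_{t})(\ups-\upsd)\|}{b},
\end{EQA}
puts it into exactly the form controlled by $(\CS\! D)$. The only delicate step is the domination: interpreting $\VV(\ups_{t})\preceq\VVb$ at the level of the induced norms, i.e. $\VV^{2}(\ups_{t})\preceq\VVb^{2}$, equivalently $\|\VV(\ups_{t})w\|\le\|\VVb w\|$ for every $w$, gives $\|\VV(\ups_{t})(\ups-\upsd)\|\le b$, so that simultaneously $|\mu_{t}|\le|\lambda|\le\gmb$ (hence $(\CS\! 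D)$ applies) and $\mu_{t}^{2}\le\lambda^{2}$. Then $(\CS\! D)$ yields $\E\exp\{\cdots\}\le\exp(\nunu^{2}\mu_{t}^{2}/2)\le\exp(\nunu^{2}\lambda^{2}/2)$, uniformly in $t$, which is precisely what was needed. I expect this domination step — together with the convexity guaranteeing $\ups_{t}\in\Upsd$ so that the bound $\VV(\ups_{t})\preceq\VVb$ is available along the whole segment — to be the main point to get right, the remainder being the routine Jensen--Tonelli transfer from the gradient to the increment.
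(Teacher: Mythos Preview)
Your proof is correct and follows essentially the same route as the paper: write the increment as $\int_{0}^{1}(\ups-\upsd)^{\T}\nabla\UP(\ups_{t})\,dt$, apply Jensen over $t\in[0,1]$, swap $\E$ and $\int$, and invoke $(\CS\! D)$ pointwise. The paper normalizes directly by $\|\VVb\gammav\|$ and leaves the domination $\VV(\ups_{t})\preceq\VVb$ implicit, whereas you spell out the rescaling $\mu_{t}=\lambda\|\VV(\ups_{t})(\ups-\upsd)\|/b$ explicitly; this is the same argument, just more carefully written.
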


\begin{proof}
Denote \( \normc = \| \ups - \upsd \| \),
\( \gammav = (\ups - \upsd)/\normc \).
Then
\begin{EQA}[c]
    \UP(\ups,\upsd)
    =
    \normc \gammav^{\T}\int_{0}^{1} \nabla \UP(\upsd + t \normc \gammav) dt 
\end{EQA}
and \( \| \VVb(\ups - \upsd) \| = \normc \| \VVb \gammav \| \).
Now the H\"older inequality and \nameref{CSDref} yield
\begin{EQA}
    && \nquad
    \E \exp \biggl\{
        \lambda \frac{\UP(\ups,\upsd)}{\| \VVb(\ups - \upsd) \|} 
        - \frac{\nunu^{2} \lambda^{2}}{2}
    \biggr\}
    \\
    &=& 
    \E \exp \biggl\{
        \int_{0}^{1} \biggl[\lambda
            \frac{\gammav^{\T} \nabla \UP(\upsd + t \normc \gammav)}
                 {\| \VVb \gammav \|}
            - \frac{\nunu^{2} \lambda^{2} }{2}
        \biggr]  dt
    \biggr\}
    \\
    & \le & 
    \int_{0}^{1} \E \exp \biggl\{
            \lambda
            \frac{\gammav^{\T} \nabla \UP(\upsd + t \normc \gammav)}
                 {\| \VVb \gammav \|}
            - \frac{\nunu^{2} \lambda^{2} }{2}
    \biggr\}\, dt
    \le 1
\end{EQA}
as required.
\end{proof}

The result of Lemma~\ref{Lsmu} enables us to define 
\( \dist(\ups,\upsc) = \| \VVb(\ups - \upsc) \| \) so that the corresponding 
\( \dist \)-ball coincides with the following ellipsoidal set \( \BU(\rr,\upsd) \):
\begin{EQA}[c]
	\BU(\rr,\upsd)
	\eqdef
	\bigl\{ \ups \colon \| \VVb (\ups - \upsd) \| \leq \rr \bigr\}.
\end{EQA}

\ifbook{}
{\Subsection{Covering and entropy for Euclidean distance}}
Now we bound the value \( \entrl(\Upsd) \) for \( \Upsd = \BU(\rr,\upsd) \).
Note that by change of variable one can reduce the study to the case \( \VVb = \Id_{\dimp} \)
and consider the entropy of the unit ball in \( \R^{\dimp} \) w.r.t. the Euclidean distance.
We use the following general result which allows to upperbound the covering number 
of a convex set in \( \R^{\dimp} \) for the Euclidean metric.

\begin{lemma}
\label{Lcoveringnumber}
Let \( \Upsd \) be a convex set in \( \R^{\dimp} \), \( \delta > 0 \), and \( B \) be 
the unit ball in \( \R^{\dimp} \).
Then the covering number \( \NN(\Upsd,\delta) \) fulfills
\begin{EQA}
	\NN(\Upsd,\delta)
	& \leq &
	\frac{\vol\bigl( \Upsd + (\delta/2)B \bigr)}{\vol(B)} (2/\delta)^{\dimp} \, .
\label{NNUpsddelt}
\end{EQA}
\end{lemma}

\begin{proof}
Let \( (\ups^{(i)} \, , i=1,\ldots,\NN ) \) be a maximal subset of \( \Upsd \) such that 
\( \| \ups^{(i)} - \ups^{(j)} \| \geq \delta \) for all \( i \ne j \).
By maximality, \( (\ups^{(i)}) \) is a \( \delta \)-net of \( \Upsd \).
Let also \( B \) be the unit ball in \( \R^{\dimp} \).
Note that the balls \( \ups^{(i)} + (\delta/2) B \) are disjoint and included 
in \( \Upsd + (\delta/2) B  \). 
Therefore,
\begin{EQA}
	\sum_{i \leq \NN} \vol\bigl( \ups^{(i)} + \frac{\delta}{2} B \bigr)
	& \leq &
	\vol\Bigl( \Upsd + \frac{\delta}{2} B \Bigr) ,
\label{sumiNNkball}
\end{EQA}
where \( \vol(A) \) means the Lebesgue measure of the set \( A \).
This yields
\begin{EQA}
	\NN \,\, (\delta/2)^{\dimp} \, \vol(B)
	& \leq &
	\vol\bigl( \Upsd + (\delta/2)B \bigr) 
\label{NNkd2pvUball}
\end{EQA}
and the claim of the lemma follows.
\end{proof}

\begin{lemma}[Entropy of a ball]
\label{Lcoveringball}
Let \( \Upsd = \BU(\rupd,\upss) \) and \( \rr_{k} = 2^{-k} \rupd \).
Then the covering numbers \( \NN_{k} \) fulfill with \( \delta = \rr_{k}/\rupd = 2^{-k} \)
\begin{EQA}
	\NN_{k}
	& \leq &
	(1 + 2/\delta)^{\dimp}
	=
	(1 + 2^{k+1})^{\dimp} \, .
\label{NNkball}
\end{EQA}
Moreover, with \( \cdimg = 4.67 \),
\begin{EQ}[rcccc]
	\entrlg(\Upsd) 
	&\le & 
	2 \log 2 + \cdimg \, \dimp
	& \leq & 
	6 \dimp ,
	\\
	\entrlq(\Upsd) 
	& \le & 
	\sqrt{2 \log 2 + \cdimg \dimp} 
	&\leq & 
	\sqrt{6 \dimp}.
\label{entrballb}
\end{EQ}
\end{lemma}

\begin{proof}
A change of variable reduces the statement to the case \( \VVb = \Id_{\dimp} \) and 
\( \rupd = 1 \).
For \( \delta = 2^{-k} \), this implies 
by Lemma~\ref{Lcoveringnumber} in view of \( \Upsd = B \)
\begin{EQA}
	\vol\Bigl( \Upsd + \frac{\delta}{2} B \Bigr) 
	&=&
	\bigl( 1 + {\delta}/{2} \bigr)^{\dimp} \vol(B) ,
\label{volUdB2}
\end{EQA}
that
\( \NN_{k} \leq (1 + 2/\delta)^{\dimp} \) as claimed.
Now we derive
\begin{EQ}[rcl]
	\entrlg(\Upsd)
	& \leq &
	\sum_{k=1}^{\infty} 2^{-k+1} \log(2 \NN_{k})
	\leq 
	\sum_{k=1}^{\infty} 2^{-k+1} \bigl\{ \log 2 + 2 \dimp \log (1 + 2^{k+1}) \bigr\}
	\\
	& \leq &
	{2 \log 2} + {\dimp} \sum_{k=0}^{\infty} 2^{-k+1} \log(1 + 2^{k})
	\\
	& \leq &
	{2 \log 2} + \cdimg {\dimp} 
\label{entrgball}
\end{EQ}
as required.
\end{proof}

Now we specify the local bounds of Theorem~\ref{TUPUpsdch} to the smooth case. 
We consider the local sets of the elliptic form 
\( \Upss(\rr) \eqdef \{ \ups: \| \VVc(\ups - \upss) \| \le \rr \} \), 
where \( \VVc \) dominates \( \VV(\ups) \) on this set: \( \VV(\ups) \preceq \VVc \).

\begin{theorem}
\label{Tsmoothpenlc}
Let \nameref{CSDref} hold with some \( \gmb > 0 \), and matrices \( \VV(\ups) \)
such that \( \VV(\ups) \preceq \VVc \) for all \( \ups \in \Upss(\rr) \)
and a fixed \( \rr \).
For any \( \xx \ge 1/2 \) 
\begin{EQA}[c]
\label{Upsdbounddsm}
    \P\Bigl\{ 
        \frac{1}{\nunu \, \rr} \sup_{\ups \in \Upss(\rr)} \bigl| \UP(\ups,\upss) \bigr| 
    	\geq 
		\zzQ(\xx)
    \Bigr\}
    \le 
    \ex^{-\xx} ,
\end{EQA}
where \( \zzQ(\xx) \) is given by \eqref{zzxxgfin} with 
\( \entrlq(\Upsd) \) and \( \entrlg(\Upsd) \) from \eqref{entrballb}.
\end{theorem}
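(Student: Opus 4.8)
The plan is to derive the statement from the abstract local-fluctuation bound of Theorem~\ref{TUPUpsd} by equipping \( \Ups \) with the elliptic semi-distance induced by the dominating matrix \( \VVc \). First I would pass from the gradient condition \( (\CS\! D) \) to the increment condition \( (\CS{d}) \). Since \( \VV(\ups) \preceq \VVc \) on \( \Upss(\rr) \), Lemma~\ref{Lsmu} applied with \( \VVb = \VVc \) gives, for all \( \ups,\upss \in \Upss(\rr) \) and \( |\lambda| \le \gmb \),
\[
    \log \E \exp\Bigl\{ \lambda \frac{\UP(\ups,\upss)}{\|\VVc(\ups - \upss)\|} \Bigr\}
    \le
    \frac{\nunu^{2}\lambda^{2}}{2}.
\]
Hence \( (\CS{d}) \) holds with \( \dist(\ups,\upsc) = \|\VVc(\ups - \upsc)\| \), and with this distance the \( \dist \)-ball centered at \( \upss \) coincides with the ellipsoid \( \Upss(\rr) = \BU(\rr,\upss) \). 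I would therefore take \( \Upsd = \Upss(\rr) \), a \( \dist \)-ball of radius \( \rups = \rr \).

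Next I would control the entropy quantities entering Theorem~\ref{TUPUpsd}. Choosing \( \mes \) to be Lebesgue measure, Lemma~\ref{LBtvtd} (in the form established in its proof, \( \tfrac12\entrlg(\Upsd) \le \cdimb\dimp \)) bounds \( \entrlg(\Upsd) \le 2\cdimb\dimp =: \QQg \), with \( \cdimb = 2 \) for \( \dimp \ge 2 \) and \( \cdimb = 2.7 \) for \( \dimp = 1 \). The Cauchy--Schwarz inequality \( \entrlq^{2}(\Upsd) \le \entrlg(\Upsd) \) then yields \( \entrlq(\Upsd) \le \sqrt{2\cdimb\dimp} =: \QQq \). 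These are exactly the values of \( \QQq \) and \( \QQg \) claimed in the statement.

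Finally I would invoke Theorem~\ref{TUPUpsd} with the above \( \Upsd \), \( \rups = \rr \), \( \QQq \) and \( \QQg \): its bound \eqref{P13nuUPxx} gives \( \P\bigl\{ (3\nunu\rr)^{-1}\sup_{\ups\in\Upss(\rr)}\UP(\ups,\upss) \ge \zzQ(\xx) \bigr\} \le \ex^{-\xx} \) with \( \zzQ \) from \eqref{zzxxgfin}. The one point requiring care, which I regard as the only genuine subtlety, is that the present statement controls the two-sided quantity \( |\UP(\ups,\upss)| \) whereas Theorem~\ref{TUPUpsd} is phrased for the one-sided supremum. This costs nothing here: the chaining bound \eqref{SkS0UP} in the proof of Theorem~\ref{TUPUpsd} already dominates \( \sup_{\ups}|\UP(\ups,\upss)| \) by \( 3\nunu\rr\, S \), since the increments \( \xib_{k} \) are taken in absolute value, so the same deviation bound applies verbatim to \( \sup_{\ups}|\UP(\ups,\upss)| \). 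Substituting \( \rr \) for \( \rups \) gives \eqref{Upsdbounddsm} and completes the argument.
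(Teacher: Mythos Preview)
Your proposal is correct and follows essentially the same route as the paper: reduce \( (\CS\! D) \) to \( (\CS{d}) \) via Lemma~\ref{Lsmu} with \( \dist(\ups,\upsc)=\|\VVc(\ups-\upsc)\| \), bound the entropy of the elliptic ball by Lemma~\ref{LBtvtd}, and then invoke Theorem~\ref{TUPUpsd}. Your handling of the two minor points the paper glosses over---the factor \(2\) in the entropy bound (so that \( \QQg = 2\cdimb\dimp \), consistent with the computation \eqref{entrlsmdp}) and the passage from the one-sided supremum in Theorem~\ref{TUPUpsd} to the two-sided \( |\UP(\ups,\upss)| \) via the absolute values already present in \eqref{SkS0UP}---is accurate and appropriate.
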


\begin{proof}
Lemma~\ref{Lcoveringball} implies \nameref{CSdref} with
\( \dist(\ups,\upss) = \| \VVc(\ups - \upss) \| \). 
Now the result follows from Theorem~\ref{TUPUpsdch}.
\end{proof}

\ifbook{}{
\Subsection{Generic chaining}
\begin{lemma}[Entropy for generic chaining]
\label{LBtvtd}
Let \( \Upsd = \BU(\rupd,\upsd) \).
Under the conditions of Lemma~\ref{Lsmu}, it holds
\( \entrgg(\Upsd) \le \cdimb \dimp \), 
where \( \cdimb = 2 \) for \( \dimp \ge 2 \),
and \( \cdimb = 2.7 \) for \( \dimp = 1 \).
\end{lemma}

\begin{proof}
The set \( \Upsd \) coincides with the ellipsoid 
\( \BU(\rupd,\upsd) \) while
the \( \dist \)-ball \( \B_{k}(\ups) \) coincides with the ellipsoid 
\( \BU(\rr_{k},\ups) \) for each \( k \ge 0 \).
By change of variables, the study can be reduced to the case with 
\( \upsd = 0 \), \( \VVb \equiv \Id_{\dimp} \), \( \rups = 1 \),
so that \( \BU(\rr,\ups) \) is the usual Euclidean ball in \( \R^{\dimp} \)
of radius \( \rr \). 
It is obvious that the measure of the overlap of two balls 
\( \BU(1,0) \) and \( \BU(2^{-k},\ups) \) for \( \| \ups \| \le 1 \) is minimized 
when \( \| \ups \| = 1 \), and this value is the same for all such \( \ups \). 
For \( k=0 \), the overlap of \( \BU(1,0) \) and \( \BU(1,\ups) \) contains the ball 
\( \BU(1/2,\ups/2) \) of radius \( 1/2 \), so that \( M_{0} \leq 2^{\dimp} \).
For \( k \geq 1 \), we use the following observation. 
Fix \( \upsdc \) with \( \| \upsdc \| = 1 \).
Let \( \rr \le 1 \),
\( \upsdu = (1 - \rr^{2}/2) \upsdc \) and \( \rrf = \rr - \rr^{2}/2 \).
If \( \ups \in \BU(\rrf,\upsdu) \), then 
\( \ups \in \BU(\rr,\upsdc) \) because 
\begin{EQA}[c]
    \| \upsdc - \ups \| \le \| \upsdc - \upsdu \| + \| \ups - \upsdu \| 
\le \rr^{2}/2 + \rr - \rr^{2}/2 \le \rr .
\label{upsdupsrrr}
\end{EQA}
Moreover, for each \( \ups \in \BU(\rrf,\upsdu) \), it holds 
with \( \uv = \ups - \upsdu \)
\begin{EQA}[c]
    \| \ups \|^{2}
    =
    \| \upsdu \|^{2} + \| \uv \|^{2} + 2 \uv^{\T} \upsdu
    \le 
    (1 - \rr^{2}/2)^{2} + |\rrf|^{2} + 2 \uv^{\T} \upsdu
    \le 
    1 + 2 \uv^{\T} \upsdu .
\label{upsscv}
\end{EQA}    
This means that either \( \ups = \upsdu + \uv \) or 
\( \upsdu - \uv \) belongs to the ball \( \BU(\rups,\upsd) \) and thus, 
\( \mes\bigl( \BU(1,0) \cap \BU(\rr,\ups) \bigr) \ge 
\mes\bigl( \BU(\rrf,\upsdu) \bigr)/2 \).
We conclude that 
\begin{EQA}[c]
    \frac{\mes\bigl( \BU(1,0) \bigr)}{\mes\bigl( \BU(1,0) \cap \BU(\rr,\upsdc) \bigr)}
    \le 
    \frac{2 \mes\bigl( \BU(1,0) \bigr)}{\mes\bigl( \BU(\rrf,0) \bigr)} 
    =
    2 (\rr - \rr^{2}/2)^{- \dimp}.
\label{mesrrrrc}
\end{EQA}    
This implies for \( k \ge 0 \) and \( \rr_{k} = 2^{-k} \) that 
\( 2 \NN_{k} \le 2^{2 + k \dimp}  (1 - 2^{-k-1})^{-\dimp} \).
The quantity \( \entrlg(\Upsd) \) can now be evaluated as
\begin{EQA}
    \frac{1}{2} \entrlg(\Upsd)
    & \le &
    \frac{1}{3} \log(2^{1+\dimp}) 
    + \frac{2}{3} \sum_{k=1}^{\infty} 2^{-k} \log(2^{2+k\dimp})
    - \frac{2\dimp}{3} \sum_{k=1}^{\infty} 2^{-k} \log(1 - 2^{-k-1})
    \\
    & = &
    \frac{\log 2}{3}  
    \Bigl[ 
        1 + \dimp + 2 \sum_{k=1}^{\infty} (2 + k \dimp) 2^{-k} 
    \Bigr]
    - \frac{2\dimp}{3} \sum_{k=1}^{\infty} 2^{-k} \log(1 - 2^{-k-1})        
    \le \cdimb \dimp ,
\label{entrlsmdp}
\end{EQA}    
where \( \cdimb = 2 \) for \( \dimp \ge 2 \),
and \( \cdimb = 2.7 \) for \( \dimp = 1 \),
and the result follows.
\end{proof}
}

\Section{Entropy of an ellipsoid}
\label{SEntrellips}
Let \( \HG \) be a positive self adjoint operator in \( \R^{\infty} \).
We are interested to describe the entropy of the elliptic set 
\begin{EQA}
	\Ellips_{\HG}(\rA)
	& \eqdef &
	\bigl\{ \ups \colon \| \HG (\ups - \upsd) \| \leq \rA \bigr\}
\label{Upsdr0del}
\end{EQA}
for given 
\( \upsd \in \R^{\infty} \) and \( \rA > 0 \)
with respect to the usual Euclidean distance in \( \R^{\infty} \).
Below we evaluate the entropy of this set assuming that \( \| \HG^{-1} \|_{\oper} = 1 \) and
\( \HG^{-2} \) is a trace operator, i.e., \( \hg_{1} = 1 \) and 
\begin{EQA}
\label{effdimHG}
	\dimA_{\HG}
	\eqdef
	\tr(\HG^{-2})
	=
	\sum_{j=1}^{\infty} \hg_{j}^{-2}
	& < &
	\infty ,
\label{sumjHGel}
\end{EQA}
where \( \hg_{1} \leq \hg_{2} \leq \ldots \) are the ordered eigenvalues of \( \HG \).

\begin{theorem}
\label{LHGaklogak}
Suppose that for some \( \alpha > 1 \)
\begin{EQA}
	\dimA_{\HG}(\alpha)
	& \eqdef &
	\sum_{j=1}^{\infty} \hg_{j}^{-2} \log^{\alpha}(\hg_{j}^{2})
	<
	\infty .
\label{dimAHGell}
\end{EQA}
Then for \( \Ellips = \Ellips_{\HG}(\rA) \)
\begin{EQA}
	\entrlq(\Ellips)
	& \leq &
	\CONST (\alpha-1)^{-1/2} \sqrt{\dimA_{\HG}(\alpha)} \, ,
\label{entrlelalp}
\end{EQA}
where \( \CONST \) is an absolute constant.
Furthermore,
\begin{EQA}
	\entrlg(\Ellips)
	& \leq &
	\CONST \dimAg_{\HG} 
	=
	\CONST \sum_{j=1}^{\infty} \hg_{j}^{-1} \, .
\label{entrlgelalp}
\end{EQA}
\end{theorem}

\begin{remark}
The log-factor in the definition of \( \dimA_{\HG}(\alpha) \) can be removed by
using a more advanced generic chaining and majorising measure technique.
However, in most of situations, the bound in terms of \( \dimA_{\HG}(\alpha) \) is also sharp.

The term \( \dimAg_{\HG} \) only appears in the sub-exponential case when \( \gm < \infty \).
In this case we need the condition \( \dimAg_{\HG} < \infty \) which requires 
\( \sum_{j} \hg_{j}^{-1} < \infty \), that is, 
a more rapid 
growth of the values \( \hg_{j} \) is necessary than in \eqref{dimAHGell}.
\end{remark}

\begin{proof}
We begin by a general lemma which bounds the covering numbers for the elliptic set 
\( \Ellips \) for the Euclidean distance.

\begin{lemma}[Entropy of the ellipsoid]
\label{Lentrlellips}
Let \( \Ellips = \Ellips_{\HG}(\rA) \) be an elliptic set from \eqref{Upsdr0del}
with \( \| \HG^{-1} \|_{\oper} = 1 \) and \( \tr(\HG^{-2}) < \infty \).
Let also \( \dist(\ups,\upsc) = \| \ups - \upsc \| \).
Then for \( \rr_{k} = 2^{-k} \rA \), 
the value \( \entrlq(\Ellips) \) from \eqref{entrldefch} satisfies
\begin{EQA}
	\entrlq(\Ellips)
	& \leq &
	\sum_{k=1}^{\infty} 2^{-k} 
		\sqrt{\log 2 + 2 \HGLsum_{\HG}(m_{k})} \, ,
\label{entrellUdk}
\end{EQA}
where \( m_{k} \) is the index \( j \) for which 
\( \hg_{m_{k}}^{2} = 2^{2k+1} \) and hence,
\begin{EQA}
	\hg_{j}^{2}
	& \leq &
	2^{2k+1} ,
	\qquad
	j \leq m_{k} \, ,
\label{hgpkdefel}
\end{EQA}
and 
\begin{EQA}
	\HGLsum_{\HG}(m)
	& \eqdef &
	\sum_{j=1}^{m} \log(3 \hg_{m}/\hg_{j}) .
\label{Lmedell}
\end{EQA}
\end{lemma}

\begin{remark}
For the ease of presentation, we supposed in the lemma that for each \( k \geq 1 \),
there exists some \( m_{k} \) with \( \hg_{m_{k}} = 2^{k+1/2} \).
The results easily extend to the case when this equality is approximate.
\end{remark}
\begin{proof}
Without loss of generality assume \( \upsd = 0 \).
A basis transform reduces the study to the case when \( \HG \) is diagonal:
\begin{EQA}
	\HG
	&=&
	\diag\bigl\{ \hg_{1},\hg_{2},\ldots \bigr\}. 
\label{HGdiagel}
\end{EQA}
We only have to evaluate the covering numbers \( \NN_{k} \).
Let us fix \( k \geq 1 \) and let \( m_{k} \) be given by \eqref{hgpkdefel}.
For any point \( \ups = (\ups_{1},\ups_{2},\ldots)^{\T} \) in \( \Ellips \), it holds
\begin{EQA}
	\sum_{j=m_{k}+1}^{\infty} \ups_{j}^{2}
	& = &
	\sum_{j=m_{k}+1}^{\infty} \hg_{j}^{-2} \, \hg_{j}^{2} \, \ups_{j}^{2}
	\\
	& \leq &
	\hg_{m_{k}+1}^{-2} 
		\sum_{j=m_{k}+1}^{\infty} \hg_{j}^{2} \ups_{j}^{2}
	\\
	& \leq &
	\hg_{m_{k}+1}^{-2} \sum_{j=1}^{\infty} \hg_{j}^{2} \ups_{j}^{2}
	\leq 
	2^{-2k-1} \rA^{2}
	\leq 
	\rr_{k}^{2} / 2 .
\label{sumpkp1infel}
\end{EQA}
Consider the elliptic set \( \Ellips_{k} \) in \( \R^{m_{k}} \) obtained by projection
\( \Proj_{k} \)
of \( \Ellips \) on the first \( m_{k} \) coordinates:
\begin{EQA}
	\Ellips_{k}
	& \eqdef &
	\bigl\{ (\ups_{1},\ldots,\ups_{m_{k}})^{\T} \colon 
	\sum_{j=1}^{m_{k}} \hg_{j}^{2} \ups_{j}^{2} \leq \rA^{2} \bigr\} .
\label{Upsdrpkel}
\end{EQA}
Let \( \MM_{k} \) be a \( \reps_{k} \)-net in \( \Ellips_{k} \) for 
\( \reps_{k}^{2} = \rr_{k}^{2}/2 \).
A \( \rr_{k} \)-net in \( \Ellips \) can be constructed from \( \MM_{k} \) in a simple way: 
just fix to zero the remaining coordinates 
\( \ups_{j} = 0 \) for \( j > m_{k} \).
If \( \upsd \) is constructed in this way, then 
\( \| \HG \upsd \| = \| \HG \Proj_{k} \upsd \| \leq 1 \), 
that is, \( \upsd \in \Ellips \).
Moreover, 
for any other point \( \ups \in \Ellips \), take \( \upsd \) such that 
their projections satisfy \( \| \Proj_{k} (\ups - \upsd) \| \leq \reps_{k} \).
Then by \eqref{sumpkp1infel}
\begin{EQA}
	\| \ups - \upsd \|^{2}
	&=&
	\| \Proj_{k} (\ups - \upsd) \|^{2} + \| (\Id - \Proj_{k}) \ups \|^{2}
	\leq 
	\rr_{k}^{2}/2 + \rr_{k}^{2}/2
	=
	\rr_{k}^{2} .
\label{upsupsd2el}
\end{EQA}
Therefore, the covering number \( \NN(\Ellips,\rr_{k}) \) 
of the infinite dimensional elliptic set \( \Ellips \) does not exceed the covering number 
\( \NN(\Ellips_{k},\reps_{k}) \) for the \( m_{k} \)-dimensional ellipsoid \( \Ellips_{k} \).
By Lemma~\ref{Lcoveringnumber} with \( \delta = \reps_{k} \), 
\begin{EQA}
	\NN(\Ellips_{k},\reps_{k})
	& \leq &
	\frac{\vol\bigl( \Ellips_{k} + (\reps_{k}/2)B_{k} \bigr)}{\vol(B_{k})} 
		(2/\reps_{k})^{m_{k}} \, ,
\label{NNUpsddeltel}
\end{EQA}
where \( B_{k} \) is the unit ball in \( \R^{m_{k}} \).
The bound \( \hg_{j}^{-2} \geq 2^{-2k-1} \) for \( j \leq m_{k} \) 
implies that \( \Ellips_{k} + (\reps_{k}/2) B_{k} \) is contained in the elliptic set 
\( (3/2) \Ellips_{k} \).

The definition implies due to \( \hg_{m_{k}}^{2} = 2^{2k+1} \)
\begin{EQA}
	\NN(\Ellips,\rr_{k})
	& \leq &
	\log \frac{\vol\bigl( (3/2) \Ellips_{k} \bigr)}{(\reps_{k}/2)^{m_{k}} \, \vol(B_{k})} 
	\\
	& \leq &
	\sum_{j=1}^{m_{k}} \log \frac{3 \hg_{j}^{-1} }{\reps_{k}}
	\leq 
	\sum_{j=1}^{m_{k}} \log \bigl( 3 \hg_{m_{k}}/\hg_{j} \bigr) 
	=
	\HGLsum_{\HG}(m_{k}) .
\label{logfrvolelk}
\end{EQA}
Now the result \eqref{entrellUdk} follows by the definition of \( \entrlq(\Ellips) \). 
\end{proof}


Denote \( \NN_{k} = \NN(\Ellips,\rr_{k}) \).
By the Cauchy-Schwartz inequality for \( \alpha > 1 \)
\begin{EQA}
	\entrlq(\Ellips)
	&=&
	\sum_{k=1}^{\infty} 2^{-k} \sqrt{2 \log(2 \NN_{k})}
	\leq 
	\biggl\{ 
		\sum_{k=1}^{\infty} k^{- \alpha} \sum_{k=1}^{\infty} 
			k^{\alpha} \, 2^{-2k} \, 2 \log(2 \NN_{k}) 
	\biggr\}^{1/2} .
\label{entrellalp1}
\end{EQA}
The use of \( \hg_{m_{\ell}}^{-2} = 2^{2 \ell + 1} \) and 
\( \hg_{j}^{2} \geq 4 \, \hg_{m_{\ell-1}}^{2} \) for \( j \in (m_{\ell-1},m_{\ell}] \) yields
by \eqref{logfrvolelk} with \( n_{\ell} \eqdef m_{\ell} - m_{\ell-1} \)
\begin{EQA}
	2 \log(2 \NN_{k})
	&=& 
	\sum_{\ell=1}^{k} \sum_{j=m_{\ell-1}+1}^{m_{\ell}} 
		\log \frac{9 \hg_{m_{k}}^{2}}{\hg_{j}^{2}}  
	\leq 
	\sum_{\ell=1}^{k} \bigl\{ k - \ell  + \log(36) \bigr\} n_{\ell}   
\label{entrellalp2}
\end{EQA}
Further, in view of \( \hg_{m_{k}} = 2^{k} \)
\begin{EQA}
	\sum_{k=1}^{\infty} k^{\alpha} \, 2^{-2k} \, 2\NN_{k}
	& \leq &
	\sum_{k=1}^{\infty} k^{\alpha} \, 2^{-2k} \sum_{\ell=1}^{k} 
		\bigl\{ k - \ell  + \log(36) \bigr\} n_{\ell}
	\\
	&=&
	\sum_{\ell=1}^{\infty} \sum_{k \geq \ell} 
		k^{\alpha} \, 2^{-2k} \bigl\{ k - \ell  + \log(36) \bigr\} n_{\ell}
	\\
	&=&
	\sum_{\ell=1}^{\infty} n_{\ell} \, 2^{-2 \ell} \sum_{k \geq \ell} 
		k^{\alpha} \, 2^{-2 (k - \ell)} \, \bigl\{ k - \ell  + \log(36) \bigr\} 
	\\
	&=&
	\CONST \sum_{\ell=1}^{\infty} n_{\ell} \, 2^{-2 \ell} \, \ell^{\alpha} .
\label{entrellalp3}
\end{EQA}
It remains to note that \( 2^{2\ell-1} \leq \hg_{j}^{2} \leq 2^{2\ell+1} \) for 
\( m_{\ell-1} < j \leq m_{\ell} \) and
\begin{EQA}
	\sum_{\ell=1}^{\infty} n_{\ell} \, 2^{-2 \ell} \, \ell^{\alpha}
	& \leq &
	\sum_{\ell=1}^{\infty} \sum_{j=m_{\ell-1}+1}^{m_{\ell}} 
		\hg_{j}^{-2} \, \log^{\alpha}(\hg_{j}^{2}) 
	=
	\sum_{j=1}^{\infty} \hg_{j}^{-2} \, \log^{\alpha}(\hg_{j}^{2}) 
	=
	\dimA_{\HG}(\alpha).
	\qquad
\label{entrellalp4}
\end{EQA}
The assertion \eqref{entrlelalp} now follows from \eqref{entrellalp1} in view of 
\( \sum_{k \geq 1} k^{-\alpha} \leq \CONST (\alpha - 1)^{-1} \).

The result on \( \entrlg(\Ellips) \) requires to bound the sum of \( 2^{-k} \log \NN_{k} \).
Similarly to the above, one easily derives
\begin{EQA}
	\sum_{k=1}^{\infty} 2^{-k} \, \NN_{k}
	& \leq &
	\sum_{k=1}^{\infty} 2^{-k} \sum_{\ell=1}^{k} 
		\bigl\{ k - \ell  + \log(36) \bigr\} n_{\ell}
	\\
	&=&
	\sum_{\ell=1}^{\infty} \sum_{k \geq \ell} 
		2^{-k} \bigl\{ k - \ell  + \log(36) \bigr\} n_{\ell}
	\\
	&=&
	\sum_{\ell=1}^{\infty} n_{\ell} \, 2^{- \ell} \sum_{k \geq \ell} 
		2^{- (k - \ell)} \, \bigl\{ k - \ell  + \log(36) \bigr\} 
	\\
	&=&
	\CONST \sum_{\ell=1}^{\infty} n_{\ell} \, 2^{- \ell}  
	\leq 
	\CONST \sum_{j=1}^{\infty} \hg_{j}^{-1} 
	= 
	\CONST \dimAg_{\HG} \, .
\label{sk1inf2mkNNk}
\end{EQA}
Theorem is proved.
\end{proof}

Now we present a special case for which the entropy can be bounded via the effective dimension
\( \dimA_{\HG} \) of \( \Upsd \) defined in \eqref{effdimHG}.

\begin{theorem}
\label{HGLsumfF}
Let \( \hg_{j}^{2} = f(j) \) for a monotonously increasing smooth function \( f(x) > 0 \).
If \( x f'(x) / f(x) \leq \beta \), then
\begin{EQ}[rcl]
	\entrlg(\Ellips)
	& \leq &
	\CONST {\beta \, \dimA_{\HG}},
	\\
	\entrlq(\Ellips)
	& \leq &
	\CONST \sqrt{\beta \, \dimA_{\HG}} ,
\label{entrlqElbeta}
\end{EQ}
where the effective dimension \( \dimA_{\HG} \) is defined in \eqref{effdimHG}. 
\end{theorem}

\begin{proof}
Obviously
\begin{EQA}
	\sum_{j=1}^{m} \log\biggl( \frac{\hg_{m}^{2}}{\hg_{j}^{2}} \biggr)
	&\leq &
	\int_{0}^{m} \log\biggl( \frac{f(m)}{f(t)} \biggr) dt .
\label{HGsumFx}
\end{EQA}
Now we note that the function
\begin{EQA}
	F(x)
	& \eqdef &
	\int_{0}^{x} \log\biggl( \frac{f(x)}{f(t)} \biggr) dt
\label{Fx1xfxel}
\end{EQA}
fulfills \( F(0) = 0 \) and
\begin{EQA}
	F'(x)
	&=&
	\frac{x f'(x)}{f(x)} .
\label{Fprimxfxel}
\end{EQA}
yielding
\begin{EQA}
	\sum_{j=1}^{m} \log\biggl( \frac{\hg_{m}^{2}}{\hg_{j}^{2}} \biggr)
	& \leq &
	\int_{0}^{m} \log\biggl( \frac{f(m)}{f(t)} \biggr) dt 
	=
	\int_{0}^{m} \frac{x f'(x)}{f(x)} dx .
\label{HGsumFx}
\end{EQA}
Moreover, 
In particular, if \( F'(x) \leq \beta \), then
\( F(x) \leq \beta x \) and thus, \( \HGLsum_{\HG}(m_{k}) \leq \beta m_{k} \).
Now it holds similarly to \eqref{entrellalp4}
\begin{EQA}
	\sum_{k=1}^{\infty} 2^{-k} m_{k}
	&=&
	\sum_{k=1}^{\infty} 2^{-k} \sum_{\ell=1}^{k} n_{\ell}
	\leq 
	\sum_{\ell=1}^{\infty} n_{\ell} \sum_{k \geq \ell}^{\infty} 2^{-k}
	=
	\sum_{\ell=1}^{\infty} n_{\ell} 2^{-\ell}
	\leq
	2 \sum_{j=1}^{\infty} \hg_{j}^{-2} 
	=
	2 \dimA_{\HG} \, ,
\label{HGsumFx3}
\end{EQA}
and the statement \eqref{entrlelalp} follows.
\end{proof}

Now we evaluate the entropy for the cases when \( \hg_{j} \) grow polynomially.

\begin{theorem}
\label{Tentrlell}
Let \( \hg_{j}^{2} = 1 + \kappa^{2} j^{2\beta} \) for \( \beta > 1/2 \) and some small value 
\( \kappa \).
Then
\begin{EQA}
	\entrlq(\Ellips)
	& \leq &
	\CONST (2\beta-1)^{-1/2} \kappa^{-1/(2\beta)} \, ,
	\\
	\entrlg(\Ellips)
	& \leq &
	\CONST (2\beta-1)^{-1} \kappa^{-1/\beta} \, ,
\label{entrlLbeta}
\end{EQA}
where \( \CONST \) is an absolute constant.
\end{theorem}

\begin{proof}
For \( f(x) = 1 + \kappa^{2} x^{2\beta} \), it holds \( x f'(x) / f(x) \leq 2 \beta \) and 
we can apply the result of Theorem~\ref{HGLsumfF}.
With \( \beta > 1/2 \), the effective dimension \( \dimA_{\HG} \) from \eqref{effdimHG} 
fulfills 
\begin{EQA}
	\dimA_{\HG}
	& \leq &
	\sum_{j=1}^{\infty} \hg_{j}^{-2}
	=
	\sum_{j=1}^{\infty} \frac{1}{1 + \kappa^{2} j^{2\beta}}
	\leq 
	\int_{0}^{\infty} \frac{1}{1 + \kappa^{2} x^{2\beta}} dx
	=
	\CONST \kappa^{-1/\beta} \frac{1}{2\beta-1}
\label{dimAHGelli}
\end{EQA}
and the result follows by \eqref{entrlqElbeta}.
\end{proof}

%
%
%

\Section{Roughness constraints for dimension reduction} 
\label{Sroughexp} 

The local bounds of Theorems~\ref{TUPUpsdch} and \ref{TsuprUP} can be extended in several 
directions. 
Here we briefly discuss one extension related to the use of a smoothness 
condition on the parameter \( \ups \). 
Let 
\( \penr(\ups) \) be a non-negative \emph{penalty} function on \( \Ups \). 
A particular example of such penalty function is the 
\emph{roughness penalty} \( \penr(\ups) = \| \GP \ups \|^{2} \) for a 
given \( \dimp \)-matrix \( \GP^{2} \). 
Let \( \rr \) be fixed.
Consider the intersection of the ball 
\( \B_{\rr}(\upsd) \) with the set \( \Ups \) given by the constraint 
\( \penr(\ups) \le 1 \): 
\begin{EQA}[c]
    \Ups_{\penr}(\rr) 
    = 
    \bigl\{ 
        \ups \in \Ups \colon \dist(\ups,\upsd) \le \rr; \, \penr(\ups) \le 1 
    \bigr\},
\label{rrdistpen} 
\end{EQA}   
for a fixed central point \( \upsd \) and the radius \( \rr \).
Here and below we assume that the central point \( \upsd \) is ``smooth'' in the sense that 
\( \penr(\upsd) < 1 \).
One can easily check that the results of Theorems~\ref{TUPUpsdch} and \ref{TsuprUP} and 
their corollaries extend to this situation without any change. 
The only difference is in the definition of the values \( \entrlq(\Upss) \) and 
\( \entrlg(\Upss) \) for \( \Upss = \Ups_{\penr}(\rr) \). 
Examples below show that the use of the penalization can 
substantially reduce these values relative to the non-penalized case. 

We consider the case of a smooth process \( \UP \) given on a local 
set \( \Ups_{\GP}(\rr) \) of the form
\begin{EQA}[c]
    \Ups_{\GP}(\rr) 
    = 
    \bigl\{ 
        \ups \in \Ups \colon \| \VVc (\ups - \upsd) \| \le \rr; \,\, \| \GP \ups \| \le 1 
    \bigr\},
\label{rrdistGP} 
\end{EQA}   
with the distance \( \dist(\ups,\upsd) = \| \VVc(\ups - \upsd) \| \) and a 
smoothness constraint \( \| \GP \ups \|^{2} \le 1 \).
Then the set \( \Ups_{\GP}(\rr) \) is contained in an elliptic set
\begin{EQA}[c]
	\Upss 
	\eqdef
	\bigl\{ 
		\thetav \colon 
		\| \GP \ups \|^{2} + \| \VVc (\ups - \upsd) \|^{2}  
		\leq 1 + \rr^{2}
	\bigr\} .
\label{Brrro}
\end{EQA}
Define
\begin{EQA}
	\VVGP^{2} 
	&=& 
	\VVc^{2} + \GP^{2},
	\\
	\upsdGP 
	&=& 
	\VVGP^{-2} \VVc^{2} \upsd .
\label{VVGPdef}
\end{EQA}
Then 
\begin{EQA}
	\upsd - \upsdGP 
	&=&
	(\Id_{\dimp} - \VVGP^{-2} \VVc^{2}) \upsd
	=
	\VVGP^{-2} \GP^{2} \upsd ,
\label{upsdupssGP}
\end{EQA}
and one can get by simple algebra
\begin{EQA}
	\| \GP \ups \|^{2} + \| \VVc (\ups - \upsd) \|^{2}
	&=&
	\| \VVGP (\ups - \upsdGP) \|^{2} + \| \GP \upsdGP \|^{2} 
	+ \| \VVc (\upsdGP - \upsd) \|^{2}
	\\
	&=&
	\| \VVGP (\ups - \upsdGP) \|^{2} 
	+
	{\upsd}^{\T} \GP^{2} \VVGP^{-2} \VVc^{2} \upsd
	=	
	\| \VVGP (\ups - \upsd) \|^{2} 
	+ d_{\GP}
\label{GPups2uud}
\end{EQA}
with \( d_{\GP} = {\upsd}^{\T} \GP^{2} \VVGP^{-2} \VVc^{2} \upsd \leq \| \GP \upsd \|^{2} < 1 \).
A change of variables \( \ups \to \VVc (\ups - \upsdGP) \) allows us to reduce the study to the case
of an ellipsoid considered in Section~\ref{SEntrellips}.
For \( \HG \) defined by \( \HG^{-2} = \VVc \, \VVGP^{-2} \, \VVc \), 
the set \( \Upss \) from \eqref{Brrro} is transferred 
into the elliptic set
\begin{EQA}[c]
	\Ups_{\HG}(\rr)
	=
	\bigl\{ 
		\ups \colon 
		\| \HG \ups \|^{2}  
		\leq 1 + \rr^{2} - d_{\GP}
	\bigr\} ,
\label{BrrrodGP}
\end{EQA}
whose entropy for the Euclidean distance is given via the effective dimension 
\( \dimA_{\HG} = \tr (\HG^{-2}) \). 

Now we are prepared to state the penalized bound for the process \( \UP(\cdot) \) over 
\( \Upss \) which naturally generalizes 
the result of Theorem~\ref{Tsmoothpenlc} to the non-penalized case. 

\begin{theorem}
\label{Texpro} 
Let \( \Upss = \Ups_{\penr}(\rr) \) be given by \eqref{rrdistGP} and \( \| \GP \upsd \| \leq 1 \).
Let also \nameref{CSDref} hold with some \( \gmb \) and a matrix 
\( \VV(\ups) \preceq \VVc \) for all \( \ups \in \Upss \).
For \( \HG \) defined by \( \HG^{-2} = \VVc \, \VVGP^{-2} \, \VVc \),
let the entropy values \( \entrlq(\Upsd) \) and \( \entrlg(\Upsd) \) 
for the elliptic set \( \Ups_{\HG}(\rr) \) from \eqref{Brrro} be given in Section~\ref{SEntrellips}.
Then for any \( \xx \ge 1/2 \) 
\begin{EQA}[c]
\label{Upsdboundpro}
    \P\Bigl\{ 
        \frac{1}{\nunu \rr}
        \sup_{\ups \in \Ups_{\penr}(\rr)} \bigl| \UP(\ups,\upsd) \bigr| 
    	\geq 
		\zzQ(\xx)
    \Bigr\}
    \le 
    \ex^{-\xx} ,
\end{EQA}
where \( \zzQ(\xx) \) is from \eqref{zzxxgfin} with 
these values \( \entrlq(\Upsd) \) and \( \entrlg(\Upsd) \).
\end{theorem}

\Section{Bound for a bivariate process}
Consider a smooth bivariate process \( \UP(\ups) = \UP(\ups_{1},\ups_{2}) \) over a product 
set \( \Ups = \Ups_{1} \times \Ups_{2} \),
where \( \Ups_{j} \subseteq \R^{\dimp_{j}} \) for \( j=1,2 \).
We suppose that partial derivatives of \( \UP \) have uniform exponential moments.

\begin{description}
\item[\( \bb{(\CS\! D_{p})} \)\label{CSD12ref}]\textit{
There exist \( \gmb > 0 \), 
\( \nunu \ge 1 \), and for each 
\( \ups = (\ups_{1},\ups_{2}) \in \Ups = \Ups_{1} \times \Ups_{2} \),
symmetric non-negative \( \dimp_{j} \times \dimp_{j} \) matrices \( \VV_{j} \), 
\( j = 1,2 \), such that for any  \( \lambda \le \gmb \) 
and any unit vector \( \gammav \in \R^{\dimp} \), it holds
}
\begin{EQA}[c]
    \log \E \exp \Bigl\{
       \lambda 
       \frac{\gammav^{\T}\nabla_{j} \UP(\ups)}
            {\| \VV_{j} \gamma \|}
    \Bigr\} 
    \le 
    \frac{\nunu^{2} \lambda^{2}}{2} , 
    \qquad j=1,2.
\end{EQA}
Here \( \nabla_{j} \UP \) denotes the partial derivative \( \partial \UP/ \partial \ups_{j} \)
for \( j=1,2 \).
\end{description}

This allows to establish an exponential bound for the process \( \UP(\ups) \).
Let us fix the central point \( \upsd = (\upsd_{1},\upsd_{2}) \) and a radius \( \rr \). 
As usual, 
\begin{EQA}
	\Ups_{j}(\rr) 
	&=& 
	\{ \ups_{j} \in \Ups_{j} \colon \| \VV_{j} (\ups_{j} - \upsd_{j}) \| \leq \rr \}
\label{Upsjr0}
\end{EQA}  
denotes the ball in \( \Ups_{j} \) with this radius.

\begin{theorem}
\label{TUPprod}
Let a bivariate random process \( \UP(\ups) \) on \( \Ups = \Ups_{1} \times \Ups_{2} \)
satisfy \nameref{CSD12ref}.
Then for any \( \rupd \) and \( \xx \geq 1/2 \), it holds on the product set 
\( \Upss = \Ups_{1}(\rupd) \times \Ups_{2}(\rupd) \)
\begin{EQA}[c]
\label{Upsdboundpro}
    \P\Bigl\{ 
        \frac{1}{\sqrt{8} \, \nunu \, \rupd}
        \sup_{\ups \in \Upss} \bigl| \UP(\ups,\upsd) \bigr| 
    	\geq 
		\zzQ(\xx)
    \Bigr\}
    \le 
    \ex^{-\xx} ,
\end{EQA}
with \( \zzQ(\xx) \) from \eqref{zzxxgfin} for 
\( \entrlq(\Upsd) = \entrlq(\Ups_{1}) + \entrlq(\Ups_{2}) \) and 
\( \entrlg(\Upsd) = \entrlg(\Ups_{1}) + \entrlg(\Ups_{2}) \).
\end{theorem}

\begin{proof}
By the H\"older inequality, \eqref{gUUem}, and \eqref{gUUgem}, it holds for 
\( \| \gammav_{1} \| = \| \gammav_{2} \| = 1 \) and \( \ups \in \Upss \)
\begin{EQA}
	&& \nquad
	\log \E \exp\biggl\{ 
		\frac{\lambda}{2} 
        (\gammav_{1}, \gammav_{2})^{\T} \nabla \UP(\ups) 
	\biggr\}
	\\
	& \leq &
	\frac{1}{2} \log \E \exp\biggl\{ 
		{\lambda} \gammav_{1}^{\T} \nabla_{1} \UP(\ups) 
	\biggr\}
	+
	\frac{1}{2} \log \E \exp\biggl\{ 
		{\lambda} \gammav_{2}^{\T} \nabla_{2} \UP(\ups)
	\biggr\}
	\\
	& \leq &
	\frac{1}{2} \log \E \exp\biggl\{ 
		{\lambda} \gammav_{1}^{\T} \nabla_{1} \UP(\ups) 
	\biggr\} 
	+
	\frac{1}{2} \log \E \exp\biggl\{ 
		{\lambda} \gammav_{2}^{\T} \nabla_{2} \UP(\ups)
	\biggr\} 
	\\
	& \leq &
	\frac{\nunu^{2} \lambda^{2}}{2},
	\qquad
	|\lambda| \leq \gm .
\label{lexpuvUpsbi}
\end{EQA}
This means that the bivariate process \( \UP(\ups)/2 \) fulfills the full dimensional 
condition \nameref{CSDref} with \( \VV = \block(\VV_{1},\VV_{2}) \).
Let \( \ups = (\ups_{1},\ups_{2}) \) and \( \upsd = (\upsd_{1},\upsd_{2}) \) be a couple 
of points in \( \Ups \) such that 
\( \| \VV_{j} (\ups_{j} - \upsd_{j}) \| \leq \eps \) for \( j=1,2 \).
Then obviously 
\begin{EQA}
	\| \VV (\ups - \upsd) \|^{2}
	& \leq &
	2 \eps^{2} .
\label{VVuudbi}
\end{EQA}
Therefore, the direct product of two \( \eps \)-nets \( \MM_{j}(\eps) \) in 
\( \Ups_{j} \) for \( j=1,2 \)
yield a \( \sqrt{2} \eps \)-net \( \MM(\eps) = \MM_{1}(\eps) \times \MM_{2}(\eps) \) in 
the product space \( \Ups \).

Due to \eqref{VVuudbi}, the product set 
\( \Upss \eqdef \Ups_{1}(\rupd) \times \Ups_{2}(\rupd) \) 
has the radius \( \rupd \).
Now we can easily bound the entropy of the product set \( \Upss \) via the entropy
of \( \Ups_{1} \) and \( \Ups_{2} \).
Indeed, it holds with \( \rr_{k} = 2^{-k} \rupd \) 
for the cardinality \( \NN_{k} \) of \( \MM_{k} = \MM(\rr_{k}) \)
\begin{EQA}
	\NN_{k}
	&=&
	\NN_{k}(\Ups_{1}) \NN_{k}(\Ups_{2})
\label{NNkNN1NN2}
\end{EQA}
and 
\begin{EQA}
	\entrlg(\Upss)
	& \leq &
	\sum_{k=1}^{\infty} 2^{-k+1} \log(2 \NN_{k})
	\\
	& \leq &
	\sum_{k=1}^{\infty} 2^{-k+1} \log(2 \NN_{k}(\Ups_{1})) 
	+ \sum_{k=1}^{\infty} 2^{-k+1} \log(2 \NN_{k}(\Ups_{2}))
	\leq 
	\entrlg(\Ups_{1}) + \entrlg(\Ups_{2}) .
\label{entrgUU1U2}
\end{EQA}
Similarly
\begin{EQA}
	\entrlq(\Upss)
	& \leq &
	\sum_{k=1}^{\infty} 2^{-k} \sqrt{2 \log(2 \NN_{k})}
	\\
	& \leq &
	\sum_{k=1}^{\infty} 2^{-k} 
		\sqrt{2 \log(2 \NN_{k}(\Ups_{1})) + 2 \log(2 \NN_{k}(\Ups_{2}))}
	\leq 
	\entrlq(\Ups_{1}) + \entrlq(\Ups_{2}) .
\label{entrqUU1U2}
\end{EQA}
Now we just apply the assertion of Theorem~\ref{Tsmoothpenlc} to the process 
\( \UP(\ups)/2 \) and account for the fact that by \eqref{VVuudbi} the radius of \( \Upss \) is 
\( \sqrt{2} \rupd \).
\end{proof}

\Section{A bound for the norm of a vector random process}
Let \( \UU(\ups) \), \( \ups \in \Ups \), be a smooth centered random vector process 
with values in \( \R^{\dimq} \), where \( \Ups \subseteq \R^{\dimp} \). 
Let also \( \UU(\upss) = 0 \) for a fixed point \( \upss \in \Ups \).
Without loss of generality assume \( \upss = 0 \).
We aim to bound the maximum of the norm \( \| \UU(\ups) \| \)
over a vicinity \( \Upss \) of \( \upss \).
By \( \nabla \UP(\ups) \) we denote the \( \dimp \times \dimq \) matrix with entries
\( \nabla_{\ups_{i}} \UP_{j} \), \( i \leq \dimp \), \( j \leq \dimq \).
Suppose that \( \UU(\ups) \) satisfies for each 
\( \gammav_{1} \in \R^{\dimp} \) and \( \gammav_{2} \in \R^{\dimq} \) with 
\( \| \gammav_{1} \| = \| \gammav_{2} \| = 1 \)
\begin{EQA}[c]
    \sup_{\ups \in \Ups} \log \E \exp\Bigl\{ 
		\lambda \gammav_{1}^{\T} \nabla \UU(\ups) \gammav_{2} 
	\Bigr\} 
    \le 
    \frac{\nunu^{2} \lambda^{2}}{2} , 
    \qquad 
    |\lambda| \leq \gm.
\label{gUUgem}
\end{EQA}
Condition \eqref{gUUgem}
implies for any \( \ups \in \Upss \) with \( \| \ups \| \leq \rr \) and
\( \gammav \in \R^{\dimq} \) with
\( \| \gammav \| = 1 \) in view of \( \UU(\upss) = 0 \) by Lemma~\ref{Lsmu}
\begin{EQA}[c]
    \log \E \exp\Bigl\{ 
    	\frac{\lambda}{\rr} \UU(\ups)^{\T} \gammav
	\Bigr\} 
    \le 
    \frac{\nunu^{2} \lambda^{2} \| \ups \|^{2}}{2 \rr^{2}} ,
    \qquad 
    |\lambda| \leq \gm .
\label{gUUem}
\end{EQA}   
In what follows, we use the representation
\begin{EQA}[c]
    \| \UU(\ups) \|
    =
    \sup_{\| \uv \| \leq \rr} \frac{1}{\rr} \uv^{\T} \UU(\ups) .
\label{UU2ups}
\end{EQA}    
This implies for 
\( \Upss(\rr) = \bigl\{ \ups \in \Ups \colon \| \ups - \upss \| \leq \rr \bigr\} \)
\begin{EQA}[c]
    \sup_{\ups \in \Upss(\rr)} \| \UU(\ups) \|
    =
    \sup_{\ups \in \Upss(\rr)} \,\, \sup_{\| \uv \| \leq \rr} 
        \frac{1}{\rr}  \uv^{\T} \UU(\ups) .
\label{UU2vups}
\end{EQA}   
Consider a bivariate process \( \uv^{\T} \UU(\ups) \) of 
\( \uv \in \R^{\dimq} \) and \( \ups \in \Ups \subset \R^{\dimp} \).
By definition \( \E \uv^{\T} \UU(\ups) = 0 \).
Further, 
\( \nabla_{\uv} \bigl[ \uv^{\T} \UU(\ups) \bigr] = \UU(\ups) \) while 
\( \nabla_{\ups} \bigl[ \uv^{\T} \UU(\ups) \bigr] = \uv^{\T} \nabla \UU(\ups) 
= \| \uv \| \gammav^{\T} \nabla \UU(\ups) \) for \( \gammav = \uv / \| \uv \| \).
Suppose that \( \uv \in \R^{\dimq} \) and \( \ups \in \Ups \) are such that 
\( \| \uv \| \leq \rr \) and \( \| \ups \| \leq \rr \).
By \eqref{gUUgem}, 
it holds for \( \gammav \in \R^{\dimp} \) with
\( \| \gammav \| = 1 \) and \( \ups \in \Upss(\rr) \)
\begin{EQA}
	\log \E \exp\biggl\{ 
		\frac{\lambda}{\rr} \nabla_{\ups} \bigl[ \uv^{\T} \UU(\ups) \bigr] \gammav
	\biggr\}
	& \leq &
	\log \E \exp\biggl\{ 
		\frac{\lambda}{\rr} \uv^{\T} \nabla \UU(\ups) \gammav
	\biggr\}
	\leq 
	\frac{\nunu^{2} \lambda^{2}}{2} \, ,
\label{logEexlrnuuTU}
\end{EQA}
and by \eqref{gUUem} for a unit vector \( \gammav \in \R^{\dimq} \)
\begin{EQA}
	\log \E \exp\biggl\{ 
		\frac{\lambda}{\rr} \nabla_{\uv} \bigl[ \uv^{\T} \UU(\ups) \bigr] \gammav
	\biggr\}
	& \leq &
	\log \E \exp\biggl\{ 
		\frac{\lambda}{\rr} \UU(\ups) \gammav
	\biggr\}
	\leq 
	\frac{\nunu^{2} \lambda^{2}}{2} \, .
\label{lEelrnvUg}
\end{EQA}
Therefore, \nameref{CSD12ref} is fulfilled for \( \uv^{\T} \UU(\ups) \) and 
Theorem~\ref{Tsmoothpenlc} applies.
We summarize our findings in the following theorem.

\begin{theorem}
\label{Tsqnorm}
Let a random \( \dimp \)-vector process \( \UU(\ups) \) for 
\( \ups \in \Ups \subseteq \R^{\dimp} \)
fulfill \( \UU(\upss) = 0 \), \( \E \UU(\ups) \equiv 0 \), 
and the condition \eqref{gUUgem} be satisfied. 
Then for each \( \rr \) and any \( \xx \ge 1/2 \), it holds for \( \Upss = \Upss(\rr) \)
\begin{EQA}
\label{Upsdboundno}
    \P \Bigl\{ 
        \sup_{\ups \in \Upss(\rr)} \bigl\| \UU(\ups) \bigr\| 
    	\geq 
		\sqrt{8} \nunu \rr \, \zzQ(\xx)
    \Bigr\}
    & \le & 
    \ex^{-\xx} ,
\end{EQA}
where \( \zzQ(\xx) \) is given by \eqref{zzxxgfin} with 
\( \entrlq = \entrlq(\Upss) + \sqrt{6 \dimq} \) and \( \entrlq = \entrlq(\Upss) + 6 \dimq \).
\end{theorem}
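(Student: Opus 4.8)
The plan is to reduce this vector bound to a scalar maximum-of-smooth-process estimate by dualising the Euclidean norm, and then to invoke the finite-dimensional smooth bound of Theorem~\ref{Tsmoothpenlc} in the enlarged dimension $\dimp + \dimq$. Using the representation \eqref{UU2vups}, namely $\sup_{\ups \in \Upss(\rr)} \| \UU(\ups) \| = \rr^{-1} \sup W$ with $W(\uv,\ups) \eqdef \uv^{\T} \UU(\ups)$ and the supremum taken over $\ups \in \Upss(\rr)$ and $\| \uv \| \le \rr$, I would treat $W$ as a single centred smooth scalar field on the region $\{ \| \uv \| \le \rr \} \times \Upss(\rr) \subset \R^{\dimp+\dimq}$, which is contained in the Euclidean ball $\{ \| \uv \|^{2} + \| \ups \|^{2} \le 2 \rr^{2} \}$. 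Since $\E W \equiv 0$ and $W$ vanishes at the base point $(\mathbf{0},\upss)$, the assertion becomes an upper bound for the maximum of a centred smooth process on a ball in $\R^{\dimp+\dimq}$.

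The decisive step is to verify that $W$ satisfies the differentiability condition $(\CS\! D)$ on this ball; this is exactly the computation recorded in \eqref{lexpuvUps}, which I would organise as follows. The gradient of $W$ splits into the two blocks $\nabla_{\uv} W = \UU(\ups)$ and $\nabla_{\ups} W = \uv^{\T} \nabla \UU(\ups)$. For the first block I would invoke \eqref{gUUem}, itself a consequence of Lemma~\ref{Lsmu}, to control $\UU(\ups)$ through $\| \ups \|$; for the second block I would apply the hypothesis \eqref{gUUgem} to the unit direction $\uv/\|\uv\|$, at the cost of a weight $\| \uv \|$. A Cauchy--Schwarz (Hölder with exponents $2,2$) splitting of the joint exponential moment then bounds the cumulant of the directional derivative by $\nunu^{2}\lambda^{2}(\| \ups \|^{2} + \| \uv \|^{2})/(4\rr^{2})$, and the ball constraint $\| \ups \|^{2} + \| \uv \|^{2} \le 2\rr^{2}$ collapses this to $\nunu^{2}\lambda^{2}/2$. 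Hence $W$ obeys $(\CS\! D)$ with a normalising matrix proportional to $\Id_{\dimp+\dimq}$ and an effective scale of order $\rr$, valid for $| \lambda | \le \gm$.

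With $(\CS\! D)$ in hand I would apply Theorem~\ref{Tsmoothpenlc} verbatim, reading $\dimp + \dimq$ for the ambient dimension. Lemma~\ref{LBtvtd} supplies the entropy constants for a ball in $\R^{\dimp+\dimq}$, which are precisely $\QQg = 2\cdimb(\dimp+\dimq)$ and $\QQq = \sqrt{2\cdimb(\dimp+\dimq)}$ as quoted, so the function $\zz(\xx)$ of \eqref{zzxxgfin} is the correct one and the probability $\ex^{-\xx}$ is inherited unchanged. Tracking the scale then produces the prefactor: the smooth-case bound contributes $3\nunu$ times a radius of order $\rr^{2}$ measured in the normalising metric, and the factor $\rr^{-1}$ from the dualisation $\| \UU \| = \rr^{-1}\sup W$ brings this back to order $\rr$, giving the leading constant $6\nunu\rr$ in front of $\zz(\xx)$. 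Finally, since the product region on which the supremum is actually taken is contained in the ball, the deviation bound for $\sup W$ transfers directly to $\sup_{\ups \in \Upss(\rr)} \| \UU(\ups) \|$.

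The step needing the most care is the constant bookkeeping in the gradient estimate: the Hölder split must be arranged so that $\| \ups \|^{2}$ and $\| \uv \|^{2}$ enter additively and are absorbed through the single joint budget $\| \ups \|^{2} + \| \uv \|^{2} \le 2\rr^{2}$, rather than each being handled separately. This is what keeps the normalising matrix isotropic and proportional to $\Id_{\dimp+\dimq}$, which in turn is what lets the clean ball entropy of Lemma~\ref{LBtvtd}, and hence the combined dimension $\dimp + \dimq$, be used. Everything beyond this is the routine substitution of $\dimp + \dimq$ into Theorem~\ref{Tsmoothpenlc}.
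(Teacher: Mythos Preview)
Your proposal is correct and follows essentially the same approach as the paper: dualise the norm via \eqref{UU2vups}, verify condition \((\CS\! D)\) for the bivariate process \(\uv^{\T}\UU(\ups)\) through the H\"older split recorded in \eqref{lexpuvUps}, and then apply Theorem~\ref{Tsmoothpenlc} in the enlarged dimension \(\dimp+\dimq\). The paper's proof is in fact the one-line statement that Theorem~\ref{Tsmoothpenlc} is applied to \(\uv^{\T}\UU(\ups)/2\) as a process in \((\uv,\ups)\in\R^{\dimp+\dimq}\), with all the substantive work done in the displayed computations preceding the theorem, which you have reproduced accurately.
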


\Section{A bound for a family of quadratic forms}
Now we consider an extension of the previous result with a quadratic form
\( \| \AA \, \UU(\ups) \|^{2} \) to be bounded under
the conditions \eqref{gUUgem} and \eqref{gUUem} on \( \UU(\ups) \) for \( \ups \in \Ups \subset \R^{\dimp} \).
Here \( \UU(\cdot) \) is a vector process with values in \( \R^{\dimq} \) and
\( A \) is  a 
\( \dimq \times \dimq \) matrix with \( \| \AA^{\T} \AA \|_{\oper} \le 1 \).
The idea is to use the representation \eqref{UU2ups} in which we replace
\( \uv \) with \( \AA \uv \).
The bound \eqref{Upsdboundno} implies for any \( \rr \)
\begin{EQA}
	\P\Bigl\{
        \sup_{\ups \in \Upss(\rr), \,\, \| \AA \uv \| \leq \rr} \,\,
        \uv^{\T} \AA \UU(\ups) > \sqrt{8} \, \nunu \, \rr \, \zzQ(\xx)
    \Bigr\}
    & \le &
    \ex^{-\xx} ,
\label{bouuvupsdxA}
\end{EQA}
where \( \zzQ(\xx) \) corresponds to 
\( \entrlq = \sqrt{\entrlg} = \sqrt{ 6 \dimp + \entrlg(\Upss)} \).

Now we discuss how this bound can be refined if \( \AA \) is a smoothing operator.
For simplicity assume that \( \AA \) fulfills the condition of Theorem~\ref{HGLsumfF}.
One can expect that the dimension \( \dimq \) can be replaced by the effective dimension
\( \dimA_{\AA} \).
The arguments similar to the above yield
\begin{EQA}[c]
    \| \AA \, \UU(\ups) \|
    =
    \sup_{\uv \in \R^{\dimq} \colon \| \uv \| \leq \rr}
        \frac{1}{\rr} \uv^{\T} \AA \, \UU(\ups)  ,
\label{UU2upsA}
\end{EQA}
and we again consider a bivariate process \( \uv^{\T} \AA \, \UU(\ups) \) of
\( \uv \in \R^{\dimq} \) and \( \ups \in \Ups \subset \R^{\dimp} \).
The conditions \eqref{gUUgem} and \eqref{gUUem} imply for any two unit vectors
\( \gammav_{1} \in \R^{\dimq} \) 
and \( \gammav_{2} \in \R^{\dimp} \) 
and any points \( \uv \in \R^{\dimq} \) with
\( \| \AA \uv \| \leq \rr \) and \( \ups \in \Upss(\rr) \), it holds 
\begin{EQA}
	\log \E \exp\biggl\{ 
		\frac{\lambda}{\rr} \, \nabla_{\ups} \bigl[ \uv^{\T} \AA \, \UU(\ups) \bigr] \gammav_{2}
	\biggr\}
	& = &
	\log \E \exp\biggl\{ 
		\frac{\lambda}{\rr} \, \uv^{\T} \AA \, \nabla \UU(\ups) \gammav_{2}
	\biggr\}
	\leq 
	\frac{\nunu^{2} \lambda^{2}}{2} \, ,
\label{logEexlrnuuTUAA}
\end{EQA}
and by \eqref{gUUem} with \( \VVc_{1}^{2} = \AA^{\T} \AA \)
\begin{EQA}
	\log \E \exp\biggl\{ 
		\frac{\lambda} {\| \VVc_{1} \gammav_{1} \|} \gammav_{1}^{\T} \nabla_{\uv} 
		\bigl[ \uv^{\T} \AA \, \UU(\ups) \bigr]
	\biggr\}
	& \leq &
	\log \E \exp\biggl\{ 
		\frac{\lambda}{\| \VVc_{1} \gammav_{1} \|} (\AA \, \gammav_{1})^{\T} \UU(\ups)
	\biggr\}
	\leq 
	\frac{\nunu^{2} \lambda^{2}}{2} \, .
\label{lEelrnvUgAA}
\end{EQA}
Therefore, \nameref{CSD12ref} is fulfilled for \( \uv^{\T} \AA \, \UU(\ups) \).
Now we apply the bound from Theorem~\ref{TUPprod} and the entropy bound for the elliptic set 
\( \| \AA \uv \| \leq \rr \) from Theorem~\ref{HGLsumfF}.

\begin{theorem}
\label{TexproA}
Let a random vector process \( \UU(\ups) \in \R^{\dimq} \) for
\( \ups \in \Ups \subseteq \R^{\dimp} \)
fulfill \( \UU(\upss) = 0 \), \( \E \UU(\ups) \equiv 0 \),
and the condition \eqref{gUUgem} be satisfied.
Let \( \AA \) fulfill \( 1/2 \le \| \AA \AA^{\T} \|_{\oper} \le 1 \).
Then for each \( \rr \), it holds
\begin{EQA}
	\P\biggl\{
		\sup_{\ups \in \Upss(\rr)} \,\,
		\|\AA \, \UU(\ups)\| > \sqrt{8} \, \nunu \, \rr \, \zzQ(\xx)			
	\biggr\}
	& \leq &
	\ex^{-\xx} .
\label{bouuvA}
\end{EQA}
where \( \zzQ(\xx) \) is given by \eqref{zzxxgfin} with
\( \entrlg = \CONST \, \dimA_{\AA} + \entrlg(\Upss(\rr)) \) and
\( \entrlq = \CONST \sqrt{\dimA_{\AA}} + \entrlq(\Upss(\rr)) \).
\end{theorem}

\bibliography{exp_ts,listpubm-with-url}
\end{document}